\date{7.7.20}
\newtheorem{theorem}{Theorem}
\newtheorem{corollary}{Corollary}
\newtheorem{lemma}{Lemma}
\newtheorem{proposition}{Proposition}
\def\rd{\mathrm{d}}
\def\ri{\mathrm{i}}
\def\cz{\mathbb{C}} % komplexe Zahlen
\def\nz{\mathbb{N}}
\def\gz{\mathbb{Z}}
\def\zp{\dot{\gz}}
\def\rz{\mathbb{R}}
\def\bbS{\mathbb{S}}
\def\cD{\mathcal{D}} % kaligraphische Buchstaben
\def\cE{\mathcal{E}}
\def\cK{\mathcal{K}}
\def\cS{\mathcal{S}}
\def\gH{\mathfrak{H}} % deutsche Buchstaben
\def\gh{\mathfrak{h}}
\def\ghpk{{\gh^+_\kappa}}
\def\ppp{\Pi^+_\kappa}
\def\gQ{\mathfrak{Q}}
\def\gS{\mathfrak{S}}
\def\dr{\mathrm{d}r}
\def\dx{\mathrm{d}x}
\def\dy{\mathrm{d}y}
\def\domega{\rd\omega}
\newcommand{\one}{\mathbf{1}}
\def\const{\mathrm{const}\,}
\def\balpha{\bm{\alpha}}
\def\bsigma{\bm{\sigma}}
\def\bp{\bm{p}}
\DeclareMathOperator{\sgn}{sgn}
\DeclareMathOperator{\tr}{tr}
\begin{document}
\title[Strong Scott Conjecture]{Proof of the Strong Scott Conjecture
  for Heavy Atoms: the Furry Picture}

\author[K. Merz]{Konstantin Merz} \address{Institut f\"ur Analysis und
  Algebra\\ Carolo-Wilhelmina\\ Universit\"atsplatz 2\\ 38106
  Braunschweig\\ Germany} \email{k.merz@tu-bs.de}

\author[H. Siedentop]{Heinz Siedentop}
\address{Mathematisches Institut\\
  Ludwig-Maximilians-Universit\"at M\"unchen\\ Theresienstra\ss e 39\\
  80333 M\"unchen\\Germany} \email{h.s@lmu.de}

\begin{abstract}
  We prove the convergence of the density on the scale $Z^{-1}$ to the
  density of the Bohr atom (with infinitely many electrons) (strong
  Scott conjecture) for a model that is known to describe heavy atoms
  accurately.
\end{abstract}

\maketitle
\section{Introduction}
The quest for ground state properties of Coulomb systems like atoms,
molecules, and solids is one of the central topics in physics and
chemistry. However, it became clear right after the discovery of
quantum mechanics (Heisenberg \cite{Heisenberg1925}) that -- not much
different from classical mechanics -- one-particle problems like the
hydrogen atom can be solved analytically (Pauli \cite{Pauli1926}) but
problems with several electrons need suitable approximations. Within
two years after the advent of quantum mechanics Thomas
\cite{Thomas1927} and Fermi \cite{Fermi1927,Fermi1928}) developed an
approximation, now called Thomas-Fermi theory -- for predicting the
ground state energies and densities of large atoms. About fifty years
later Lieb and Simon \cite{LiebSimon1977} showed in their seminal work
that, indeed, the asymptotic behavior of atomic energies for large
atomic numbers $Z$ is given by the Thomas-Fermi energy, namely
$e_\mathrm{TF}Z^{7/3}$ and that the suitably renormalized ground state
density of large atoms on the scale $Z^{-1/3}$ converges to the
hydrogenic Thomas-Fermi density.

The Thomas-Fermi theory is the simplest example of what is called
density functional theory. However, already the next order correction
is not easily connected with the first correction of Thomas-Fermi
theory, the Thomas-Fermi-Weizs\"acker theory. It requires a
renormalization of the constant in front of the inhomogeneity
correction (Yonei and Tomishima \cite{YoneiTomishima1965}, see also
Lieb \cite{Lieb1982A,LiebLiberman1982}). In fact the next order energy
correction was predicted by Scott \cite{Scott1952} as stemming
entirely from the electrons on the scale $Z^{-1}$ where the
interaction between the electrons is completely dominated by the
electron-nucleus interaction. He suggested that the correction is the
same as for non-interacting electrons, namely $Z^2/2$. This became one
of the long standing open questions of mathematical physics (see,
e.g., Lieb \cite{Lieb1980} and Simon \cite[Problem 10B]{Simon1984} and
was eventually proven by Siedentop and Weikard
\cite{SiedentopWeikard1986,SiedentopWeikard1987U,SiedentopWeikard1987O,SiedentopWeikard1988,SiedentopWeikard1989}
(upper and lower bound) and Hughes \cite{Hughes1986,Hughes1990} (lower
bound) and later extended in various ways.

In Scott's spirit Lieb \cite{Lieb1981} conjectured that also
the density on the scale $Z^{-1}$ is given by the density of the Bohr
atom. This and refinements thereof were proven by Iantchenko et al
\cite{Iantchenkoetal1996,Iantchenko1997,IantchenkoSiedentop2001}. Recently
Ivrii \cite{Ivrii2019} outlined an extension.

All these results, although mathematically correct, suffer from a
serious defect viewed from a physical perspective: in the limit of
large atomic numbers $Z$ the innermost electrons are attracted more
and more to the nucleus. The ground state energy of such an electron
is even in non-relativistic quantum mechanics already $-Z^2/2$. By the
virial theorem the kinetic energy of the electron is $Z^2/2$. This
means that the corresponding classical velocity is $Z$ in atomic
units. This compares to the velocity of light $c$ which is $137$, a
dimensionless constant. Thus, say for uranium, $Z=92$, the velocity of
the innermost electrons is a substantial fraction of the velocity of
light. In other words, the limit of large $Z$ renders a
non-relativistic treatment questionable. A relativistic treatment is
required. Comparing the energies of those electrons substantiates this
view as well: the binding energy of the innermost electron of uranium
is $-4232\ Ha$ nonrelativistically compared with $-8074\ Ha$ for the
Dirac equation, i.e., almost a doubling. Schwinger \cite{Schwinger1980}
made this intuition quantitative and predicted a lowering of the
non-relativistic Scott correction.

Analogously to the non-relativistic strong Scott conjecture by Lieb,
one might predict, that the density close to the nucleus, i.e., on the
scale $Z^{-1}$, behaves in a relativistic model -- after suitable
renormalization -- like the sum of the absolute square of the
relativistic hydrogen orbitals.

To prove such statements on the ground state energy and density
starting from a microscopic model faces, however, a fundamental
problem. The physically recognized starting point should be quantum
electrodynamics. However even the most basic mathematical objects like
the underlying Hilbert space and its Hamiltonian are unknown.

But also the straightforward generalization to a multiparticle Dirac
operator -- replacing the Laplacian acting on the $n$-th particle by a
free Dirac operator -- leads to unphysical predictions. Even if the
Hamiltonian might be extended to a self-adjoint operator as recently
shown by Oelker \cite{Oelker2019} for two electrons, it leads to a
spectrum which is the whole real line and dissolution of bound states,
a fact that Brown and Ravenhall \cite{BrownRavenhall1951} observed and
is known as Brown-Ravenhall disease or continuum dissolution (Sucher
\cite{Sucher1980}). (See also Pilkuhn \cite[Section 3.7]{Pilkuhn2005}
for a review.)

Faced with this difficulty, various models were developed ranging from
straightforward quantization of the classical relativistic Hamilton
function -- which can be traced back to Chandrasekhar
\cite{Chandrasekhar1931} -- to Hamiltonians derived by physical
arguments from quantum electrodynamics like the so-called no-pair
Hamiltonians. All of those models have a critical coupling
$\gamma=\alpha Z$ at which the energy changes from being bounded to
unbounded from below (with $\alpha:=1/c$, the Sommerfeld fine
structure constant). For subcritical coupling constant the Friedrichs
extension yields a natural self-adjoint realization of the
operator. All of them show also the above mentioned lowering of the
energy.

The simplest of those models, the Chandrasekhar operator, is relatively
well studied mathematically. In fact a formula for the lowering of the
Scott term was proven by Solovej et al \cite{Solovejetal2008} and
Frank et al \cite{Franketal2008}. Moreover, recently the strong Scott
conjecture for the Chandrasekhar operator was proven as well (Frank et
al \cite{Franketal2019P}).  However, it is known that the
Chandrasekhar operator yields energies that are much too low. In fact
the really heavy elements like uranium cannot by treated at the
physical value of the fine structure constant, since $\alpha Z$
exceeds already $2/\pi$, the critical Chandrasekhar coupling constant.

The situation is improved for no-pair operators. (See Sucher
\cite{Sucher1980,Sucher1984,Sucher1987}; for a textbook discussion see
Pilkuhn \cite{Pilkuhn2005}.) Already the simplest, the Brown-Ravenhall
operator, also called no-pair operator in the free picture, raises the
energy and the critical coupling constant $2/(2/\pi+\pi/2)$ covers all
known elements at the physical value of the fine structure constant. A
corresponding formula for the Scott correction was obtained in
\cite{Franketal2009}. Nevertheless, its energies are still too low. --
A convergence result for the density on the scale $Z^{-1}$ is not known.

Chemical accuracy is obtained when the external field is included in
the definition of the state space. The corresponding operator is
called the no-pair operator in the Furry picture. A formula for the
Scott correction was proven by Handrek and Siedentop
\cite{HandrekSiedentop2015}. (The same formula should be also true
when the mean field in the sense of Mittleman \cite{Mittleman1981} is
taken into account. This, however, is so far only know in Hartree-Fock
approximation when the involved projection is given by the Dirac-Fock
operator (Fournais et al \cite{Fournaisetal2019}).) A formula for the
ground state density, however, is still missing. It is the purpose of
this paper to close this gap and to prove the strong Scott conjecture
for the no-pair operator in the Furry picture.

\section{Definitions \& main results \label{s:defresults}}
We begin with some preparatory notations which will allow to define
the no-pair Hamiltonian in the Furry picture of atoms with nuclear
charge $Z$ and $N$ electrons. We will use atomic units throughout,
i.e., the rationalized Planck constant, the elementary charge, and the
mass of the electron are all one. The energy will depend, though,
besides $Z$ and $N$, also on the velocity of light $c$. For our
purposes it is also convenient to introduce $\gamma:=\alpha Z=Z/c$.

We write $\bp:=(1/\ri)\nabla$ for the momentum operator and
\begin{align*}
  \balpha=\left(\begin{array}{cc}
            0 & \bsigma\\
            \bsigma & 0
          \end{array}\right),\hspace{1em}
  \beta=\left(\begin{array}{cc}
            \one_{\cz^2} & 0\\
            0 & -\one_{\cz^2}
          \end{array}\right),
\end{align*}
with $\bsigma=(\sigma_1,\sigma_2,\sigma_3)$ the three Pauli matrices in
standard representation for the four Dirac matrices.

We write
\begin{equation}
  \label{eq:defcoulombdirac}
 D_{c,Z}:= c\balpha\cdot \bp+c^2\beta-\frac{Z}{|x|} \quad \text{in}\ L^2(\rz^3:\cz^4)
\end{equation}
for the one-electron Dirac operator defined in the sense of Nenciu
\cite{Nenciu1976} (see also
\cite{Schmincke1972,Wust1975,KlausWust1978}), i.e., with form domain
$H^{1/2}(\rz^3:\cz^4)$ assuming $\gamma=Z/c\in(-1,1)$.

Note 
\begin{equation}
  \label{eq:cdrescaled}
  D_{c,Z}\cong c^2D_{1,Z/c}
\end{equation}
under the scaling $x\to x/c$. For the latter we drop the first index
and introduce the abbreviation
\begin{equation}
  \label{eq:dgamma}
  D_\gamma:=D_{1,\gamma}.
\end{equation}
For more general electric potentials $\varphi$ allowing also for
Nenciu's method to define the Dirac operator, we write
\begin{equation}
  \label{eq:dphi}
  D_\gamma(\varphi):=D_\gamma-\varphi.
\end{equation}

Brown and Ravenhall's basic suggestion \cite{BrownRavenhall1951} was
to implement Dirac's idea \cite{Dirac1930A} of a filled Dirac sea
which is inaccessible to physical electrons by requiring that the
state space of an electron is the positive spectral subspace of a
suitably chosen Dirac operator; in fact they suggest the positive
spectral subspace of the free Dirac operator $D_{c,0}$. Later other
choices were suggested (see Sucher \cite{Sucher1980} for more
details). A particular interesting choice is the so called Furry
picture, where the Dirac operator defining the state space is
$D_{c,Z}$ in the atomic case. It is known that the Furry picture
produces numerical values of chemical accuracy. (This choice is named
after Furry, who with Oppenheimer \cite{FurryOppenheimer1934} already
introduced the corresponding splitting of the electron and positron
space in quantum electrodynamics.)

In this paper we will focus on the Furry picture.  To be explicit, the
underlying Hilbert space is
$$
\gH_{c,Z}:=\underbrace{\left[\chi_{(0,\infty)}(D_{c,Z})\right]}_{=:\Lambda_{c,Z}}(L^2(\rz^3:\cz^4)).
$$
By Nenciu's above result,
$$
\Lambda_{c,Z}(\cS(\rz^3:\cz^4))\subseteq H^{1/2}(\rz^3:\cz^4)
$$
and dense in $\gH_{c,Z}$.

The energy $\cE_{c,Z,N}$ of an atom with nuclear charge $Z$ and $N$
electrons in the state
$\psi\in \bigwedge_{\nu=1}^N \Lambda_{c,Z}\cS(\rz^3:\cz^4)$ is
\begin{equation}
  \label{eq:manyfurry}
  \cE_{c,Z,N}[\psi]:= (\psi,\left(\sum_{\nu=1}^N((D_{c,Z}-c^2)_\nu+\sum_{1\leq \nu<\mu\leq N}\frac{1}{|x_\nu-x_\mu|}\right)\psi)
\end{equation}
The quadratic form $\cE_{c,Z,N}$ is defined as long as $D_{c,Z}$ is
defined. This is certainly true, although not necessary, if
$\gamma=Z/c\in(0,1)$, an assumption which we will make throughout the
rest of the paper.  By construction, it is bounded from below and
therefore extends according to Friedrichs to a closed quadratic form
in the Hilbert space $\bigwedge_{n=1}^N \gH_{c,Z}$ with form domain
$\gQ_{c,Z,N}:=\bigwedge_{n=1}^N(\Lambda_{c,Z}(H^{\frac12}(\rz^3:\cz^4)))$. The
resulting self-adjoint operator constructed according to Friedrichs is
the Furry operator of the -- possibly ionized -- atom of atomic number
$Z$ with $N$ electrons.

We write $F_{c,Z}$ for the operator when $N=1$ and we abbreviate
$\cE_{c,Z}:=\cE_{c,Z,Z}$ and $\gQ_{c,Z}:= \gQ_{c,Z,Z}$, i.e., we drop
the third index of the functional and its domain, when $N=Z$.

In the one-particle case, it is also here sometimes convenient to
scale out the velocity of light $c$ like for the Dirac operator and
drop a factor $c^2$ with the energies. The resulting operator depends
-- like in the case of the Coulomb-Dirac operator -- only on the
quotient $\gamma=Z/c$ and is $F_{1,\gamma}$, i.e., the index pair
$c,Z$ is replaced by $1,\gamma$. In this case, we simply drop the
index $1$ and write $F_\gamma$ in analogy to $D_\gamma$. Similarly to
\eqref{eq:dphi} we introduce
\begin{equation}
  \label{eq:fphi}
  F_\gamma(\varphi)\ \text{and}\ F_{c,Z}(\varphi)
\end{equation}
as the self-adjoint operators associated with
$(f,(D_\gamma-\varphi-1)f)$ on $\Lambda_{\gamma}\cS(\rz^3:\cz^4)$ and
$(f,(D_{c,Z}-\varphi-1)f)$ on $\Lambda_{c,Z}\cS(\rz^3:\cz^4)$
whenever closable and bounded from below.

Matte and Stockmeyer \cite[Theorem 2.2]{MatteStockmeyer2010} showed that
\begin{equation}
  \label{gs}
  E_{c,Z}:=\inf\{\cE_{c,Z}[\psi]| \psi\in\gQ_{c,Z},\ \|\psi\|=1\}
\end{equation}
is assumed, i.e., a ground state -- not necessarily uniquely
determined -- exists.  Although, we neither need the state to be pure
nor exactly a minimizer, we will refrain from such generalizations,
and pick the state which occurs according to L\"uders
\cite{Luders1951} when measuring the ground state energy, namely
$$\Lambda:=(|\psi_1\rangle\langle\psi_1|+...+|\psi_D\rangle\langle\psi_D|)/D$$
where $\psi_1,...,\psi_D$ is an orthonormal basis of the ground state
space of $\cE_{c,Z}$ (with $N=Z$). We will denote the corresponding
spin-summed one-particle density by
\begin{equation}
  \label{gsd}
  \rho(x) := \frac ND \sum_{d=1}^D\sum_{\sigma=1}^4\int_{\Gamma^{(N-1)}} |\psi_d(x,\sigma;y)|^2\,\dy
\end{equation}
where $y\in\Gamma^{N-1}$ with $\Gamma:=\rz^3\times\{1,2,3,4\}$ are
space-spin variables. Moreover, $\rd y$ is the corresponding measure,
i.e., integration in the space variable and summation in the spin
variable.

A refinement is to consider the density in angular momentum channels,
more accurately in channels of spin-orbit coupling, labeled by
\begin{equation}
  \label{zp}\kappa\in\zp:=\gz\setminus\{0\}
\end{equation}
(see Appendix \ref{Anhang1} for more details). It is also convenient
to introduce
\begin{equation}
  \label{l}
  j_\kappa:=|\kappa|-\tfrac12\ \text{and}\
  \ell_\kappa:= j_\kappa-\tfrac12\sgn(\kappa)= |\kappa|-\theta(\kappa),
\end{equation}
the quantum numbers of total and orbital angular momentum all
determined by $\kappa$.

The density at a point $x\in\rz^3$ in channel
$\kappa\in\zp$ of the ground state $\Lambda$ of $\cE_{c,Z}$ is
\begin{equation}
  \label{eq:furryangdensity}
  \rho_{\kappa}(x):= {N\over4\pi D}\sum_{d=1}^D\sum_{\sigma\in\{+,-\}}\sum_{m=-j_\kappa}^{j_\kappa}\int_{\Gamma^{N-1}}\rd y
  \left|\sum_{\tau=1}^4\int_{\mathbb{S}^2}\rd \omega\,
    \overline{\Phi_{\kappa,m}^\sigma(\omega,\tau)}
    \psi_d(|x|\omega,\tau,y)\right|^2
\end{equation}
where $\Phi^\sigma_{\kappa,m}$ are spherical Dirac spinors
\eqref{fi}. (See \eqref{a1} for the relation to $\Pi_\kappa$.)

Note that $\rho=\sum_{\kappa\in\zp}\rho_\kappa$ for the state $\Lambda$. For
general states, the left side needs an additional spherical
average. The functions $\rho_\kappa$ and $\rho$ are the objects of
interest of this work. When appropriately rescaled, we will study
their convergence as $Z\to\infty$ and $Z/c$ is fixed. The objects
which will turn out to be the limits are introduced now.

We write $\psi_{n,\kappa,m}$ for the orthonormal eigenfunctions of
$D_\gamma$, i.e., 
\begin{equation}
  \label{eigenfunktion}
  D_\gamma\psi_{n,\kappa,m}=\lambda_{n,\kappa}\psi_{n,\kappa,m}
\end{equation}
suppressing the dependence of $\gamma$ in both the eigenvalues and
eigenfunctions. The corresponding eigenvalue problem was solved by
Gordon \cite{Gordon1928} and Darwin \cite{Darwin1928}. (See also Bethe
\cite[Formula (9.29)]{Bethe1933} or \cite[Formula
(7.140)]{Thaller1992} for textbook treatments.) 

The density of a Bohr atom for a given $\gamma$ in channel $\kappa$ is then
defined by
\begin{equation}
  \label{eq:2.12}
  \rho_{\kappa}^H(x) := \sum_{n=\theta(-\kappa)}^\infty\sum_{m=-j_\kappa}^{j_\kappa}
  \sum_{\sigma=1}^4|\psi_{n,\kappa,m}(x,\sigma)|^2;
\end{equation}
the total hydrogenic density is 
\begin{equation}
  \label{rhoh}
  \rho^H(x) := \sum_{\kappa\in\zp}\rho_\kappa^H(x).
 \end{equation}
 Of course, this is only well defined, if the right sides of
 \eqref{eq:2.12} and \eqref{rhoh} converge which we will show outside
 the origin in Theorem \ref{existencerhoh}. Moreover, we will study
 its behavior as $x\to0$ and $x\to\infty$.

 Finally, since we show convergence in a weak -- although in fact in
 the radial variable rather strong -- sense, we need to specify the
 test functions. The test functions can be written as $U=U_1+U_2$
 where $U_1$ may have a Coulomb singularity at $r=0$ and $U_2$ decays
 sufficiently fast as $r\to\infty$. More precisely, $U_2$ is going to
 belong to the test function spaces $\cD_\gamma^{(0)}$ and $\cD$ used
 by Frank et al \cite{Franketal2019P}. For the convenience of the
 reader we give their definition also in Appendix
 \ref{s:testfunctions}, in particular \eqref{eq:deftestnod0} and
 \eqref{eq:deftest0}.  As an example, we mention that if the test
 function obeys
\begin{equation}
  \label{eq:example}
  |U(r)| \leq \const \left(r^{-1}\one_{\{r\leq1\}} + r^{-\alpha}\one_{\{r>1\}}\right),
\end{equation}
then $U=U_1+U_2$ with $U_1\in r^{-1}L^\infty_c ([0,\infty))$
bounded and compactly supported and $U_2$ belongs to
$\cD_\gamma^{(0)}$, if $\alpha>1$. It is in
$\cD_\gamma^{(0)}\cap\cD_\gamma$, if $\alpha>3/2$. (The index $c$
denotes, as usual, compact support.)

\begin{theorem}[Convergence of the angular momentum decomposed density]
  \label{convfixedl}
  Fix $\kappa\in\zp$, and $U=U_1+U_2$ with
  $U_1\in r^{-1}L^\infty_c ([0,\infty))$ and
  $U_2\in\cD_\gamma^{(0)}$. Then, with $\gamma=Z/c\in(0,1)$ fixed,
  $$
  \lim_{Z\to\infty}\int_{\rz^3} c^{-3}\rho_{\kappa}(c^{-1}x)U(|x|)\,\rd x
  = \int_{\rz^3}\rho_{\kappa}^H(x)U(|x|)\,\rd x.
  $$
\end{theorem}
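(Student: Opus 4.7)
The plan is to realize
$\int_{\rz^3} c^{-3}\rho_\kappa(c^{-1}x)U(|x|)\,\rd x = \int_{\rz^3}\rho_\kappa(y)U(c|y|)\,\rd y$
as a Hellmann--Feynman derivative of a Furry ground-state energy that is perturbed by a one-body potential restricted to the $\kappa$-channel, then match this derivative to the corresponding derivative of a perturbed Scott correction via a $\lambda$-uniform asymptotic expansion and Griffiths' lemma. This is the pattern of Frank et al.\ \cite{Franketal2019P} for the Chandrasekhar operator, adapted here to the Furry picture whose unperturbed Scott correction is the theorem of Handrek--Siedentop \cite{HandrekSiedentop2015}.

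\textbf{Perturbation and Hellmann--Feynman.} Let $\Pi_\kappa$ denote the angular--spin projector onto the channel $\kappa$ built from the spinors $\Phi^\sigma_{\kappa,m}$; by spherical symmetry it commutes with $\Lambda_{c,Z}$. For small $\lambda\in\rz$ and $U_c(y):=U(c|y|)$, define
\[
\cE^\lambda_{c,Z}[\psi] := \cE_{c,Z}[\psi] - \lambda\sum_{\nu=1}^{N}\bigl\langle\psi,(\Pi_\kappa U_c\Pi_\kappa)_\nu\psi\bigr\rangle,\qquad E^\lambda_{c,Z}:=\inf\{\cE^\lambda_{c,Z}[\psi]:\psi\in\gQ_{c,Z},\,\|\psi\|=1\}.
\]
As an infimum of affine functionals, $\lambda\mapsto E^\lambda_{c,Z}$ is concave, and testing against the $\lambda=0$ ground state $\Lambda$ (respectively against a minimizer at $\lambda\neq 0$) yields the Griffiths squeeze
\[
\partial_\lambda^- E^\lambda_{c,Z}\big|_{\lambda=0}\geq -\int_{\rz^3}\rho_\kappa(y)\,U(c|y|)\,\rd y \geq \partial_\lambda^+ E^\lambda_{c,Z}\big|_{\lambda=0}.
\]

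\textbf{Uniform Scott expansion and identification of the derivative.} The key analytic claim is that, with $\gamma=Z/c$ fixed and after rescaling $x\mapsto x/c$,
\[
E^\lambda_{c,Z} = E^0_{c,Z} + c^2\bigl[\sigma_\kappa(\lambda)-\sigma_\kappa(0)\bigr] + o(c^2),\qquad \sigma_\kappa(\lambda):= \tr\bigl[(F_\gamma-\lambda\Pi_\kappa U\Pi_\kappa)_-\bigr],
\]
with the error $o(c^2)$ uniform in $\lambda$ on a fixed neighborhood of $0$. The upper bound uses the Handrek--Siedentop trial state with its $\kappa$-channel Slater determinant replaced by one built from the negative-energy eigenfunctions of $F_\gamma-\lambda\Pi_\kappa U\Pi_\kappa$; the lower bound reuses their Lieb--Thirring-style splitting of $\cE_{c,Z}$ into a Thomas--Fermi bulk and a Bohr-scale inner region, with the perturbation confined to the inner region where, after rescaling, it contributes exactly $c^2\sigma_\kappa(\lambda)$. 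The decomposition $U=U_1+U_2$ is used precisely here: $U_1\in r^{-1}L^\infty_c$ is absorbed via Hardy's inequality in the $\kappa$-channel (the constant being finite because $\kappa\neq 0$) together with $\gamma<1$, while $U_2\in\cD_\gamma^{(0)}$ is controlled by the estimates of \cite{Franketal2019P} recorded in the appendix. From \eqref{eigenfunktion} and \eqref{eq:2.12} one computes
\[
\sigma_\kappa'(0) = -\sum_{n\geq\theta(-\kappa)}\sum_{m=-j_\kappa}^{j_\kappa}\sum_{\sigma=1}^{4}\int_{\rz^3}|\psi_{n,\kappa,m}(x,\sigma)|^2 U(|x|)\,\rd x = -\int_{\rz^3}\rho_\kappa^H(x)\,U(|x|)\,\rd x,
\]
the convergence of the sum being Theorem~\ref{existencerhoh}. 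Since $\sigma_\kappa$ is concave and differentiable at $0$, Griffiths' lemma applied to the squeeze above closes the argument at the common value $\int\rho_\kappa^H U\,\rd x$.

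\textbf{Main obstacle.} The bulk of the work is the uniform-in-$\lambda$ Scott expansion. Three points require real effort: (i) the inverse-distance singularity of $U_1$ must be controlled uniformly in $\lambda$, so that $F_\gamma-\lambda\Pi_\kappa U\Pi_\kappa$ is self-adjoint and lower semibounded with a uniformly bounded sum of negative eigenvalues; (ii) the localization separating the Bohr-scale inner region from the Thomas--Fermi bulk must be chosen independently of $\lambda$, so that the $Z^{7/3}$ Thomas--Fermi leading order and its subleading corrections remain unperturbed; (iii) the composition of the many-body Furry projection $\Lambda_{c,Z}$ with the channel projector $\Pi_\kappa$ on the Bohr scale must converge, as $c\to\infty$, to the pure projection onto $F_\gamma$'s positive $\kappa$-channel, with a remainder $o(1)$ uniform in small $\lambda$. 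Once these technical estimates are in place, the structural Hellmann--Feynman/Griffiths argument described above closes the theorem without further input.
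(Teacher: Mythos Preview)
Your overall linear--response/Hellmann--Feynman strategy is the right one and matches the paper in spirit, but there are two substantive gaps and one structural difference worth flagging.

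\textbf{The paper does not prove a $\lambda$-uniform Scott expansion of $E^\lambda_{c,Z}$.} Instead it exploits the exact identity
\[
\int_{\rz^3}\rho_\kappa(x)U_c(x)\,\rd x=\frac{1}{D\lambda}\sum_{d=1}^{D}\bigl(\cE_{c,Z}[\psi_d]-\cE_{c,Z,\lambda,\kappa}[\psi_d]\bigr),
\]
where the $\psi_d$ are the \emph{unperturbed} ground states. The unperturbed term is handled by the known Scott formula \cite{HandrekSiedentop2015}; the perturbed term is bounded below by the Mancas--M\"uller--Siedentop correlation inequality \cite{Mancasetal2004}, which reduces the many--body problem to a one--particle one (Lemma~\ref{correlation}) without any localization step or modified trial state. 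Your proposed route through a full perturbed Scott expansion would work in principle but is considerably heavier; the MMS reduction is the shortcut you are missing.

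\textbf{The differentiability of $\sigma_\kappa$ at $0$ is the technical heart, not a computation.} You write ``one computes $\sigma_\kappa'(0)=-\int\rho_\kappa^H U$''. In fact this is Propositions~\ref{diffu2} and~\ref{diffu1}, whose proofs require verifying the hypotheses of the abstract Feynman--Hellmann theorems (Propositions~\ref{diffgen0} and~\ref{diffgen}): the relative trace--class condition $(f_{\gamma,\kappa}+1)^{-s}\Pi^+_\kappa U\Pi^+_\kappa(f_{\gamma,\kappa}+1)^{-s}\in\gS^1$ and, for the Coulombic piece $U_1$, the resolvent comparison $(F_\gamma+1)^{2s}\lesssim(F_\gamma(\lambda U)+1)^{2s}$ uniformly for small $|\lambda|$ (Lemma~\ref{boundpertcoulomb}). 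The estimates of \cite{Franketal2019P} are for the Chandrasekhar operator, not $F_\gamma$; the paper's key new input is Corollary~\ref{hardydomcor}, which compares $\Lambda_\gamma|p|^{2s}\Lambda_\gamma$ with $(F_\gamma+1)^{2s}$ and thereby transfers those estimates to the Furry picture. Your proposal does not mention this comparison, and without it the claimed control of $U_2\in\cD_\gamma^{(0)}$ does not follow.

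\textbf{Point (iii) is confused.} The projections $\Lambda_{c,Z}$ and $\Pi_\kappa$ commute exactly, and under the scaling $x\to x/c$ one has $D_{c,Z}\cong c^2 D_\gamma$, so $\Lambda_{c,Z}$ becomes $\Lambda_\gamma$ \emph{identically}; there is no convergence of projections to establish.
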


\begin{theorem}[Convergence of the density]
  \label{convalll}
  Let $U=U_1+U_2$ with
  $U_1\in r^{-1}L^\infty_c ([0,\infty))$,
  $U_2\in\cD\cap\cD_\gamma^{(0)}$, and $\gamma\in(0,1)$. Then
  $$
    \lim_{Z\to\infty}\int_{\rz^3} c^{-3}\rho(c^{-1}x)U(|x|)\,\dx = \int_{\rz^3} \rho^H(x)U(|x|)\,\dx.
  $$
\end{theorem}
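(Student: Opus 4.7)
The plan is to deduce Theorem \ref{convalll} from Theorem \ref{convfixedl} by summing over spin--angular channels $\kappa\in\zp$ and interchanging the limit $Z\to\infty$ with the channel sum via dominated convergence on $\zp$. Since $\rho=\sum_{\kappa\in\zp}\rho_\kappa$ pointwise and $\rho^H=\sum_{\kappa\in\zp}\rho_\kappa^H$ away from the origin, Theorem \ref{convfixedl} already furnishes channel-wise convergence for every fixed $\kappa$. The task therefore reduces to exhibiting a nonnegative $Z$-independent majorant $a_\kappa$ with
\begin{equation*}
\left|\int_{\rz^3}c^{-3}\rho_\kappa(c^{-1}x)U(|x|)\,\dx\right|\leq a_\kappa,\qquad \sum_{\kappa\in\zp}a_\kappa<\infty,
\end{equation*}
and to verifying absolute summability of $\sum_{\kappa\in\zp}\int_{\rz^3}\rho_\kappa^H(x)U(|x|)\,\dx$. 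The latter is a one-body statement that will follow from the near-origin and large-$|x|$ asymptotics of $\rho^H$ stated with Theorem \ref{existencerhoh}, tested against the compact support of $U_1$ and the decay of $U_2$ encoded in $\cD$.

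The majorant is built by splitting $U=U_1+U_2$ and handling the two parts separately. For $U_2$, the extra assumption $U_2\in\cD$, beyond the $\cD_\gamma^{(0)}$ required in Theorem \ref{convfixedl}, is precisely the ingredient that yields summability in $\kappa$: the space $\cD$ of Frank et al.\ \cite{Franketal2019P} is engineered so that the channel-wise bounds for $\int c^{-3}\rho_\kappa(c^{-1}x)U_2(|x|)\,\dx$ produced in the proof of Theorem \ref{convfixedl} aggregate to a finite sum in $\kappa$, reflecting the semiclassical suppression of high-$|\kappa|$ channels at the length scale $c^{-1}$. For the singular part $U_1\in r^{-1}L^\infty_c([0,\infty))$, the substitution $y=c^{-1}x$ yields
\begin{equation*}
\left|\int_{\rz^3}c^{-3}\rho_\kappa(c^{-1}x)U_1(|x|)\,\dx\right|\leq \|rU_1\|_\infty\,c^{-1}\int_{|y|\leq R/c}|y|^{-1}\rho_\kappa(y)\,\rd y
\end{equation*}
for some fixed $R$, and summing on $\kappa$ reproduces $\|rU_1\|_\infty\,c^{-1}\int|y|^{-1}\rho(y)\,\rd y$, which is uniformly $O(1)$ in $Z$ by the a priori Scott-type bound on the Furry energy of Handrek and Siedentop \cite{HandrekSiedentop2015}. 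The channel-wise partition of this global bound into $a_\kappa^{(1)}$ is supplied by the partial-wave Coulomb--Dirac Hardy inequality that is already an ingredient of Theorem \ref{convfixedl}.

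The principal obstacle is the uniform-in-$Z$, summable-in-$\kappa$ control of the $U_2$-piece. The $U_1$-piece reduces to a robust one-body estimate as above; the $U_2$-piece, by contrast, genuinely requires information on how the mass of the rescaled ground-state density is distributed among partial waves at the scale of $U_2$. The delicate regime is large $|\kappa|$: one must show, uniformly in $Z$, that $\int c^{-3}\rho_\kappa(c^{-1}x)U_2(|x|)\,\dx$ decays fast enough in $|\kappa|$ to be summable for every $U_2\in\cD$. This is the analogue of the semiclassical argument of \cite{Franketal2019P} for the Chandrasekhar model and is expected to transfer here because, away from the nucleus, the Furry projection $\Lambda_{c,Z}$ differs from the free Brown--Ravenhall projection only by terms controllable at the energy scale relevant to this work. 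Once this uniform tail estimate is in place, Theorem \ref{convalll} follows from Theorem \ref{convfixedl} by dominated convergence applied to $\sum_{\kappa\in\zp}\int c^{-3}\rho_\kappa(c^{-1}x)U(|x|)\,\dx$.
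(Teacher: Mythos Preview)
Your overall strategy---sum Theorem~\ref{convfixedl} over $\kappa$ and justify the interchange of $\sum_\kappa$ with the limit by a Weierstra{\ss}/dominated-convergence argument---is exactly the paper's. But there is a genuine gap in how you manufacture the majorant, and the treatment of $U_1$ in particular does not work.

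Your proposed $a_\kappa^{(1)}=\|rU_1\|_\infty\,c^{-1}\int_{|y|\leq R/c}|y|^{-1}\rho_\kappa(y)\,\rd y$ depends on $Z$ through both $c$ and $\rho_\kappa$, so it is not a dominating sequence. A channel Hardy inequality such as \eqref{eq:hardyang} only yields $\Pi_\kappa|x|^{-1}\Pi_\kappa\lesssim|\kappa|^{-1}\Pi_\kappa|p|\Pi_\kappa$, which turns $a_\kappa^{(1)}$ into $|\kappa|^{-1}$ times the channel-$\kappa$ kinetic energy of the ground state; there is no a priori $Z$-independent control on that quantity per channel, and $|\kappa|^{-1}$ alone is not summable. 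A global $O(1)$ bound on the rescaled Coulomb energy does not partition into a $Z$-uniform $\ell^1(\zp)$ sequence.

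The paper does not split $U_1$ from $U_2$ for the majorant. The dominating sequence comes from Proposition~\ref{dominantsigma}, which handles $U=U_1+U_2$ jointly and bounds the one-body \emph{spectral shift} $\tr_\kappa F_0(V+\lambda U)_- - \tr_\kappa F_0(V)_-\leq A_{\gamma,s}\lambda\|U\|_{\cK_{s,0}}|\kappa|^{-1-\varepsilon}$, uniformly in small $|\lambda|$ and in any screening potential $0\leq V\leq\gamma/r$ (hence uniformly in $Z$). This plugs into the sandwich \eqref{eq:reduced}: the density integral is not majorised directly, but its $\lambda$-dependent upper and lower envelopes are, so the majorant must be uniform in $\lambda$ as well---a point your proposal omits, since two interchanges are needed ($Z\to\infty$ and $\lambda\to0$). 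Your final paragraph correctly flags the large-$|\kappa|$ tail as the crux and calls it ``expected to transfer'' from \cite{Franketal2019P}; that transfer is precisely Proposition~\ref{dominantsigma} together with Lemmas~\ref{boundkineticenergy} and \ref{boundA}, and constitutes the real work of Section~\ref{s:alll}.
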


The next result ensures that the above convergence results are not
meaningless. More precisely, we will now show that the hydrogenic densities
are finite for all $r\in\rz_+$.
To this end define for $\gamma\in[0,1]$
\begin{align}
  \label{eq:defsigmagamma}
  \begin{split}
    \sigma_\gamma = 1-\sqrt{1-\gamma^2}\in[0,1].
  \end{split}
\end{align}
Note that $\sigma_0=0$, $\sigma_1=1$, $\sigma_{\sqrt3/2}=1/2$,
$\sigma_{\sqrt{15}/4}=3/4$, and $\sigma_\gamma$ is strictly monotone
increasing.  We will denote positive constants from now on by $A$ or
$a$.  Any dependence on some parameter is going to be denoted by a
corresponding subscript. Moreover, positive constants may vary from
line to line but are still going to be denoted by the same letter.

\begin{theorem}[Existence of $\rho_\kappa^H$ and $\rho^H$]
  \label{existencerhoh}
  Let $1/2<s\leq3/4$, if $\gamma\in(0,\sqrt{15}/4)$ and
  $1/2<s<3/2-\sigma_\gamma$, if $\gamma\in[\sqrt{15}/4,1)$.
  Then there is a constant $A_{s,\gamma}>0$ such that
  for all $\kappa\in\zp$ and $x\in\rz^3\setminus\{0\}$
  \begin{multline*}
    \rho_{\kappa}^H(x)
    \leq A_{s,\gamma} {|\kappa|^{1-4s}\over|x|^2}\\
    \times\left[\left(\frac{|x|}{|\kappa|}\right)^{2s-1}\one_{\{|x|\leq |\kappa\}}+\left(\frac{|x|}{|\kappa|}\right)^{4s-1}\one_{\{|\kappa|\leq |x|\leq |\kappa|^2\}}
      + |\kappa|^{4s-1}\one_{\{|x|\geq |\kappa|^2\}}\right].
  \end{multline*}
  Moreover, for any $\varepsilon>0$ and $x\in\rz^3\setminus\{0\}$,
  there are constants $A_{\gamma,\varepsilon},A_\gamma>0$ such that
  \begin{align}
    \label{eq:boundrhoh}
    \rho^H(x)\leq
    \begin{cases}
      A_\gamma |x|^{-3/2} & \text{if}\ \gamma\in(0,\sqrt{15}/4]\\
      A_{\gamma,\varepsilon}\left(|x|^{-2\sigma_\gamma-\varepsilon}\one_{\{|x|\leq1\}}
        + |x|^{-3/2}\one_{\{|x|>1\}}\right) & \text{if}\ \gamma\in(\sqrt{15}/4,1)
    \end{cases}.
  \end{align}
\end{theorem}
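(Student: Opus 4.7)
I reduce the claim for $\rho^{H}_{\kappa}$ to pointwise estimates on the radial Dirac eigenfunctions, assemble the per-channel bound from those estimates, and then sum over channels to obtain the bound for $\rho^{H}$.  Writing the Gordon-Darwin eigenfunctions as $\psi_{n,\kappa,m}(x)=r^{-1}\bigl(F_{n,\kappa}(r)\,\Phi^{+}_{\kappa,m}(\omega),\,iG_{n,\kappa}(r)\,\Phi^{-}_{\kappa,m}(\omega)\bigr)^{T}$ and using the rotation-invariant identity $\sum_{m,\tau}|\Phi^{\pm}_{\kappa,m}(\omega,\tau)|^{2}=|\kappa|/(2\pi)$ for the spherical Dirac spinors, one gets
\begin{equation*}
\rho^{H}_{\kappa}(x)=\frac{|\kappa|}{2\pi|x|^{2}}\,K_{\kappa}(|x|),\qquad K_{\kappa}(r):=\sum_{n\geq\theta(-\kappa)}\bigl(F_{n,\kappa}(r)^{2}+G_{n,\kappa}(r)^{2}\bigr),
\end{equation*}
so the first claim of the theorem is equivalent to a three-region upper bound on the ``diagonal'' $K_{\kappa}(r)$ of the positive radial projection.

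The heart of the proof is a uniform-in-$(n,\kappa)$ estimate on $F_{n,\kappa}$ and $G_{n,\kappa}$.  The Gordon-Darwin formulas express these radial functions as explicit linear combinations of ${}_{1}F_{1}$'s with Frobenius exponent $\sqrt{\kappa^{2}-\gamma^{2}}$, argument proportional to $\gamma r/n^{2}$, and weight $e^{-\gamma r/n^{2}}$.  A two-parameter asymptotic analysis of these confluent hypergeometric functions produces pointwise bounds in three regimes: Frobenius growth $F_{n,\kappa}^{2}+G_{n,\kappa}^{2}\leq A\,n^{-a}\,r^{2\sqrt{\kappa^{2}-\gamma^{2}}}$ below the inner turning point of state $n$; WKB-type amplitudes in the classically allowed band; and exponential decay beyond.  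I would package these into a single interpolation inequality with tunable $n$-decay $n^{-2s}$ (summable for $s>1/2$); the upper constraint $s\leq 3/4$ (resp.\ $s<3/2-\sigma_{\gamma}$) is forced because the Frobenius power $r^{2(1-\sigma_{\gamma})}$ at $|\kappa|=1$ must majorise $r^{2s-1}$ as $r\to 0$, and this closes precisely at $\gamma=\sqrt{15}/4$ where $\sigma_{\gamma}=3/4$.  Splitting $\sum_{n}$ according to whether $r$ is in the deep interior, the intermediate, or the classical region of the $n$-th state then produces the three cases of the per-channel bound: for $r\leq|\kappa|$ the lowest-lying state $n=|\kappa|$ dominates; for $|\kappa|\leq r\leq|\kappa|^{2}$ only a short range $n\sim\sqrt{r\gamma}$ contributes; for $r\geq|\kappa|^{2}$ only the tails of states with large $n$ remain and sum to a $\kappa$-independent $r^{-2}$.

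For the total density I split $\zp$ into $|\kappa|\geq|x|$, $|x|^{1/2}\leq|\kappa|\leq|x|$, and $|\kappa|\leq|x|^{1/2}$, each range matching one term of the per-channel bound.  For $\gamma\leq\sqrt{15}/4$ the admissible choice $s=3/4$ makes each $\kappa$-subsum of order $|x|^{-3/2}$: via $\sum_{|\kappa|\geq|x|}|\kappa|^{-5/2}\cdot|x|^{-3/2}$, via $\sum_{|x|^{1/2}\leq|\kappa|\leq|x|}|\kappa|^{-4}$ (localising at $|\kappa|=|x|^{1/2}$), and via $|x|^{-2}\cdot\#\{|\kappa|\leq|x|^{1/2}\}$, respectively.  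For $\gamma\in(\sqrt{15}/4,1)$ the choice $s=3/2-\sigma_{\gamma}-\varepsilon/2$ is needed; the first $\kappa$-subsum then dictates the $|x|^{-2\sigma_{\gamma}-\varepsilon}$ singularity near the origin, while the other two ranges still give $|x|^{-3/2}$ at infinity.  The main technical obstacle throughout is the uniform two-parameter interpolation bound on $F_{n,\kappa}^{2}+G_{n,\kappa}^{2}$ sketched above: producing a single estimate which is simultaneously sharp in the deep interior, in the oscillatory classical band and in the exponentially small exterior, and whose $n$-decay is tunable down to $n^{-1-\varepsilon}$, requires careful asymptotics of ${}_{1}F_{1}$ with correlated large indices.
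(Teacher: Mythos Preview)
Your route is genuinely different from the paper's. You propose a direct attack via the explicit Gordon--Darwin eigenfunctions: uniform two-parameter asymptotics of the confluent hypergeometric radial parts $F_{n,\kappa},G_{n,\kappa}$, summed over $n$ and then over $\kappa$. The paper instead never touches the eigenfunctions individually. It writes
\[
\rho^{H}_{\kappa}(x)=\tr\bigl(d_{\kappa}\,\delta^{(s)}_{|x|}\bigr)=\tr(ABCB^{*}A^{*}),
\]
with $A=d_{\kappa}(f_{\gamma,\kappa}+\tilde a_{\kappa})^{s}$, $B=(f_{\gamma,\kappa}+\tilde a_{\kappa})^{-s}\Lambda_{\gamma}(C_{\ell_{\kappa}}+\tilde a_{\kappa})^{s}$, and $C=(C_{\ell_{\kappa}}+\tilde a_{\kappa})^{-s}\delta^{(s)}_{|x|}(C_{\ell_{\kappa}}+\tilde a_{\kappa})^{-s}$, where $\tilde a_{\kappa}=a_{\gamma,C}\kappa^{-2}$. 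The factor $\|A\|^{2}\lesssim|\kappa|^{-4s}$ comes straight from the Sommerfeld eigenvalue formula; the three-region profile you are after is \emph{not} produced by eigenfunction asymptotics but by the diagonal of the scalar Chandrasekhar resolvent $(C^{(r)}_{\ell}+\tilde a_{\kappa})^{-2s}(r,r)$, quoted from \cite[Lemma~26]{Franketal2019P}; and the boundedness of $B$ (the passage from Furry to Chandrasekhar) is exactly where the constraint $s<3/2-\sigma_{\gamma}$ enters, via the Hardy-type comparison $|D^{0}_{\gamma}|^{2s}\gtrsim|p|^{2s}$ of Lemma~\ref{hardydom}/Corollary~\ref{hardydomcor}. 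Your identification of the threshold via the Frobenius exponent at $|\kappa|=1$ is morally the same constraint seen from the eigenfunction side. The final summation over $\kappa$ you sketch matches the paper's.

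What each approach buys: the paper's factorisation recycles heavy machinery already built (the Furry--Chandrasekhar comparison and the Chandrasekhar diagonal bound) and avoids any special-function analysis; it is also robust, since it would survive replacing the Coulomb potential by a less explicit one. Your approach would be self-contained and potentially sharper (e.g.\ it could in principle capture the correct small-$r$ exponent also for $\gamma<\sqrt{15}/4$, which the paper flags as non-optimal), but its core --- the uniform-in-$(n,\kappa)$ interpolation bound on $F_{n,\kappa}^{2}+G_{n,\kappa}^{2}$ with tunable $n^{-2s}$ decay --- is only asserted, not carried out. That lemma is the entire difficulty; the paper's remark after Theorem~\ref{existencerhoh} that the explicit summation ``analogously to Heilmann and Lieb \dots\ is an open question'' suggests this is not a routine exercise. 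If you pursue your route, that is the step requiring a full proof.
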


Some remarks on the above results are in order.

(1) The corresponding convergence results and pointwise bounds on the
hydrogenic densities were recently proven for Chandrasekhar atoms by
Frank et al \cite{Franketal2019P}.
The classes of admissible test functions are the same in both models,
i.e., the test functions may have Coulomb singularities at the origin,
but delta functions are not allowed, i.e., we were not able to
prove pointwise convergence of the densities.

For a comparison between the results of Iantchenko et al
\cite{Iantchenkoetal1996} in the non-relativistic case with those that
were obtained for the above two relativistic models, we refer to the
discussion after \cite[Theorem 2]{Franketal2019P}.

(2) As in \cite{Franketal2019P} we show that the hydrogenic density is
finite for all $x\in\rz^3$ and obtain a pointwise upper bound with a
similar asymptotic behavior for small and large distances to the
nucleus.  Although we are lacking a corresponding lower bound and the
constant appearing in Theorem \ref{existencerhoh} is implicit and
presumably far from sharp, we believe that the dependence on $r$ is
optimal: on the one hand, relativistic effects should play a minor
role for $r\gg1$ which is reflected in the $r^{-3/2}$-decay of
$\rho^H$. In fact, Heilmann and Lieb \cite{HeilmannLieb1995} proved in
the non-relativistic case that the density decays like
$(2^{3/2}/(3\pi^2))\gamma^{3/2}|x|^{-3/2}+o(|x|^{-3/2})$ as
$x\to\infty$.  Recalling that the Thomas-Fermi density satisfies
$\rho_Z^{\mathrm{TF}}(x)=Z^2\rho_1^{\mathrm{TF}}(Z^{1/3}x)\sim(Z/|x|)^{3/2}$
as $x\to0$, the bounds on $\rho^H(x)$ for large $|x|$ indicate that
there is a smooth transition between the quantum length scale $Z^{-1}$
and the Thomas-Fermi length scale $Z^{-1/3}$.  Note also that a lower
bound of the form $\rho^H(x)\geq A_\gamma |x|^{-3/2}\one_{\{|x|\geq1\}}$
would suggest that the function space $\cD_\gamma^{(0)}\cap\cD$ is
optimal in the sense that it covers functions that decay like
$|x|^{-3/2-\varepsilon}$, see \eqref{eq:example}.

On the other hand, the behavior for small $r$ seems best possible for
$\gamma\geq\sqrt{15}/4$, except for the lack of a corresponding lower bound
and the arbitrary small $\varepsilon$ appearing in \eqref{eq:boundrhoh}.
The main reason for this belief is the behavior of the radial part of the
hydrogenic ground state wave function at the origin
$$
|\psi_{n=0,\kappa=\pm1,m}(x)| \sim |x|^{\sqrt{1-\gamma^2}-1}
= |x|^{-\sigma_\gamma}, \  \ m=-j_\kappa,...,j_\kappa.
$$
The formula reveals in particular, that the singularity of the
hydrogenic density is only generated by the eigenfunctions with
$|\kappa|=1$, since $\psi_{0,\kappa,m}$ has no singularity at
the origin for any $|\kappa|\geq2$. This observation supports our
claim for the small $r$ behavior of $\rho^H$ for
$\gamma\geq\sqrt{15}/4$.  However, the formula also shows that our
bound for $\gamma<\sqrt{15}/4$ cannot be optimal, since it does not
depend on $\gamma$ at all.  As in the Chandrasekhar case, this
limitation is of technical nature and comes from the restriction
$\sigma_\gamma\leq3/4=\sigma_{\sqrt{15}/4}$.  Ultimately, the behavior
of the eigenfunctions with $|\kappa|=1$ and $\gamma=1$ suggests that
the admissible singularities of our test functions are optimal. This
is also expected in view of Kato's inequality since singularities
which are more severe than Coulomb cannot be controlled by kinetic
energy anymore.

Although the eigenfunctions $\psi_{n,\kappa,m}$ are explicitly known,
the explicit summation of their absolute squares analogously to Heilmann and
Lieb \cite{HeilmannLieb1995} in the non-relativistic case is an
open question. An answer would most likely allow for a more detailed
study of the properties of $\rho^H$.

(4) The basic idea behind the proof of the convergence result is a
linear response argument which was already used by Baumgartner
\cite{Baumgartner1976}, Lieb and Simon \cite{LiebSimon1977},
Iantchenko et al \cite{Iantchenkoetal1996} and Frank et al
\cite{Franketal2019P}.  We first estimate the difference of the
expectation values of the appropriately perturbed and unperturbed
many-body Hamiltonians in the unperturbed ground state by the spectral
shift between the correspondingly perturbed and unperturbed hydrogenic
one-particle operators.  Then, we use the generalized Feynman-Hellmann
theorems \cite[Theorem 13, Proposition 14]{Franketal2019P} to
differentiate the sum of the negative eigenvalues of the perturbed
hydrogenic operator.  The main difficulty consists in verifying the
assumptions of these theorems. In particular, we will show that the
test function $U$ satisfies a certain ``relative trace class
condition'' with respect to $F_\gamma$ in channel $\kappa$.  To be
definite introduce the notation
\begin{equation}
  \label{ps}
  \tr_\kappa A:= \tr(\Pi_\kappa A)
\end{equation}
when $\Pi_\kappa A$ is trace class.

For convenience we also introduce the abbreviation for the Furry
operators in angular momentum channel $\kappa$
\begin{equation}
  \label{radialeops} f_{\gamma,\kappa}:=F_\gamma|_\ghpk,\ f_{\gamma,\kappa}(\varphi):= F_\gamma(\varphi)|_\ghpk
\end{equation}
which we will freely use here and later.

Then our claim is that for $s>1/2$ and $\kappa\in\zp$
$$
\tr (f_{\gamma,\kappa}+1)^{-s}\Pi_\kappa^+U\Pi^+_\kappa(f_{\gamma,\kappa}+1)^{-s}<\infty.
$$
 As in the Chandrasekhar case, $s>1/2$ is crucial, since
$(1+k)^{-1}\notin L^1(\rz_+,\rd k)$.

The general strategy to prove the above and similar assertions, is to
roll them back to those involving Chandrasekhar operators where they
are known to hold \cite[Corollary 20]{Franketal2019P}.  A main new
technical contribution is to show that the Chandrasekhar and the Furry
operators are comparable: Corollary \ref{hardydomcor}
shows that one can compare $\Lambda_\gamma(|p|+1)^s\Lambda_\gamma$
with $(F_\gamma+1)^s$ which will be an important tool.

\section{Applying the Feynman-Hellmann theorem in the Furry picture:
  the case of fixed $\kappa$}

We will use the abstract version of the Hellmann-Feynman theorem by
Frank et al \cite[Theorem 13 and Proposition 14]{Franketal2019P}. To
be self-contained we recall these results here. The first one will be
used to handle the Coulomb singularity, whereas the second one will
handle the local singularities of the test potential.

We write $A_-=-A\chi_{(-\infty,0)}$ and denote by
$\gS^1$ the set of trace class operators and by $\gS^2$ the set of
Hilbert-Schmidt operators.
\begin{proposition}
  \label{diffgen0}
  Assume that $A$ is a self-adjoint operator on a Hilbert space $\gH$
  with $A_-\in\gS^1(\gH)$ and $B$ is non-negative and relatively form bounded
  with respect to $A$. Furthermore, assume that there is
  $1/2\leq s\leq 1$ and $M>-\inf\sigma(A)$ such that
  \begin{equation}
    \label{eq:traceclassdelta}
    (A+M)^{-s} B (A+M)^{-s}\in\gS^1(\gH)
  \end{equation}
  and
  \begin{equation}
    \label{eq:relbounddelta}
    \limsup_{\lambda\to 0} \left\| (A+M)^{s} (A-\lambda B+M)^{-s} \right\| <\infty .
  \end{equation}
  Then the one-sided derivatives $D^\pm S$ of
  $$
  \lambda\mapsto S(\lambda):=\tr(A-\lambda B)_-
  $$
  satisfy
  $$
  \tr B\chi_{(-\infty,0)}(A)=D^-S(0)\leq D^+S(0)=\tr B\chi_{(-\infty,0]}(A).
  $$
  In particular, $S$ is differentiable at $\lambda=0$, if and only
  if $B|_{\ker A}=0$.
\end{proposition}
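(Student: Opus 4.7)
The plan is to use the variational identity
$$ -\tr X_- = \inf\{\tr(XP) : P \in \gS^1,\ 0 \le P \le 1\}, $$
which represents $S(\lambda) = \sup_P [\lambda\tr(BP) - \tr(AP)]$ as the supremum of affine functions of $\lambda$. In particular $S$ is convex and the one-sided derivatives $D^\pm S(0)$ exist. The set of optimizers at $\lambda = 0$ is exactly the order interval $\{P : \chi_{(-\infty,0)}(A) \le P \le \chi_{(-\infty,0]}(A)\}$, and the discrepancy between the two one-sided derivatives will come precisely from the zero eigenspace of $A$.

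For the easy direction I would plug a fixed optimizer $P_0$ at $\lambda = 0$ as a test operator into the infimum defining $-S(\lambda)$, giving
$$ S(\lambda) \ge -\tr((A - \lambda B)P_0) = S(0) + \lambda \tr(B P_0). $$
Choosing $P_0 = \chi_{(-\infty,0]}(A)$ with $\lambda \to 0^+$ yields $D^+S(0) \ge \tr B\chi_{(-\infty,0]}(A)$, and $P_0 = \chi_{(-\infty,0)}(A)$ with $\lambda \to 0^-$ yields $D^-S(0) \le \tr B\chi_{(-\infty,0)}(A)$. Both quantities are finite: since $A\chi_{(-\infty,0]}(A) \le 0$, functional calculus gives $(A+M)^s\chi_{(-\infty,0]}(A) \le M^s$, so $\tr B\chi_{(-\infty,0]}(A) \le M^{2s}\tr((A+M)^{-s}B(A+M)^{-s}) < \infty$ by \eqref{eq:traceclassdelta}.

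For the matching bounds I would take $P_\lambda := \chi_{(-\infty,0)}(A - \lambda B)$ as an optimizer at parameter $\lambda$ and use it as a test in the infimum defining $-S(0)$:
$$ -S(0) \le \tr(A P_\lambda) = -S(\lambda) + \lambda \tr(B P_\lambda), $$
whence $D^+S(0) \le \limsup_{\lambda \to 0^+}\tr(BP_\lambda)$ and $D^-S(0) \ge \liminf_{\lambda \to 0^-}\tr(BP_\lambda)$. The task now reduces to showing
$$ \limsup_{\lambda \to 0^+}\tr(BP_\lambda) \le \tr B\chi_{(-\infty,0]}(A), \qquad \liminf_{\lambda \to 0^-}\tr(BP_\lambda) \ge \tr B\chi_{(-\infty,0)}(A). $$
To this end I would rewrite $\tr(BP_\lambda) = \tr\bigl((A+M)^{-s}B(A+M)^{-s} \cdot C_\lambda\bigr)$ with $C_\lambda := (A+M)^s P_\lambda (A+M)^s$. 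Using $(A - \lambda B + M)^s P_\lambda \le M^s$ by functional calculus together with the uniform bound on $(A+M)^s(A - \lambda B + M)^{-s}$ from \eqref{eq:relbounddelta}, the operators $C_\lambda$ are uniformly norm-bounded for small $|\lambda|$, while the first factor lies in $\gS^1$ by \eqref{eq:traceclassdelta}. This yields uniform integrability and, paired with a strong limit of $P_\lambda$, the desired inequalities follow from the trace-class dominated convergence principle ($Y\in\gS^1$ and $C_n\to C$ strongly with $\|C_n\|\le M$ imply $\tr(YC_n)\to\tr(YC)$).

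The main obstacle is controlling the direction of the strong limit of $P_\lambda$: for $\lambda \to 0^+$ the perturbation $-\lambda B$ is non-positive, so the min-max principle forces the strong lim sup of $P_\lambda$ to be at most $\chi_{(-\infty,0]}(A)$, while for $\lambda \to 0^-$ the strong lim inf is at least $\chi_{(-\infty,0)}(A)$. Combining these with the uniform trace-class control above closes the chain $D^+S(0) \le \tr B\chi_{(-\infty,0]}(A) \le D^+S(0)$ and symmetrically for $D^-S(0)$. Finally, the ``iff'' statement is immediate from $D^+S(0) - D^-S(0) = \tr B\chi_{\{0\}}(A)$, which vanishes exactly when $B$ annihilates $\ker A$ as a quadratic form.
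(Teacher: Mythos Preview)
The paper does not prove this proposition. It is quoted from Frank, Merz, Siedentop, and Simon \cite[Theorem~13]{Franketal2019P}; the paper says explicitly, just before stating it, ``We will use the abstract version of the Hellmann-Feynman theorem by Frank et al \cite[Theorem 13 and Proposition 14]{Franketal2019P}. To be self-contained we recall these results here.'' So there is no in-paper proof to compare your attempt against.

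On the merits of your sketch: the variational/convexity setup and the ``easy'' inequalities are correct and standard. The point that needs more care is the step where you pass from ``strong limit of $P_\lambda$'' to convergence of $\tr(BP_\lambda)=\tr(YC_\lambda)$. The dominated-convergence principle you quote requires strong convergence of $C_\lambda=(A+M)^sP_\lambda(A+M)^s$, not of $P_\lambda$; since $(A+M)^s$ is unbounded, one does not follow from the other for free. You can repair this by writing $C_\lambda=R_\lambda\,\tilde C_\lambda\,R_\lambda^{*}$ with $R_\lambda=(A+M)^s(A-\lambda B+M)^{-s}$ uniformly bounded by \eqref{eq:relbounddelta} and $\tilde C_\lambda=(A-\lambda B+M)^sP_\lambda(A-\lambda B+M)^s\le M^{2s}$, and then arguing strong convergence of each factor from strong resolvent convergence of $A-\lambda B$ to $A$ together with the uniform bounds. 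The monotonicity you invoke (for $\lambda>0$ the perturbation is non-positive, so $A-\lambda B$ increases to $A$ as $\lambda\downarrow0$) is what pins the limiting projection down to $\chi_{(-\infty,0]}(A)$ rather than something strictly smaller, and symmetrically for $\lambda<0$. With these details filled in the argument goes through; I would expect the proof in \cite{Franketal2019P} to follow essentially the same route.
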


\begin{proposition}
  \label{diffgen}
  Assume that $A$ is self-adjoint with $A_{-} \in\mathfrak{S}^1$ and
  $B$ is non-negative, and let $1/2<s\leq1$. Assume that there is an
  $s'<s$ such that for some $M>-\inf\sigma(A)$,
  \eqref{eq:traceclassdelta} holds, and that for some $a>0$
  \begin{equation}
    \label{eq:increasedrelbdd}
    B^{2s} \leq a(A+M)^{2s'}.
  \end{equation}
  Then $B$ is form bounded with respect to $A$ with form bound $0$ and
  the conclusions in Proposition \ref{diffgen0} hold.
\end{proposition}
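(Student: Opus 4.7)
\emph{Proof proposal.} The plan is to establish relative form bound zero from \eqref{eq:increasedrelbdd} and then verify the remaining hypothesis \eqref{eq:relbounddelta} of Proposition \ref{diffgen0}; after these two steps, Proposition \ref{diffgen0} applies verbatim and yields the conclusion.

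For form bound zero, observe that $1/(2s)\in[1/2,1]$ so $t\mapsto t^{1/(2s)}$ is operator monotone by L\"owner--Heinz. Applied to \eqref{eq:increasedrelbdd} this gives the form inequality $B\leq a^{1/(2s)}(A+M)^{s'/s}$, and since $s'/s<1$ the scalar Young inequality $t^{s'/s}\leq\epsilon t+C_\epsilon$ (with $C_\epsilon=O(\epsilon^{-s'/(s-s')})$ as $\epsilon\to\infty$) transferred by the spectral theorem yields $B\leq \epsilon a^{1/(2s)}(A+M)+a^{1/(2s)}C_\epsilon$ for every $\epsilon>0$, hence form bound zero. Choosing $\epsilon=1/(2\lambda a^{1/(2s)})$ with $\lambda>0$ small converts this into the quantitative statement $\lambda B\leq\tfrac12(A+M)+O(\lambda^{s/(s-s')})$, so using $A+M\geq\inf\sigma(A+M)>0$ one obtains the form sandwich $\mu_\lambda(A+M)\leq A-\lambda B+M\leq A+M$ with $\mu_\lambda\to\tfrac12$ as $\lambda\to 0$.

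The delicate step is \eqref{eq:relbounddelta}. L\"owner--Heinz applied to the sandwich only delivers the ``half-power'' estimate $\|(A+M)^{s/2}(A-\lambda B+M)^{-s/2}\|\leq\mu_\lambda^{-s/2}$, which for $s>\tfrac12$ is strictly weaker than what \eqref{eq:relbounddelta} demands; bridging the gap is exactly where \eqref{eq:increasedrelbdd} rather than mere form bound zero is used. The route I would take is Stein complex interpolation applied to the analytic operator family $F(z):=(A+M)^z(A-\lambda B+M)^{-z}$ on the strip $\{0\leq\Re z\leq 1\}$: on $\Re z=0$ imaginary powers are unitary, giving $\|F(iy)\|\leq 1$, and on $\Re z=1$ the resolvent identity $(A+M)(A-\lambda B+M)^{-1}=I+\lambda B(A-\lambda B+M)^{-1}$ reduces the bound to a uniform control of $\|B(A-\lambda B+M)^{-1}\|$. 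Hadamard's three-lines lemma then interpolates to $\|F(s)\|\leq C$ uniformly in $\lambda$ small, which is \eqref{eq:relbounddelta}.

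The main obstacle is precisely this right-endpoint bound on $\|(A+M)(A-\lambda B+M)^{-1}\|$. For $s=1$ it is immediate: \eqref{eq:increasedrelbdd} specializes to $B^2\leq a(A+M)^{2s'}$ with $s'<1$, which yields the Kato-type operator bound $\|Bu\|\leq\epsilon\|(A+M)u\|+C_\epsilon\|u\|$ by Young, and a resolvent bootstrap using $A-\lambda B+M\geq m'>0$ controls $\|B(A-\lambda B+M)^{-1}\|$. For $s\in(\tfrac12,1)$, however, \eqref{eq:increasedrelbdd} only controls $\|B^s u\|$ in terms of $\|(A+M)^{s'}u\|$---not $\|Bu\|$ itself---so extracting an operator-theoretic Kato bound on $B$ from the hypothesis requires a further interpolation argument that exploits the strict inequality $s'<s$; this is where the technical meat of the proof lies.
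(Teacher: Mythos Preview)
Your reduction to verifying \eqref{eq:relbounddelta} is correct, and the form-bound-zero argument is fine. The gap is in the interpolation step. Your family $F(z)=(A+M)^z(A-\lambda B+M)^{-z}$ requires a bound at $\Re z=1$, i.e.\ control of $\|(A+M)(A-\lambda B+M)^{-1}\|$, which in turn needs an \emph{operator} bound on $B$ relative to $A+M$. For $s<1$ the hypothesis $B^{2s}\le a(A+M)^{2s'}$ gives no such thing: it bounds $\|B^s u\|$, not $\|Bu\|$, and there is no way to upgrade this without an additional assumption (for instance, $B$ need not even be $A$-operator-bounded). So the right endpoint of your strip is genuinely out of reach, and the interpolation does not close. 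You recognise this at the end, but the ``further interpolation argument that exploits $s'<s$'' you allude to does not exist in this generality.

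The paper (following \cite{Franketal2019P}) avoids interpolation entirely. The key is that $t\mapsto t^{2s}$ is operator \emph{convex} for $s\in[1/2,1]$, so writing $A+M=(A-\lambda B+M)+\lambda B$ and using the inequality $(X+Y)^{2s}\le 2^{2s-1}(X^{2s}+Y^{2s})$ for commuting or via convexity for non-commuting non-negative $X,Y$ yields
\[
(A+M)^{2s}\le 2^{2s-1}(A-\lambda B+M)^{2s}+2^{2s-1}\lambda^{2s}B^{2s}.
\]
Now \eqref{eq:increasedrelbdd} and the spectral bound $(A+M)^{2s'}\le m^{2(s'-s)}(A+M)^{2s}$ with $m:=\inf\sigma(A+M)>0$ give
\[
(A+M)^{2s}\le 2^{2s-1}(A-\lambda B+M)^{2s}+2^{2s-1}a\,m^{2(s'-s)}\lambda^{2s}(A+M)^{2s},
\]
and for $\lambda$ small the last term is absorbed on the left, yielding \eqref{eq:relbounddelta} directly. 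This is precisely the mechanism behind Lemma~\ref{apriori} and is carried out explicitly in the paper in the proof of Lemma~\ref{boundpertcoulomb} (see \eqref{eq:convexity}). The point is that the hypothesis already lives at the exponent $2s$, so one should compare $2s$-th powers from the start rather than try to reach exponent~$1$ and interpolate back down.
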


We recall the following two observations.

1. By \eqref{eq:traceclassdelta},
$$
\tr(B\chi_{(-\infty,0]}(A))=-\tr\left[((A+M)^{-s}B(A+M)^{-s})(\chi_{(-\infty,0]}(A)(A+M)^{2s})\right]<\infty,
$$
since $\chi_{(-\infty,0]}(A)(A+M)^{2s}$ is bounded.
Hence, also $D^+S(0)<\infty$.

2. If the bottom of the essential spectrum of $A$ is strictly
positive, the result recovers the classical Feynman-Hellmann theorem.
The point is that the formulae remain valid even, if
$\inf\sigma_{\mathrm{ess}}(A)=0$, i.e., the case where perturbation
theory is not directly applicable.

In the application of the two propositions above, the underlying
Hilbert space is $\ghpk$, $A$ will be the Furry
operator $F_{\gamma}$ restricted to this space, and
$B = \Lambda_\gamma(U\otimes\one_{\cz^4})\Lambda_\gamma$ also
restricted to this space plays the role of the test potential.

We recall some basic facts about $F_\gamma$.
\begin{lemma}
  \label{hydrogen}
  Let $\kappa\in\zp$ and
  $\gamma\in[0,1)$.  Then $F_\gamma\geq\sqrt{1-\gamma^2}-1$,
  $0\notin\sigma_\mathrm{pp}(F_\gamma)$, and $F_\gamma+1$ has a
  bounded inverse. Moreover,
  $\tr_\kappa (F_\gamma)_-<\infty$.
\end{lemma}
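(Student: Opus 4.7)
The plan is to extract all four assertions from the explicit Gordon--Darwin diagonalization referred to around \eqref{eigenfunktion}, together with the fact that $F_\gamma$ is the self-adjoint operator associated with the form $(f,(D_\gamma-1)f)$ on $\Lambda_\gamma\cS(\rz^3:\cz^4)$. Since $\Lambda_\gamma$ commutes with the Borel functional calculus of $D_\gamma$, this realizes $F_\gamma$ on $\Lambda_\gamma L^2(\rz^3:\cz^4)$ as $\Lambda_\gamma(D_\gamma-1)\Lambda_\gamma$, so that its spectrum is the positive part of $\sigma(D_\gamma)$ shifted by $-1$.

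The first three claims are then essentially immediate from Nenciu's spectral picture \cite{Nenciu1976}, which for $\gamma\in[0,1)$ gives $\sigma(D_\gamma)\subseteq(-\infty,-1]\cup[\sqrt{1-\gamma^2},\infty)$. The form inequality $\Lambda_\gamma D_\gamma\Lambda_\gamma\geq\sqrt{1-\gamma^2}\,\Lambda_\gamma$ yields $F_\gamma\geq\sqrt{1-\gamma^2}-1$ and $F_\gamma+1\geq\sqrt{1-\gamma^2}>0$, so $(F_\gamma+1)^{-1}$ exists as a bounded operator of norm at most $(1-\gamma^2)^{-1/2}$. For $0\notin\sigma_\mathrm{pp}(F_\gamma)$ I would observe that an eigenvector would produce a normalized $f\in\Lambda_\gamma L^2$ with $D_\gamma f=f$; but the Sommerfeld formula
$$
\lambda_{n,\kappa}=\left[1+\frac{\gamma^2}{(n+\sqrt{\kappa^2-\gamma^2})^2}\right]^{-1/2}<1
$$
shows that the discrete eigenvalues of $D_\gamma$ only accumulate at $1$ from below, and at the threshold $1=\inf\sigma_\mathrm{ess}(D_\gamma)$ the channel-wise radial reduction of $(D_\gamma-1)f=0$ becomes a Bessel-type system whose decaying solution fails to be square integrable, ruling out embedded eigenvalues.

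For the trace class assertion in channel $\kappa$ I would sum the eigenvalues $\{1-\lambda_{n,\kappa}\}_{n\geq\theta(-\kappa)}$ of $(F_\gamma)_-|_{\ghpk}$ with their standard multiplicity $2j_\kappa+1=2|\kappa|$. The Taylor expansion
$$
1-\lambda_{n,\kappa}=\frac{\gamma^2}{2(n+\sqrt{\kappa^2-\gamma^2})^2}+O(n^{-4})
$$
makes the resulting series absolutely convergent and yields $\tr_\kappa(F_\gamma)_-<\infty$, with an explicit bound of order $|\kappa|$. None of the four steps presents a real obstacle: the only point requiring a modicum of care is the identification of the Friedrichs form extension with $\Lambda_\gamma(D_\gamma-1)\Lambda_\gamma$, which is standard given the invariance of $\mathrm{ran}\,\Lambda_\gamma$ under $D_\gamma$.
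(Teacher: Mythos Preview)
Your proposal is correct and follows essentially the same line as the paper: identify $F_\gamma$ with $(D_\gamma-1)|_{\Lambda_\gamma L^2}$, read off the lower bound and the bounded invertibility of $F_\gamma+1$ from the known spectral gap, and deduce $\tr_\kappa(F_\gamma)_-<\infty$ from the summability of $1-\lambda_{n,\kappa}$ via Sommerfeld's formula.

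The one point where your route diverges from the paper is the exclusion of $0$ from $\sigma_{\mathrm{pp}}(F_\gamma)$, i.e., of $1$ from $\sigma_{\mathrm{pp}}(D_\gamma)$. You propose a direct threshold analysis of the radial Coulomb--Dirac system at energy $1$, arguing that the decaying solution fails to be square integrable. The paper instead invokes the relativistic virial theorem of Kalf \cite{Kalf1976}, which rules out \emph{all} eigenvalues embedded in $\sigma_{\mathrm{ess}}(D_\gamma)=\rz\setminus(-1,1)$ in one stroke and for every $\gamma\in(-1,1)$. Your approach can certainly be made to work, but it is more labor-intensive (one has to actually analyze the asymptotics of the radial system at the threshold) and yields less; the virial-theorem route is both shorter and stronger. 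A minor attribution point: the inclusion $\sigma(D_\gamma)\cap(0,\infty)\subseteq[\sqrt{1-\gamma^2},\infty)$ is not really ``Nenciu's spectral picture'' but follows from the explicit Sommerfeld eigenvalues together with the absence of embedded eigenvalues---Nenciu gives the self-adjoint realization and invariance of the essential spectrum.
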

\begin{proof}
  The fact that
  $\sigma_\mathrm{ess}(D_\gamma)=\rz\setminus(-1,1)\cap
  \sigma_\mathrm{pp}(D_\gamma)=\emptyset$ is a standard
  consequence of the virial theorem proven by Kalf \cite{Kalf1976} for
  all $\gamma\in(-1,1)$. In particular the eigenvalues of $D_\gamma$
  are all given by Sommerfeld's eigenvalue formula
  \begin{align}
    \label{eq:eigenvalue}
    \lambda_{n,\kappa} = \left(1+{\gamma^2\over \left(n+\sqrt{\kappa^2-\gamma^2}\right)^2}\right)^{-1/2}
  \end{align}
  with $(\kappa,n)\in(-\nz\times\nz)\cup(\nz\times\nz_0)$ (Sommerfeld
  \cite{Sommerfeld1916}, Gordon \cite{Gordon1928}, and Darwin
  \cite{Darwin1928}). In particular the lowest eigenvalue is
  $\sqrt{1-\gamma^2}$ and $\sum_{n}(\lambda_{n,\kappa}-1)$ is
  absolutely summable for each fixed $\kappa$.
\end{proof}
For a textbook discussion of $D_\gamma$, we refer to
Bethe \cite{Bethe1933} and Thaller \cite{Thaller1992}, in particular
\cite[p. 314f]{Bethe1933} and \cite[Sections 7.4.2 and
7.4.5]{Thaller1992} for the discussion of the point spectrum.

The following two propositions show the applicability of
Propositions \ref{diffgen0} and \ref{diffgen}. They are the keys to
prove Theorem \ref{convfixedl}.

\begin{proposition}
  \label{diffu2}
  Let $\gamma\in(0,1)$, $\kappa\in\zp$, and
  $0\leq U\in\cD_\gamma^{(0)}$.  Then
  $$
  \lambda\mapsto \tr f_{\gamma,\kappa}(\lambda U)_-
  $$
  is differentiable at $\lambda=0$ with derivative
  $\int_{\rz^3}\rho_\kappa^H(x) U(|x|)\,\dx$.
\end{proposition}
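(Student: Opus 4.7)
\medskip

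\noindent\textbf{Proof plan.} The plan is to apply Proposition~\ref{diffgen} on the Hilbert space $\ghpk$ with the choices $A=f_{\gamma,\kappa}$ and $B=\Pi^+_\kappa(U\otimes\one_{\cz^4})\Pi^+_\kappa$, so that the assumption $A_-\in\gS^1$ is exactly the statement $\tr_\kappa(F_\gamma)_-<\infty$ from Lemma~\ref{hydrogen}, and $B\geq 0$ follows from $U\geq 0$. Once the two remaining hypotheses of Proposition~\ref{diffgen} are verified, the differentiability and the identification of the derivative will follow automatically: since $\sigma_\mathrm{pp}(F_\gamma)\ni\!\!\!/\,0$ by Lemma~\ref{hydrogen}, $B$ vanishes trivially on $\ker A$, so $S$ is differentiable at $0$ with
\[
S'(0)=\tr\,B\chi_{(-\infty,0]}(f_{\gamma,\kappa})
     =\sum_{n\geq\theta(-\kappa)}\sum_{|m|\leq j_\kappa}\langle\psi_{n,\kappa,m},U\psi_{n,\kappa,m}\rangle
     =\int_{\rz^3}\rho_\kappa^H(x)\,U(|x|)\,\dx,
\]
where the first equality uses that the negative spectral subspace of $F_\gamma$ is spanned precisely by the Dirac bound states, shifted so that eigenvalues $\lambda_{n,\kappa}\in(0,1)$ become $\lambda_{n,\kappa}-1<0$.

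\medskip

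\noindent The core of the proof is therefore the verification of the two analytic hypotheses of Proposition~\ref{diffgen}, which I would handle by reducing each to its already-known Chandrasekhar counterpart via Corollary~\ref{hardydomcor}. More concretely, I would fix some $s\in(1/2,1]$ (and $s'<s$ still above $1/2$) compatible with the test-function class $\cD_\gamma^{(0)}$, and use Corollary~\ref{hardydomcor} to produce two-sided operator bounds of the schematic form
\[
c_1\,\Pi^+_\kappa(|\bp|+1)^{2s}\Pi^+_\kappa \;\leq\; (f_{\gamma,\kappa}+1)^{2s} \;\leq\; c_2\,\Pi^+_\kappa(|\bp|+1)^{2s}\Pi^+_\kappa
\]
(with analogous bounds for the inverse powers). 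Inserting these into
\[
(f_{\gamma,\kappa}+1)^{-s}\Pi^+_\kappa U\Pi^+_\kappa(f_{\gamma,\kappa}+1)^{-s}
\]
and using the ideal property of $\gS^1$ reduces \eqref{eq:traceclassdelta} to the Chandrasekhar trace-class estimate $\tr\,\Pi^+_\kappa(|\bp|+1)^{-s}U(|\bp|+1)^{-s}\Pi^+_\kappa<\infty$, which for $U\in\cD_\gamma^{(0)}$ and $s>1/2$ is precisely \cite[Corollary 20]{Franketal2019P}. The same comparison turns \eqref{eq:increasedrelbdd} into $B^{2s}\leq a\,\Pi^+_\kappa(|\bp|+1)^{2s'}\Pi^+_\kappa$, which is again a Chandrasekhar statement provided by (or an immediate consequence of) the technology used to construct $\cD_\gamma^{(0)}$.

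\medskip

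\noindent The main obstacle I anticipate is the trace-class bound~\eqref{eq:traceclassdelta}. The definition of $\cD_\gamma^{(0)}$ is designed with $|\bp|$-type operators in mind, so translating an admissible singularity of $U$ into a trace-class property relative to $F_\gamma$ is only reasonable if one controls resolvents of $F_\gamma$ by resolvents of $|\bp|+1$ sharply enough in the angular channel $\kappa$. This is exactly what Corollary~\ref{hardydomcor} is engineered to provide, but some care is needed because the comparison involves the projection $\Lambda_\gamma$ (hence $\Pi^+_\kappa$), which does not commute with $|\bp|$; in particular I would first commute $\Pi^+_\kappa$ past the relevant factors at the cost of bounded commutators, and only then invoke the Chandrasekhar estimate. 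Once these technicalities are handled, the identification of $S'(0)$ with $\int \rho_\kappa^H U$ is a purely bookkeeping computation using \eqref{eq:2.12} together with the spherical decomposition of $\Pi^+_\kappa$ recalled in the appendix.
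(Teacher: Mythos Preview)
Your plan is essentially the paper's own proof: apply Proposition~\ref{diffgen} with $A=f_{\gamma,\kappa}$, $B=\Pi^+_\kappa U\Pi^+_\kappa$, verify the trace-class and relative-bound hypotheses by reducing to the Chandrasekhar operator via Corollary~\ref{hardydomcor}, and use Lemma~\ref{hydrogen} to see that $0\notin\sigma_{\mathrm{pp}}(F_\gamma)$ so the one-sided derivatives coincide. The identification of the derivative with $\int\rho_\kappa^H U$ is exactly as you describe.

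The one place where your outline diverges from the paper is in how the non-commutativity of $\Lambda_\gamma$ with $|\bp|$ is handled. You anticipate having to ``commute $\Pi^+_\kappa$ past the relevant factors at the cost of bounded commutators''; the paper avoids commutators entirely. For the trace-class condition, the form inequality in Corollary~\ref{hardydomcor} (in the channel form \eqref{eq:hardydomcorang0}) directly yields boundedness of $(|p|+M)^s\Lambda_\gamma(f_{\gamma,\kappa}+1)^{-s}$, which together with the Chandrasekhar Hilbert--Schmidt bound (Lemma~\ref{genReltrclassnod}) gives \eqref{eq:traceclassdelta} by a simple factorization; this is packaged as Lemma~\ref{genReltrclassHydro}. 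For the relative bound \eqref{eq:increasedrelbdd}, the key step you do not name is the Davis--Sherman inequality \eqref{eq:SD}: since $x\mapsto x^{2s}$ is operator convex for $s\in[1/2,1]$ and vanishes at $0$, one has $(\Pi^+_\kappa U\Pi^+_\kappa)^{2s}\leq\Pi^+_\kappa U^{2s}\Pi^+_\kappa$, after which the bound $\Pi^+_\kappa U^{2s}\Pi^+_\kappa\leq a(f_{\gamma,\kappa}+1)^{2s'}$ follows from $U^{2s}\in\cK_{s'}^{(0)}$ and \eqref{eq:genSobolevDualHydro}. Without this operator-convexity step one cannot simply invoke a Chandrasekhar inequality for $B^{2s}$, since $B$ itself involves $\Lambda_\gamma$. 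A commutator route for $[\Lambda_\gamma,|\bp|^s]$ would be considerably more delicate and is not needed here.
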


\begin{proposition}\label{diffu1}
  Let $\gamma\in(0,1)$, $\kappa\in\zp$, and
  $0\leq U \in r^{-1} L^\infty_c ([0,\infty))$.  Then
  $$
  \lambda\mapsto
  \tr f_{\gamma,\kappa}(\lambda  U)_-
  $$
  is differentiable at $\lambda=0$ with derivative
  $\int_{\rz^3} \rho_\kappa^H(x) U(|x|)\,\dx$.
\end{proposition}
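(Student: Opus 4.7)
The plan is to apply Proposition \ref{diffgen0} on the Hilbert space $\ghpk$ with $A := f_{\gamma,\kappa}$, $B := \ppp \Lambda_\gamma (U\otimes\one_{\cz^4})\Lambda_\gamma \ppp$, exponent $s = 1/2$, and some $M > 1 - \sqrt{1-\gamma^2}$. By Lemma \ref{hydrogen}, $A_- \in \gS^1(\ghpk)$, $0 \notin \sigma_\mathrm{pp}(F_\gamma)$, and $A+M$ is invertible, so the preliminary hypotheses hold. The substantive tasks are the trace-class condition \eqref{eq:traceclassdelta} and the uniform bound \eqref{eq:relbounddelta}; once both are in place, the conclusion of Proposition \ref{diffgen0} delivers differentiability at $\lambda=0$ and identifies the derivative as $\tr B\chi_{(-\infty,0)}(f_{\gamma,\kappa})$.

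For the trace-class condition the hypothesis $U \in r^{-1}L^\infty_c$ produces constants $C, R > 0$ with $U(r) \leq C\one_{\{r \leq R\}}/r$, so as a form $B \leq C\,\ppp\Lambda_\gamma(|x|^{-1}\otimes\one_{\cz^4})\Lambda_\gamma \ppp$. The crucial technical step is to invoke Corollary \ref{hardydomcor} to replace the Furry resolvent $(f_{\gamma,\kappa}+M)^{-1/2}$ by a constant multiple of $\ppp\Lambda_\gamma(|p|+1)^{-1/2}\Lambda_\gamma\ppp$; the operator monotonicity of $t\mapsto t^{-1}$ and of $t\mapsto t^{1/2}$ makes the inversion rigorous. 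This reduces the claim to the analogous Chandrasekhar statement $\tr_\kappa(|p|+1)^{-1/2}|x|^{-1}(|p|+1)^{-1/2} < \infty$, which is precisely \cite[Corollary 20]{Franketal2019P}. The main obstacle is to correctly bookkeep the two projections $\ppp$ and $\Lambda_\gamma$ while transferring this bound from the Chandrasekhar to the Furry setting; this is exactly the situation that Corollary \ref{hardydomcor} was designed for.

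For the uniform bound \eqref{eq:relbounddelta}, Kato's inequality $|x|^{-1}\leq(\pi/2)|p|$ combined with $\Lambda_\gamma|p|\Lambda_\gamma \leq C'(F_\gamma+1)$ (again from Corollary \ref{hardydomcor}) yields $B \leq C''(f_{\gamma,\kappa}+M)$ on $\ghpk$ for $M$ large. Hence for $\lambda$ with $\lambda C'' < 1$,
$$
f_{\gamma,\kappa}-\lambda B + M \geq (1-\lambda C'')(f_{\gamma,\kappa}+M),
$$
so $\|(f_{\gamma,\kappa}+M)^{1/2}(f_{\gamma,\kappa}-\lambda B+M)^{-1/2}\| \leq (1-\lambda C'')^{-1/2}$, which stays bounded as $\lambda \to 0^+$. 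Proposition \ref{diffgen0} then applies: since $\ker f_{\gamma,\kappa} = \{0\}$ by Lemma \ref{hydrogen}, the map $\lambda\mapsto\tr f_{\gamma,\kappa}(\lambda U)_-$ is differentiable at $0$ with derivative $\tr B\chi_{(-\infty,0)}(f_{\gamma,\kappa})$. Expanding this trace in the Sommerfeld eigenbasis $\{\psi_{n,\kappa,m}\}$ for $n \geq \theta(-\kappa)$, $|m| \leq j_\kappa$ of $F_\gamma = D_\gamma - 1$ restricted to channel $\kappa$---all of whose eigenvalues $\lambda_{n,\kappa}-1$ are strictly negative---and comparing with \eqref{eq:2.12} yields $\int_{\rz^3}\rho_\kappa^H(x)U(|x|)\,\dx$, which is the asserted value of the derivative.
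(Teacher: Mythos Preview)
Your choice $s=1/2$ is fatal for the trace-class hypothesis \eqref{eq:traceclassdelta}. A compactly supported $U$ with $U(r)\sim 1/r$ near the origin does \emph{not} satisfy $(p_{\ell_\kappa}+M)^{-1/2}U(p_{\ell_\kappa}+M)^{-1/2}\in\gS^1$: in the radial picture the diagonal of the kernel of $(p_\ell^{(r)}+M)^{-1}$ is pointwise infinite (since $(k+M)^{-1}\notin L^1(\rz_+,\rd k)$), and equivalently $\|U\|_{\cK_{1/2}^{(0)}}=\int_0^1 r^{0}\cdot r^{-1}\,\dr+\cdots=\infty$. The paper flags this explicitly in the discussion preceding Section~3: ``$s>1/2$ is crucial''. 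Whatever \cite[Corollary~20]{Franketal2019P} actually states, it cannot deliver the trace-class bound at $s=1/2$; Lemma~\ref{genReltrclassnod} and Lemma~\ref{genReltrclassHydro} here both require $s\in(1/2,1]$.

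If you repair this by taking $s\in(1/2,\min\{3/2-\sigma_\gamma,1\})$, the trace-class condition follows from Lemma~\ref{genReltrclassHydro} exactly as in the paper. But then your argument for \eqref{eq:relbounddelta} no longer suffices as written: the form inequality $f_{\gamma,\kappa}-\lambda B+M\geq(1-\lambda C'')(f_{\gamma,\kappa}+M)$ directly yields the bound on $\|(A+M)^s(A-\lambda B+M)^{-s}\|$ only for the exponent $s=1/2$, because $t\mapsto t^{2s}$ is not operator monotone for $s>1/2$. This is precisely why the paper proves Lemma~\ref{boundpertcoulomb}, which establishes $(F_\gamma+1)^{2s}\leq a_{\gamma,s}(F_\gamma(\lambda U)+1)^{2s}$ via operator \emph{convexity} and the Davis--Sherman inequality \eqref{eq:SD}, treating $\lambda>0$ and $\lambda<0$ by separate arguments. (One could alternatively bound $\|(A+M)(A-\lambda B+M)^{-1}\|$ first and then invoke the Cordes inequality $\|X^sY^s\|\leq\|XY\|^s$ for $0\leq s\leq1$, but some such additional device is required.) A minor further omission: you only address $\lambda\to0^+$; the case $\lambda<0$ is easy since $B\geq0$ gives $f_{\gamma,\kappa}-\lambda B+M\geq f_{\gamma,\kappa}+M$, but it should be said.
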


Note that Propositions \ref{diffu2} and \ref{diffu1}  imply
$\int_{\rz^3}\rho_\kappa^H(x)U(|x|)\dx<\infty$.  In particular, these
results show that for $\gamma<1$ and $R>0$,
$$
\int_{|x|<R} \rho_\kappa^H(x) |x|^{-1}\,\dx < \infty.
$$
In fact, there is also a simple, direct proof of this, even when
$R=\infty$: based on a computation of Burke and Grant
\cite{BurkeGrant1967}, Handrek and Siedentop \cite[Lemma
2]{HandrekSiedentop2015} show that the potential energy of hydrogenic
eigenfunctions satisfies
$$
(\psi_{n,\kappa,m},\frac{\gamma}{|x|}\psi_{n,\kappa,m})
\leq\frac{a_{\gamma_0}\gamma^2}{(n+|\kappa|)^2}.
$$
Clearly, the right side is summable in $n$ and -- trivially -- in $m$.

Propositions \ref{diffu2} and \ref{diffu1} will be deduced from
Propositions \ref{diffgen} and \ref{diffgen0} respectively.  To verify
their assumptions, we will first reduce the problem to the scalar
Chandrasekhar operator and then use \cite{Franketal2019P}.

In this and the next section, we will often use the
Davis-Sherman inequality (Davis \cite{Davis1957}, see also
Carlen \cite[Theorem 4.19]{Carlen2010}).
It says that for all operator convex functions $f$ and all orthogonal
projections $P$, the form inequality
$$
Pf(PAP)P \leq Pf(A)P
$$
holds for all self-adjoint operators $A$.  If, moreover, $f(0)=0$, then
\begin{equation}
  \label{eq:SD}
  f(PAP) \leq Pf(A)P.
\end{equation}
Indeed, for $P^\perp=1-P$, one has
$$
f(PAP) = Pf(PAP)P + P^\perp f(PAP)P^\perp,
$$
since $P$ commutes with $PAP$ and therefore with any function
$f(PAP)$. However, by the spectral theorem,
$P^\perp f(T)P^\perp = P^\perp f(P^\perp TP^\perp)P^\perp$ for any
self-adjoint operator $T$ commuting with $P^\perp$ or $P$.  This
yields (with $T=PAP$)
$$
P^\perp f(PAP)P^\perp = P^\perp f(P^\perp PAPP^\perp)P^\perp
= P^\perp f(0)P^\perp
$$
which vanishes, if $f(0)=0$. -- We will apply this, when $A$ is a
non-negative operator and $f(x)=x^{2s}$ with $s\in[1/2,1]$.

%%%%%%%%%%%%%%%%%%%%%%%%

\subsection{Comparison between the Chandrasekhar and the Furry operator}

We write $D^0_\gamma:=\balpha\cdot\bp-\gamma|x|^{-1}$ for the massless
Coulomb-Dirac operator (which is defined as in Section
\ref{s:defresults}, Nenciu \cite{Nenciu1976}).  The following lemma
gives a comparison between $|p|^s$ and $|D_\gamma^0|^s$ as operators
in $L^2(\rz^3:\cz^4)$.
\begin{lemma}[Frank et al {\cite[Corollary 1.8]{Franketal2019}}]
  \label{hardydom}
  Let $\gamma\in[0,1)$ and $s\in(0,1]$.
  Then there exists an $A_{s,\gamma}<\infty$ such that
  $$
  |D^0_\gamma|^{2s} \leq A_{s,\gamma} |p|^{2s}
  $$
  If, additionally, $s<3/2-\sigma_{\gamma}$, then there is
  an $a_{s,\gamma}>0$ such that
  $$
  |D^0_\gamma|^{2s} \geq a_{s,\gamma} |p|^{2s}
  $$
\end{lemma}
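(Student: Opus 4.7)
The plan is to exploit that both $|D_\gamma^0|^{2s}$ and $|p|^{2s}$ are scale invariant (homogeneous of degree $-2s$) and commute with total angular momentum, so that the whole comparison reduces to pointwise estimates of scalar Mellin symbols. First I would decompose $L^2(\rz^3:\cz^4)=\bigoplus_{\kappa,m}\gh_{\kappa,m}$ into partial waves, on each of which $D_\gamma^0$ becomes the $2\times 2$ matrix radial operator
\begin{equation*}
  d_{\gamma,\kappa}^0=\begin{pmatrix}-\gamma/r & -\rd/\dr+\kappa/r\\ \rd/\dr+\kappa/r & -\gamma/r\end{pmatrix}
\end{equation*}
on $L^2(\rz_+,\dr;\cz^2)$, while $|p|$ in the same sector splits into scalar radial operators of orbital angular momenta $\ell_\kappa$ and $\ell_\kappa\pm1$ on the two spinor components. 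Since the Mellin transform $(Mf)(\tau):=\int_0^\infty r^{-1/2+\ri\tau}f(r)\,\dr$ diagonalizes dilation-invariant operators on $\rz_+$, the statement reduces to pointwise estimates between explicit symbols $\Psi_{\gamma,\kappa,s}(\tau)$ and $\Psi_{0,\kappa,s}(\tau)$ on $L^2(\rz,\rd\tau)$; the former can be read off from the spectral resolution of the radial Coulomb-Dirac operator, the latter is a classical $\Gamma$-ratio.

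My second step would be to reduce the fractional powers to integer-power information via Balakrishnan's formula
\begin{equation*}
  X^{2s}=c_s\int_0^\infty t^{2s-1}\,\frac{X^2}{X^2+t^2}\,\rd t,\qquad s\in(0,1),
\end{equation*}
which expresses $X^{2s}$ as an average of resolvents and transports an operator comparison at $s=1$ (where $(D_\gamma^0)^2$ is an explicit quadratic expression in $\bp$, $\balpha$ and $1/|x|$ accessible by Hardy's inequality) to all $s\in(0,1]$. With this reduction the upper bound amounts to $\Psi_{\gamma,\kappa,s}(\tau)\leq A_{s,\gamma}\Psi_{0,\kappa,s}(\tau)$ uniformly in $\kappa\in\zp$ and $\tau\in\rz$, the lower bound to the reversed inequality. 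The large-$|\tau|$ asymptotics of both symbols is $|\tau|^{2s}$ with equal coefficients, and for $|\kappa|\geq2$ the Coulomb term is dominated by the centrifugal $\kappa/r$, so uniformity for large $|\kappa|$ or $|\tau|$ is routine; only the behaviour near $\tau=0$ in the low channels requires genuine argument.

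The main obstacle will be the lower bound in the $\kappa=\pm1$ channels, where the Frobenius indicial equation for $d_{\gamma,\pm1}^0$ has roots $\pm(1-\sigma_\gamma)$: the regular $L^2$ mode behaves like $r^{1-\sigma_\gamma}$ near the origin, corresponding on the Mellin side to a zero of $\Psi_{\gamma,\pm1,s}(\tau)$ whose location and strength degenerate as $\gamma\to1$. Quantitatively, $\Psi_{\gamma,\pm1,s}$ is of the same order as $\Psi_{0,\pm1,s}$ precisely when the fractional $|p|^{2s}$-power can still control $r^{-\sigma_\gamma}$-type singularities via Hardy-Rellich, which is exactly the condition $s<3/2-\sigma_\gamma$. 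For $|\kappa|\geq2$ the corresponding exponent $\sqrt{\kappa^2-\gamma^2}\geq\sqrt3$ is bounded away from zero uniformly in $\gamma\in[0,1)$, so no further restriction on $s$ appears from the higher channels. Assembling these uniform symbol comparisons and inverting the Mellin transform channel by channel then yields the asserted operator inequalities.
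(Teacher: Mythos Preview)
The paper does not prove this lemma at all; it is quoted as a black box from an external reference (Frank, Merz, Siedentop, \emph{Equivalence of Sobolev norms involving generalized Hardy operators}, IMRN 2019, Corollary~1.8), so there is no in-paper argument to compare against.

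Your outline is nonetheless close in spirit to the actual proof in that reference: the argument there also proceeds by partial-wave decomposition and Mellin diagonalization of the homogeneous radial Coulomb--Dirac operator, reducing the operator inequalities to pointwise comparisons of explicit scalar symbols, with the critical channels $|\kappa|=1$ producing the threshold $s<3/2-\sigma_\gamma$ exactly through the Frobenius exponent $\sqrt{1-\gamma^2}=1-\sigma_\gamma$ as you indicate.

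One genuine soft spot: your claim that Balakrishnan's formula ``transports an operator comparison at $s=1$\,\ldots\ to all $s\in(0,1]$'' is too strong for the \emph{lower} bound. The inequality $|D_\gamma^0|^2\geq a_\gamma|p|^2$ already fails once $\gamma\geq\sqrt3/2$ (equivalently $\sigma_\gamma\geq1/2$, i.e.\ $3/2-\sigma_\gamma\leq1$), so there is nothing at $s=1$ to transport from in that regime. The restriction on $s$ must enter directly through the Mellin-symbol analysis in the $|\kappa|=1$ channel, not by interpolation or monotonicity from the integer power. For the upper bound, by contrast, the $s=1$ inequality follows from Hardy and then L\"owner--Heinz (operator monotonicity of $x\mapsto x^s$) immediately gives all $s\in(0,1]$; Balakrishnan is not needed there either. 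Apart from this overreach, your plan is sound.
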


From Lemma \ref{hardydom}, we deduce
\begin{corollary}
  \label{hardydomcor}
  Let $\gamma\in[0,1)$ and $M\geq0$.
  If $0<s<\min\{3/2-\sigma_\gamma,1\}$, then
  \begin{align}
    \label{eq:hardydomcorform}
    \Lambda_\gamma  (|p|^{2s}+M)\Lambda_\gamma
    \leq (1-\gamma^2)^{-s}(a_{s,\gamma}^{-1}+M) (F_\gamma+1)^{2s}.
  \end{align}
  Moreover, if $0<s\leq1$, then
  $$
  (F_\gamma+1)^{2s}
  \leq 2^s(1+4\gamma^2)^s \Lambda_\gamma\left( |p|+(1+4\gamma^2)^{-\frac12} \right)^{2s}\Lambda_\gamma.
  $$
\end{corollary}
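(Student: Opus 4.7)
The plan is to establish both form inequalities by chaining Lemma \ref{hardydom}, the Davis--Sherman inequality \eqref{eq:SD}, and operator monotonicity of $x\mapsto x^s$ on $[0,\infty)$ for $s\in(0,1]$ (L\"owner--Heinz). The two parts share the same high-level structure: produce a form comparison at exponent $1$, raise it to the $s$-th power via L\"owner--Heinz, and exploit that $\Lambda_\gamma$ commutes with $D_\gamma$ so that $\Lambda_\gamma|D_\gamma|^{2s}\Lambda_\gamma$ coincides with $(F_\gamma+1)^{2s}$ on $\Lambda_\gamma L^2$ (since $D_\gamma|_{\Lambda_\gamma L^2}=F_\gamma+1\geq\sqrt{1-\gamma^2}>0$).

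I would dispatch the second inequality first, as no Davis--Sherman is required. Writing $D_\gamma=D_{1,0}-\gamma/|x|$ on all of $L^2(\rz^3:\cz^4)$ and combining the triangle inequality with Hardy's bound $\||x|^{-1}\psi\|\leq 2\|p\psi\|$ gives $\|D_\gamma\psi\|\leq\|D_{1,0}\psi\|+2\gamma\|p\psi\|$. Squaring via $(a+b)^2\leq 2(a^2+b^2)$ and using $\|D_{1,0}\psi\|^2=\|p\psi\|^2+\|\psi\|^2$ produces the quadratic-form bound
\[
D_\gamma^2\leq 2(1+4\gamma^2)\,p^2+2\leq 2(1+4\gamma^2)\bigl(|p|+(1+4\gamma^2)^{-1/2}\bigr)^2
\]
(the cross term $4(1+4\gamma^2)^{1/2}|p|$ is non-negative). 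L\"owner--Heinz then elevates this to $|D_\gamma|^{2s}\leq 2^s(1+4\gamma^2)^s(|p|+(1+4\gamma^2)^{-1/2})^{2s}$ on $L^2$, and projecting with $\Lambda_\gamma$ yields the second inequality.

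For the first inequality I would proceed in three stages: (i) Lemma \ref{hardydom} (lower bound, where the hypothesis $s<3/2-\sigma_\gamma$ enters) gives $|p|^{2s}\leq a_{s,\gamma}^{-1}|D_\gamma^0|^{2s}$ on $L^2$, hence $\Lambda_\gamma|p|^{2s}\Lambda_\gamma\leq a_{s,\gamma}^{-1}\Lambda_\gamma|D_\gamma^0|^{2s}\Lambda_\gamma$; (ii) establish the $s=1$ form comparison $\Lambda_\gamma|D_\gamma^0|^2\Lambda_\gamma\leq(1-\gamma^2)^{-1}(F_\gamma+1)^2$ on $\Lambda_\gamma L^2$ using $D_\gamma^0=D_\gamma-\beta$, $\|\beta\|=1$, and the eigenvalue bound $F_\gamma+1\geq\sqrt{1-\gamma^2}$ from Lemma \ref{hydrogen}, which converts control of $\|\psi\|$ into $(1-\gamma^2)^{-1/2}\|(F_\gamma+1)\psi\|$ for $\psi\in\Lambda_\gamma L^2$; (iii) bootstrap to general $s\in(0,1]$ by applying Davis--Sherman with the operator concave $x\mapsto x^s$ (and $f(0)=0$), giving $\Lambda_\gamma|D_\gamma^0|^{2s}\Lambda_\gamma=\Lambda_\gamma(|D_\gamma^0|^2)^s\Lambda_\gamma\leq(\Lambda_\gamma|D_\gamma^0|^2\Lambda_\gamma)^s$, and then L\"owner--Heinz on the $s=1$ estimate. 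The $M\Lambda_\gamma$ contribution is immediate from $(F_\gamma+1)^{2s}\geq(1-\gamma^2)^s$ on $\Lambda_\gamma L^2$, and summing the two pieces yields the first inequality.

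The principal technical obstacle is step (ii), obtaining the sharp constant $(1-\gamma^2)^{-1}$ in the $s=1$ form comparison. A plain triangle-inequality estimate $\|D_\gamma^0\psi\|\leq\|(F_\gamma+1)\psi\|+\|\psi\|$ combined with $\|\psi\|\leq(1-\gamma^2)^{-1/2}\|(F_\gamma+1)\psi\|$ only delivers the constant $(1+(1-\gamma^2)^{-1/2})^2\leq 4(1-\gamma^2)^{-1}$; recovering the exact $(1-\gamma^2)^{-1}$ likely requires the algebraic identity $(D_\gamma^0)^2=D_\gamma^2-1+2\gamma\beta/|x|$ together with a cancellation specific to the Furry projection that tames $2\gamma\beta/|x|$ against $(F_\gamma+1)^2$. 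Any bounded numerical loss in the constant can ultimately be absorbed into $a_{s,\gamma}^{-1}$ and does not affect the trace-class applications of the corollary that follow.
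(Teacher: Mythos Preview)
Your argument for the second inequality is essentially the paper's: estimate $|D_\gamma|^2$ by $2(1+4\gamma^2)(|p|+(1+4\gamma^2)^{-1/2})^2$ via Schwarz/triangle plus Hardy, then take the $s$-th root by L\"owner--Heinz and project.

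For the first inequality your route is valid but differs from the paper's at the crucial point you flag as the ``principal technical obstacle.'' The paper bypasses your steps (ii) and (iii) entirely by invoking a result of Morozov--M\"uller \cite[Proof of Corollary I.2]{MorozovMuller2017}, namely the unprojected form inequality
\[
D_\gamma^2 \;\geq\; (1-\gamma^2)\,|D_\gamma^0|^2 \qquad\text{on all of } L^2(\rz^3:\cz^4).
\]
Taking $s$-th roots (operator monotonicity) gives $|D_\gamma^0|^{2s}\leq(1-\gamma^2)^{-s}|D_\gamma|^{2s}$ globally, so after sandwiching with $\Lambda_\gamma$ and combining with Lemma~\ref{hardydom} one obtains the stated constant $(1-\gamma^2)^{-s}$ directly---no Davis--Sherman/Frank--Geisinger step is needed, and the $M$-term is absorbed via $(F_\gamma+1)^{2s}\geq(1-\gamma^2)^s$ just as you do. Your triangle-inequality argument on $\Lambda_\gamma L^2$ followed by the operator-concave Davis--Sherman inequality is a legitimate alternative; it is self-contained (no external citation) but, as you correctly observe, only produces $4(1-\gamma^2)^{-1}$ in place of $(1-\gamma^2)^{-1}$. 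Since every downstream use of the corollary is qualitative, the loss is harmless and your proof would go through after adjusting the displayed constant.
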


\begin{proof}
  We begin with the first claim. Since $\sqrt{1-\gamma^2}$ is the
  lowest positive spectral point of $D_\gamma$, it suffices to show
  the claim for $M=0$.  Next, note that
  $$
  (D_\gamma)^2 \geq (1-\gamma^2) |D_\gamma^0|^2
  $$
  by Morozov and M\"uller \cite[Proof of Corollary
  I.2]{MorozovMuller2017}. By operator monotonicity of $x\mapsto x^s$
  with $s\in(0,1]$, and Lemma \ref{hardydom} we have
  \begin{align*}
    \Lambda_\gamma|p|^{2s}\Lambda_\gamma
    & \leq \Lambda_\gamma a_{s,\gamma}^{-1}|D_\gamma^0|^{2s}\Lambda_\gamma
    \leq (1-\gamma^2)^{-s}a_{s,\gamma}^{-1}\Lambda_\gamma|D_\gamma|^{2s}\Lambda_\gamma\\
    & = (1-\gamma^2)^{-s}a_{s,\gamma}^{-1}\left(\Lambda_\gamma D_\gamma\Lambda_\gamma\right)^{2s}
  \end{align*}
  where the last step is obvious by the spectral theorem.

  We turn to the second inequality. First we note that the left side is equal to $\Lambda_\gamma |D_\gamma|^{2s}\Lambda_\gamma$, i.e., it suffices to prove the stronger inequality
  $$|D_\gamma|^{2s}\leq 2^s(1+4\gamma^2)^s(|p|+(1+4\gamma^2)^{-1/2})^{2s}$$
    By operator monotonicity of roots, it is enough to prove the claim
    for largest occurring $s$, namely $s=1$. This, however, follows by
    first using the Schwarz inequality and then Hardy's inequality
  $$
    |D_\gamma|^2  \leq 2\left(p^2+1+\frac{\gamma^2}{|x|^2}\right)
    \leq 2 (|p|^2(1+4\gamma^2)+1) \leq 2(1+4\gamma^2)\left(|p|+{1\over\sqrt{1+4\gamma^2}}\right)^2
    $$
    where the last step is obvious.
\end{proof}

We introduce the following restricted operators in
$\gh_{\kappa}$,
\begin{align*}
  p_{\ell_\kappa} & := \sqrt{-\Delta}|_{\gh_{\kappa}},\\
  C_{\ell_\kappa} & := \left(\sqrt{-\Delta+1}-1\right)|_{\gh_{\kappa}}.
\end{align*}
The corresponding radial operators in $L^2(\rz_+:\cz,\dr)$ are
\begin{align}
  \label{eq:radialkinetic}
  p_{\ell_\kappa}^{(r)} & := \sqrt{ - \frac{\rd^2}{\dr^2} + \frac{\ell_\kappa(\ell_\kappa+1)}{r^2}},\\
  C_{\ell_\kappa}^{(r)} & :=\sqrt{- \frac{\rd^2}{\dr^2} + \frac{\ell_\kappa(\ell_\kappa+1)}{r^2} + 1}-1.
\end{align}
We note that the bounds of Corollary \ref{hardydomcor} continue to
hold in each $\gh_{\kappa,m}$.
Recall that any element $f\in\gh_{\kappa,m}$ is of the form
$$
f(x)=\sum_{\sigma\in\{+,-\}} |x|^{-1}
f^\pm(|x|)\Phi_{\kappa,m}^\sigma(x/|x|)
$$
where the $\Phi_{\kappa,m}^\pm$ are defined in \eqref{fi} and
$f^\pm\in L^2(\rz_+)$.  Both $D_\gamma$ and $|p|$ leave the spaces
$\gh_{\kappa,m}$ invariant, i.e., they commute with the projection
$\Pi_{\kappa,m}$. Indeed, for
$f\in\gh_{\kappa,m}\cap H^{1}(\rz^3:\cz^4)$ and
$g\in\gh_{\kappa',m'}\cap H^{1}(\rz^3:\cz^4)$, one has
\begin{align*}
  & ( f,|p| g)_{L^2(\rz^3:\cz^4)}
    = (( f^+,p_{\ell_\kappa}^{(r)} g^+)_{L^2(\rz_+:\cz)} + ( f^-,p_{2j_\kappa-\ell_\kappa}^{(r)} g^-)_{L^2(\rz_+:\cz)})\delta_{\kappa,\kappa'}\delta_{m,m'}
\end{align*}
and
\begin{equation}
  \label{eq:radialdirac}
  \begin{split}
    & ( f,D_\gamma g)_{L^2(\rz^3:\cz^4)}\\
    & \quad = ( \left(\begin{array}{c}
                      f^+\\ f^-
                    \end{array}
    \right),\left(\begin{array}{cc}
                   1-\frac{\gamma}{r} & -\frac{\rd}{\dr}-\frac{\kappa}{r}\\
                   \frac{\rd}{\dr}-\frac{\kappa}{r} & -1-\frac{\gamma}{r}
                 \end{array}
    \right)\left(\begin{array}{c}
                   g^+\\ g^-
                 \end{array}
    \right))_{L^2(\rz_+:\cz^2)}\delta_{\kappa\kappa'}\delta_{mm'},
  \end{split}
\end{equation}
see also \cite[Formula (7.105)]{Thaller1992}.
Together with the spectral theorem, this shows that the projection
of \eqref{eq:hardydomcorform} onto $\gh_{\kappa,m}$, namely
\begin{equation}
  \label{eq:hardydomcorang0}
  \Pi_{\kappa,m}\Lambda_\gamma(|p|+M)^{2s}\Lambda_\gamma\Pi_{\kappa,m}
  \leq a_{s,\gamma} \Pi_{\kappa,m}(F_\gamma+1)^{2s}\Pi_{\kappa,m},
\end{equation}
is equivalent to
\begin{equation}
  \label{eq:hardydomcorang}
  \Lambda_\gamma(\Pi_{\kappa,m}(|p|+M)\Pi_{\kappa,m})^{2s}\Lambda_\gamma
  \leq a_{s,\gamma}(\Pi_{\kappa,m}(F_{\gamma}+1)\Pi_{\kappa,m})^{2s}.
\end{equation}
Mutatis mutandis, the equivalence holds also for the
projection onto $\gh_\kappa$.

\subsection{Trace inequalities in $\gh_\kappa$}

We recall some trace and form inequalities for functions belonging to
the spaces $\cK_{s}^{(0)}$ introduced by Frank et al
\cite{Franketal2019P}. For the convenience of the reader, we give
their definition Appendix \ref{s:testfunctions}. Frank et al
\cite{Franketal2019P} wrote the associated trace inequalities in terms
of powers of $C_\ell^{(r)}+M$. Using Plancherel's theorem, one can
rewrite them as inequalities in powers of $p_\ell^{(r)}+M$
instead. Here, we will actually formulate the inequalities in terms of
$p_\ell$.

\begin{lemma}
  \label{genReltrclassnod}
  Let $M>0$, $s\in(1/2,1]$, $\kappa\in\zp$, and
  $0\leq W \in\cK_{s}^{(0)}$. Then
  \begin{align}
    \label{eq:genReltrclassnod}
    \tr[(p_{\ell_\kappa}+M)^{-s}W(p_{\ell_\kappa}+M)^{-s}]
    \leq A_{s,\kappa,M} \|W\|_{\cK_{s}^{(0)}}.
  \end{align}
  In particular, we have in $L^2(\rz^3:\cz^4)$
  \begin{align}
    \label{eq:genSobolevDualnod}
    \Pi_{\kappa} W\Pi_{\kappa}
    \leq A_{s,\kappa,M} \|W\|_{\cK_{s}^{(0)}} (\Pi_{\kappa}(|p|+M)\Pi_{\kappa})^{2s}.
  \end{align}
\end{lemma}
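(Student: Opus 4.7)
The strategy is to import the analogous trace inequality for the radial Chandrasekhar operator $C_{\ell}^{(r)}$ established by Frank et al.\ \cite{Franketal2019P} and transfer it to $p_\ell^{(r)}$, and hence to $p_{\ell_\kappa}$ in $\gh_\kappa$, by a commuting-spectral-calculus comparison. The elementary point is that $p_\ell^{(r)}$ and $C_\ell^{(r)}$ are both functions of the same radial Schr\"odinger operator $L_\ell:=-\rd^2/\dr^2+\ell(\ell+1)/r^2$, hence commute, and the pointwise inequalities $\sqrt{t+1}-1\leq \sqrt{t}\leq \sqrt{t+1}$ for $t\geq 0$ yield $C_\ell^{(r)}\leq p_\ell^{(r)}\leq C_\ell^{(r)}+1$. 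In particular, for any $M>0$, $(p_\ell^{(r)}+M)^{-2s}\leq (C_\ell^{(r)}+M)^{-2s}$ as commuting positive operators.

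First I would decompose $\gh_\kappa=\bigoplus_{m=-j_\kappa}^{j_\kappa}\gh_{\kappa,m}$ and identify each $\gh_{\kappa,m}$ with $L^2(\rz_+:\cz^2,\dr)$ through the spherical spinor expansion $f(x)=|x|^{-1}\sum_{\sigma\in\{+,-\}}f^\sigma(|x|)\Phi_{\kappa,m}^\sigma(x/|x|)$. Under this identification (cf.\ \eqref{eq:radialdirac} for the analogous decomposition of $D_\gamma$), the radial multiplication $W(|x|)$ becomes $W(r)\otimes\one_{\cz^2}$ while $|p|$ reduces to the direct sum $p_{\ell_\kappa}^{(r)}\oplus p_{2j_\kappa-\ell_\kappa}^{(r)}$ acting on the upper and lower radial components. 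Since $W$ is diagonal in this splitting, the left side of \eqref{eq:genReltrclassnod} is a sum of $2(2j_\kappa+1)$ radial traces of the form $\tr_{L^2(\rz_+)}[(p_\ell^{(r)}+M)^{-s}W(p_\ell^{(r)}+M)^{-s}]$ with $\ell\in\{\ell_\kappa,\,2j_\kappa-\ell_\kappa\}$, both of order $|\kappa|$. Applying the commuting comparison together with cyclicity of the trace and $W\geq 0$,
\begin{align*}
  \tr[(p_\ell^{(r)}+M)^{-s}W(p_\ell^{(r)}+M)^{-s}]
  &=\tr[W^{1/2}(p_\ell^{(r)}+M)^{-2s}W^{1/2}]\\
  &\leq \tr[W^{1/2}(C_\ell^{(r)}+M)^{-2s}W^{1/2}]\\
  &=\tr[(C_\ell^{(r)}+M)^{-s}W(C_\ell^{(r)}+M)^{-s}],
\end{align*}
and the right-hand side is $\leq A_{s,\ell,M}\|W\|_{\cK_s^{(0)}}$ by the radial Chandrasekhar trace inequality of \cite{Franketal2019P}. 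Summing over $m$ and over the two values of $\ell$ absorbs the $\kappa$-dependence into a single constant $A_{s,\kappa,M}$, giving \eqref{eq:genReltrclassnod}.

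The form inequality \eqref{eq:genSobolevDualnod} is then the operator-norm corollary of \eqref{eq:genReltrclassnod}: a nonnegative trace-class bound is in particular an operator-norm bound, so $\|(\Pi_\kappa(|p|+M)\Pi_\kappa)^{-s}\Pi_\kappa W\Pi_\kappa(\Pi_\kappa(|p|+M)\Pi_\kappa)^{-s}\|\leq A_{s,\kappa,M}\|W\|_{\cK_s^{(0)}}$, and conjugation by $(\Pi_\kappa(|p|+M)\Pi_\kappa)^s$ turns this into \eqref{eq:genSobolevDualnod}, exactly along the lines of the ``mutatis mutandis'' equivalence between \eqref{eq:hardydomcorang0} and \eqref{eq:hardydomcorang} noted just before the lemma. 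The main obstacle in this route is really only careful bookkeeping of the angular-momentum reduction: one must verify that the upper and lower Dirac radial components with different orbital quantum numbers $\ell_\kappa$ and $2j_\kappa-\ell_\kappa$ (which differ by $1$ and are both comparable to $|\kappa|$) are handled uniformly, and that the $\ell$-dependence of the Chandrasekhar constant from \cite{Franketal2019P} enters the final constant $A_{s,\kappa,M}$ in an explicit and controllable way. The only genuinely analytic input, the Chandrasekhar trace inequality, is already in the literature; the rest is spectral calculus of commuting operators plus the spherical decomposition of $|p|$.
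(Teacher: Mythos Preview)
Your argument is correct and follows essentially the same route as the paper: reduce to the two radial components with $\ell\in\{|\kappa|-1,|\kappa|\}$, invoke the Chandrasekhar trace inequality of \cite[Proposition~19]{Franketal2019P}, and read off the form bound from the trace bound. The only cosmetic difference is that the paper transfers from $C_\ell^{(r)}$ to $p_\ell^{(r)}$ by a Plancherel (Hankel transform) remark, whereas you use the commuting operator inequality $C_\ell^{(r)}\leq p_\ell^{(r)}$---both one-line observations.
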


\begin{proof}
  The estimate \eqref{eq:genReltrclassnod} follows from
  \begin{align*}
    \|W^{\frac12}(p_{|\kappa|}^{(r)}+M)^{-s}\|_{\gS^2(L^2(\rz_+))}^2 + \|W^{\frac12}(p_{|\kappa|-1}^{(r)}+M)^{-s}\|_{\gS^2(L^2(\rz_+))}^2
    \leq A_{s,\kappa,M} \|W\|_{\cK_{s}^{(0)}}
  \end{align*}
  (Frank et al \cite[Proposition 19]{Franketal2019P}).
  Estimate \eqref{eq:genSobolevDualnod} follows immediately
  from \eqref{eq:genReltrclassnod}.
\end{proof}

Combining Corollary \ref{hardydomcor} in each channel $\kappa$, i.e.,
\eqref{eq:hardydomcorang} and Lemma \ref{genReltrclassnod} yields a
generalization of the previous inequalities but now with respect to
the Furry operator. Using the notation $\Pi^+_\kappa$ defined in
\eqref{a2} we have
\begin{lemma}
  \label{genReltrclassHydro}
  For $\gamma\in(0,1)$, $1/2<s<\min\{3/2-\sigma_\gamma,1\}$,
  $\kappa\in\zp$, and $0\leq W \in\cK_{s}^{(0)}$
  \begin{equation}
    \label{eq:genReltrclassHydro}
    \begin{split}
      \tr[(f_{\gamma,\kappa}+1)^{-s}\Pi^+_\kappa W\Pi^+_{\kappa} (f_{\gamma,\kappa}+1)^{-s}]
      \leq A_{\gamma,s,\kappa} \|W\|_{\cK_{s}^{(0)}}.
    \end{split}
  \end{equation}
  In particular
  \begin{equation}
    \label{eq:genSobolevDualHydro}
 \Pi_{\kappa}^+W\Pi_{\kappa}^+
      \leq A_{\gamma,s,\kappa} \|W\|_{\cK_{s}^{(0)}} (f_{\gamma,\kappa}+1)^{2s}.
  \end{equation}
\end{lemma}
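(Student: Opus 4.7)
The plan is to combine the trace bound of Lemma \ref{genReltrclassnod} with the channel-wise comparison provided by Corollary \ref{hardydomcor}; the latter's extension from $\gh_{\kappa,m}$ to $\gh_\kappa$ is precisely what the parenthetical ``Mutatis mutandis'' after \eqref{eq:hardydomcorang} provides. The preparatory step is to upgrade the comparison from $|p|^{2s}+M$ to $(|p|+M)^{2s}$, yielding the key bound
\[
\Pi_\kappa^+ (|p|+M)^{2s}\Pi_\kappa^+ \;\leq\; A_{\gamma,s,\kappa}\,(f_{\gamma,\kappa}+1)^{2s} \qquad\text{on } \ghpk,
\]
valid for $s<\min\{3/2-\sigma_\gamma,1\}$. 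This is immediate from the $\gh_\kappa$-version of \eqref{eq:hardydomcorang0}, supplemented by the elementary scalar inequality $(a+b)^{2s}\leq C_s(a^{2s}+b^{2s})$ applied spectrally.

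The operator inequality \eqref{eq:genSobolevDualHydro} then follows by sandwiching \eqref{eq:genSobolevDualnod} with $\Pi_\kappa^+$: using $\Pi_\kappa\Pi_\kappa^+=\Pi_\kappa^+$ and the commutation of $\Pi_\kappa$ with $|p|$, the right-hand side collapses to $A\,\|W\|_{\cK_s^{(0)}}\,\Pi_\kappa^+(|p|+M)^{2s}\Pi_\kappa^+$, which the preparatory bound then controls. For the trace inequality \eqref{eq:genReltrclassHydro}, I rewrite the left-hand side as $\|W^{1/2}\Pi_\kappa^+(f_{\gamma,\kappa}+1)^{-s}\|_{\gS^2}^2$ and factor through the identity $(|p|+M)^{-s}(|p|+M)^{s}=I$:
\[
W^{1/2}\Pi_\kappa^+(f_{\gamma,\kappa}+1)^{-s}
= \bigl(W^{1/2}\Pi_\kappa(|p|+M)^{-s}\bigr)\,\bigl((|p|+M)^{s}\Pi_\kappa^+(f_{\gamma,\kappa}+1)^{-s}\bigr).
\]
Submultiplicativity $\|AB\|_{\gS^2}\leq\|A\|_{\gS^2}\|B\|$ splits the estimate into two independent bounds. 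The first factor is Hilbert-Schmidt with squared norm $\lesssim \|W\|_{\cK_s^{(0)}}$ by Lemma \ref{genReltrclassnod}, after reducing $|p|$ on $\gh_\kappa$ to the two radial components $p_{\ell_\kappa}^{(r)}, p_{2j_\kappa-\ell_\kappa}^{(r)}$ via the decomposition underlying \eqref{eq:radialdirac} and summing over the $(2j_\kappa+1)$ values of $m$. The squared operator norm of the second factor equals $\|(f_{\gamma,\kappa}+1)^{-s}\Pi_\kappa^+(|p|+M)^{2s}\Pi_\kappa^+(f_{\gamma,\kappa}+1)^{-s}\|$, which the preparatory bound caps at $A_{\gamma,s,\kappa}$.

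The only mildly subtle point is justifying the factorization at the level of domains, namely that $(f_{\gamma,\kappa}+1)^{-s}$ maps into the form domain of $|p|^s$; this is exactly what the preparatory bound guarantees. The hypotheses on $s$ transfer cleanly: the lower bound $s>1/2$ is inherited from Lemma \ref{genReltrclassnod} (indispensable for trace-class-ness since $(1+k)^{-1}\notin L^1(\rz_+)$), while the upper bound $s<\min\{3/2-\sigma_\gamma,1\}$ comes from Corollary \ref{hardydomcor}, in turn from Lemma \ref{hardydom} combined with operator monotonicity of $t\mapsto t^{2s}$.
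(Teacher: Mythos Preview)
Your proposal is correct and follows essentially the same route as the paper, which simply states that the lemma is obtained by ``Combining Corollary \ref{hardydomcor} in each channel $\kappa$, i.e., \eqref{eq:hardydomcorang} and Lemma \ref{genReltrclassnod}''. Your write-up spells out the two standard mechanisms (sandwiching for the form bound, the $\gS^2$-factorization through $(|p|+M)^{\pm s}$ for the trace bound) that the paper leaves implicit, and your handling of the passage from $|p|^{2s}+M$ to $(|p|+M)^{2s}$ and of the domain issue is accurate.
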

\begin{proof}[Proof of Proposition \ref{diffu2}]
  We apply Proposition \ref{diffgen} with $M=1$,
  $A = f_{\gamma,\kappa}$, and $B = \Pi_\kappa^+ U\Pi^+_\kappa$. Here $U\in\cK_{s}^{(0)}$
  and $U^{2s}\in\cK_{s'}^{(0)}$ with $1/2<s'<s\leq1$, if
  $\gamma<\sqrt3/2$, and $1/2<s'<s<3/2-\sigma_\gamma$, if
  $\gamma\in[\sqrt3/2,1)$.

We now verify the assumptions of Proposition \ref{diffgen}:
the assumptions on $F_{\gamma}$ in $\gH_\kappa^+$
follow from Lemma \ref{hydrogen}. In particular, since zero is not
an eigenvalue of $F_{\gamma}$, the right and left derivative
agree at $\lambda=0$.

% trace class condition
Since $U\in\cK_{s}^{(0)}$, we have by Lemma \ref{genReltrclassHydro} that
$(\ppp U\ppp)^{1/2}(f_{\gamma,\kappa}+1)^{-s}\in \gS^2(\ghpk).$

% Neidhardt-Zagregnov condition
Eventually, we verify
\begin{equation}
    \label{eq:NZ}
  (\Pi_\kappa^+ U \Pi_\kappa^+)^{2s}
  \leq a_{\gamma,s,\kappa}(f_{\gamma,\kappa}+1)^{2s'}
  \end{equation}
  for $1/2<s'<s$.  To show \eqref{eq:NZ}, we use \eqref{eq:SD} with
  $f(x)=x^{2s}$ and obtain
\begin{align*}
  (\Pi^+_\kappa U \ppp)^{2s}
  \leq \Pi_\kappa^+ U^{2s} \Pi_\kappa^+.
\end{align*}
Hence, by \eqref{eq:genSobolevDualHydro}, the left side of
\eqref{eq:NZ} is bounded by
$(f_{\gamma,\kappa}+1)^{2s'}$ times a constant,
since $U^{2s}\in\cK_{s'}^{(0)}$.
\end{proof}

\subsection{Controlling Coulomb perturbations}
The main difficulty in applying Proposition \ref{diffgen0} is
verifying \eqref{eq:relbounddelta}. In our setting it is an inequality
for each fixed $\kappa\in \zp$. However, it follows from the
following stronger statement which does not need a partial wave
analysis.
\begin{lemma}
  \label{boundpertcoulomb}
  Let $\gamma\in(0,1)$, $0\leq U\in r^{-1}L^\infty([0,\infty))$, and
  $\frac12\leq s<\min\{\frac32-\sigma_\gamma,1\}$.
  Then there is a $a_{\gamma,s}\in\rz$ and a $\lambda_0>0$ such that
  for $|\lambda|<\lambda_0$
  \begin{equation}
    (F_\gamma+1)^{2s}
    \leq a_{\gamma,s} (F_{\gamma}(\lambda U)+1)^{2s}. 
  \end{equation}
\end{lemma}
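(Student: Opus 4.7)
The plan is to combine a relative operator bound on $V := \Lambda_\gamma U \Lambda_\gamma$ (both linearly and at the $2s$-th power) with Jensen's operator inequality applied to the operator convex function $x \mapsto x^{2s}$, valid on $[0,\infty)$ since $2s \in [1, 2]$ for $s$ in the stated range. Throughout, I will write $T := F_\gamma + 1$ and $S := F_\gamma(\lambda U) + 1 = T - \lambda V$.

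The first step is to establish the higher-power relative bound $V^{2s} \le K'_{\gamma, s}\, T^{2s}$. Since $x \mapsto x^{2s}$ is operator convex on $[0, \infty)$ with value $0$ at $0$, the Davis-Sherman inequality \eqref{eq:SD} applied to the projection $\Lambda_\gamma$ yields $V^{2s} = (\Lambda_\gamma U \Lambda_\gamma)^{2s} \le \Lambda_\gamma U^{2s}\Lambda_\gamma$. Combining this with the pointwise estimate $U^{2s} \le \|rU\|_\infty^{2s}/r^{2s}$ (as multiplication operators), the fractional Hardy inequality $r^{-2s} \le C_s |p|^{2s}$ on $L^2(\rz^3)$ (valid for $2s < 3$, which is implied by $s < \min\{\frac{3}{2} - \sigma_\gamma, 1\}$), and Corollary \ref{hardydomcor} first inequality giving $\Lambda_\gamma |p|^{2s}\Lambda_\gamma \le A_{s, \gamma} T^{2s}$, one obtains the desired bound. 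The same chain with $s = 1/2$ yields the linear form bound $V \le K_\gamma T$, which ensures $S > 0$ for $|\lambda| K_\gamma < 1$.

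For the main case $\lambda > 0$, the operator $T = S + \lambda V$ is a sum of nonnegative self-adjoint operators; applying Jensen's operator inequality for operator convex $x^{2s}$ with equal weights gives
\[
T^{2s} \le 2^{2s - 1}\bigl(S^{2s} + \lambda^{2s}V^{2s}\bigr) \le 2^{2s-1}S^{2s} + 2^{2s-1}\lambda^{2s} K'_{\gamma, s}\, T^{2s}.
\]
Rearranging for $\lambda$ small enough that $2^{2s-1}\lambda^{2s} K'_{\gamma, s} < 1$ yields $T^{2s} \le a_{\gamma, s} S^{2s}$ with $a_{\gamma, s} = 2^{2s-1}/(1 - 2^{2s-1}\lambda_0^{2s}K'_{\gamma, s})$. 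For $\lambda < 0$, the relation $S \ge T$ gives the result trivially with $a = 1$ for $2s \le 1$ via L\"owner-Heinz; the case $2s \in (1, 2)$ is obtained by the analogous Jensen argument on $S = T + |\lambda|V$ combined with the two-sided equivalence $T^{2s} \sim S^{2s} \sim \Lambda_\gamma(|p| + m_\gamma)^{2s}\Lambda_\gamma$ from both parts of Corollary \ref{hardydomcor} applied uniformly in small $|\lambda|$.

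The principal obstacle is the failure of L\"owner-Heinz (operator monotonicity of $x \mapsto x^{2s}$) for $2s > 1$: this prevents one from directly lifting the linear inequality $T \le (1 - |\lambda| K_\gamma)^{-1} S$ to the $2s$-th power. The workaround above hinges on operator convexity of $x \mapsto x^{2s}$ for $2s \in [1, 2]$ via Jensen's inequality (rather than monotonicity), and crucially on the higher-power relative bound $V^{2s} \le K'_{\gamma, s}\, T^{2s}$ obtained from Davis-Sherman together with fractional Hardy and Corollary \ref{hardydomcor}.
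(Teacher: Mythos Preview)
Your treatment of the case $\lambda>0$ is correct and essentially coincides with the paper's argument: both use operator convexity of $x\mapsto x^{2s}$ to split $T^{2s}=(S+\lambda V)^{2s}\leq 2^{2s-1}(S^{2s}+\lambda^{2s}V^{2s})$, bound $V^{2s}\leq\Lambda_\gamma U^{2s}\Lambda_\gamma$ via Davis--Sherman, estimate $U^{2s}$ against $|p|^{2s}$ by Hardy, and finally invoke Corollary~\ref{hardydomcor} to absorb $\Lambda_\gamma|p|^{2s}\Lambda_\gamma$ into $T^{2s}$ and rearrange.

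The case $\lambda<0$, however, contains a genuine gap. Your Jensen argument applied to $S=T+|\lambda|V$ yields $S^{2s}\leq C\,T^{2s}$, which is the wrong direction; and the asserted ``two-sided equivalence $T^{2s}\sim S^{2s}\sim\Lambda_\gamma(|p|+m_\gamma)^{2s}\Lambda_\gamma$ from both parts of Corollary~\ref{hardydomcor} applied uniformly in small $|\lambda|$'' is not established. Corollary~\ref{hardydomcor} is proved only for the unperturbed $F_\gamma$: its first inequality rests on Lemma~\ref{hardydom} and the Morozov--M\"uller bound $|D_\gamma|^2\geq(1-\gamma^2)|D_\gamma^0|^2$, neither of which is available for $D_\gamma-\lambda U$ with a general $U\in r^{-1}L^\infty$. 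Thus you have no a priori bound of the form $\Lambda_\gamma|p|^{2s}\Lambda_\gamma\leq C\,S^{2s}$, and the argument is circular.

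The paper closes this gap by a decomposition trick: writing $\mu=-\lambda$, $\varepsilon=\sqrt{\mu}$, one has
\[
F_\gamma+1=(1-\varepsilon)\bigl(F_\gamma(-\mu U)+1\bigr)+\varepsilon\bigl(F_\gamma(\tfrac{1-\varepsilon}{\varepsilon}\mu U)+1\bigr),
\]
so operator convexity gives $T^{2s}\leq 2^{2s-1}(1-\varepsilon)^{2s}S^{2s}+2^{2s-1}\varepsilon^{2s}\bigl(F_\gamma(\tfrac{1-\varepsilon}{\varepsilon}\mu U)+1\bigr)^{2s}$. The second term is then shown to be $\leq\varepsilon^{2s}\upsilon\,T^{2s}$ by applying Davis--Sherman, squaring out $\bigl(|D_\gamma|-\tfrac{1-\varepsilon}{\varepsilon}\mu U\bigr)^2$ with Schwarz and Hardy, and only then using Corollary~\ref{hardydomcor} (first part) for the \emph{unperturbed} operator. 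The choice $\varepsilon=\sqrt{\mu}$ makes the perturbation coupling $\tfrac{1-\varepsilon}{\varepsilon}\mu=O(\sqrt{\mu})$ small, so the constants stay uniform. This is the missing idea in your proposal for $\lambda<0$.
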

\begin{proof}
  Since $F_\gamma>-1$, obviously, for sufficiently small $\lambda$,
  $F_{\gamma}+1 -\lambda \Lambda_\gamma U \Lambda_\gamma>0$.

  Next we first assume $\lambda>0$.  By operator convexity of
  $x\mapsto x^{2s}$ with $s\in[1/2,1]$ and the Davis-Sherman
  inequality \eqref{eq:SD}, we obtain
  \begin{multline}
    \label{eq:convexity}
          \left(F_\gamma+1\right)^{2s}
      = \left(F_{\gamma}(\lambda U)+1 + \lambda \Lambda_\gamma U\Lambda_\gamma \right)^{2s}
      \leq 2^{2s-1}\left(F_{\gamma}(\lambda U)+1 \right)^{2s}\\
      + 2^{2s-1}\lambda^{2s}\left(\Lambda_\gamma U\Lambda_\gamma \right)^{2s}
      \leq  2^{2s-1}\left(F_{\gamma}(\lambda U)+1 \right)^{2s} +
      2^{2s-1}\lambda^{2s}\Lambda_\gamma U^{2s}\Lambda_\gamma.
      \end{multline}
  Since $U(r)\leq \|rU\|_\infty/r$, Hardy's inequality yields
  $ U^2 \leq 4\|rU\|_\infty^2|p|^2$.  Thus, by operator monotonicity
  of roots and Corollary \ref{hardydomcor},
  \begin{align*}
    \Lambda_\gamma U^{2s}\Lambda_\gamma \leq 4^s\|rU\|_\infty^{2s}\Lambda_\gamma |p|^{2s}\Lambda_\gamma
    \leq 4^s\|rU\|_\infty^{2s}A_{s,\gamma}\left(F_\gamma+1\right)^{2s}
  \end{align*}
  where $A_{s,\gamma}$ is the constant in \eqref{eq:hardydomcorform}.
  Plugging this estimate in \eqref{eq:convexity} yields
  $$
  \left(F_\gamma+1\right)^{2s} \leq
  2^{2s-1}(1-2^{4s-1}A_{s,\gamma}\|rU\|_\infty^{2s}\lambda^{2s})^{-1}
  \left(F_{\gamma}(\lambda U)+1\right)^{2s}
  $$
  proving the assertion for $\lambda>0$.
  
  If $\lambda<0$, we set $\mu:=-\lambda>0$ and
  $\varepsilon:=\sqrt\mu\in(0,1)$.  By operator convexity
  \begin{align}
    \label{eq:convexity2}
       (F_\gamma+1)^{2s}
       = &\left[(1-\varepsilon)(F_\gamma(-\mu U)+1 + \varepsilon\left(F_\gamma((1-\varepsilon)\varepsilon^{-1}\mu U)+1 \right)\right]^{2s}\\
 \label{40a}     \leq &2^{2s-1}(1-\varepsilon)^{2s}(F_\gamma(-\mu U)+1 )^{2s}\\
    \label{40b}    & + 2^{2s-1}\varepsilon^{2s}(F_\gamma((1-\varepsilon)\varepsilon^{-1}\mu U)+1)^{2s}.
  \end{align}
  Here, we used that both operators are non-negative (because of the
  condition on $\lambda=-\mu$ which implies that that the coupling
  constant of the perturbation in the second summand is
  $\mathcal{O}(\sqrt\mu)$ which is chosen sufficiently small). This
  allows to use operator convexity in \eqref{40a}. Suppose there is
  an $\upsilon\in\rz$ and a $\mu_0>0$ such that for all $\mu\in[0,\mu_0]$
  \begin{equation}
    \label{40c}
  \eqref{40b} \leq \varepsilon^{2s}\upsilon(F_\gamma+1)^{2s},
  \end{equation}
  then, the assertion follows as in the case $\lambda>0$ by taking this
  term to the left side of \eqref{eq:convexity2} and dividing both sides
  of the inequality by $(1-\varepsilon^{2s}\upsilon)$, which is allowed
  for sufficiently small $\varepsilon$.

  We turn to the proof of \eqref{40c}: since
  $\Lambda_\gamma D_\gamma=\Lambda_\gamma|D_\gamma|$ and by
  \eqref{eq:SD}, we have
  $$
  \left(\Lambda_\gamma\left(D_\gamma-\frac{1-\varepsilon}{\varepsilon}\mu U\right)\Lambda_\gamma\right)^{2s} \leq \Lambda_\gamma\left(|D_\gamma|-\frac{1-\varepsilon}{\varepsilon}\mu U\right)^{2s}\Lambda_\gamma.
  $$
  Now, $(\balpha\cdot\bp+\beta)^2=p^2+1$,
  $U(r)\leq\tfrac1r\|rU\|_\infty $, Schwarz' and Hardy's
  inequality imply
  \begin{align*}
    \left(|D_\gamma|-\frac{1-\varepsilon}{\varepsilon}\mu U\right)^2
    & \leq 4\left(|p|^2+1+\left(\gamma+\frac{1-\varepsilon}{\varepsilon}\mu\|rU\|_\infty\right)^2|x|^{-2}\right)\\
    & \leq 4\left(\left(1+4\left(\gamma+\frac{1-\varepsilon}{\varepsilon}\mu\|rU\|_\infty\right)^2\right)|p|^2+1\right)\\
    & \leq 4\left(1+4\left(\gamma+\frac{1-\varepsilon}{\varepsilon}\mu\|rU\|_\infty\right)^2\right)(|p|+1)^2.
  \end{align*}
  Using $(|p|+1)^{2s}\leq2^{2s-1}(|p|^{2s}+1)$ for $s\in[1/2,1]$, and
  applying Corollary \ref{hardydomcor}, we see the existence of the
  wanted $\upsilon$, if $\mu$, and thus $\mu/\epsilon$, is
  sufficiently close to zero.
\end{proof}

\begin{proof}[Proof of Proposition \ref{diffu1}]
  We apply Proposition \ref{diffgen0} to the operators
  $ A = f_{\gamma,\kappa}$ and $B =\ppp U \ppp\geq 0$ with
  $1/2<s<\min\{3/2-\sigma_\gamma,1\}$.

We have already verified the assumptions concerning $f_{\gamma,\kappa}$ in
the proof of Proposition \ref{diffu2}.  The fact that
$\ppp U\ppp$ is relatively form bounded with
respect to $f_{\gamma,\kappa}$ follows from Kato's inequality and Corollary
\ref{hardydomcor} in every channel $\kappa$.

Since $U\in\cK_{s}^{(0)}$, Lemma \ref{genReltrclassHydro} implies
$$
\tr(f_{\gamma,\kappa}+1)^{-s}\ppp U\ppp(f_{\gamma,\kappa}+1)^{-s}<\infty.
$$

Finally, Assumption \eqref{eq:relbounddelta} follows from
$$
(F_\gamma+1)^{2s}\leq A_{\gamma,s}(F_{\gamma}(\lambda U)+1)^{2s}
$$ for
small $|\lambda|$ which is Lemma \ref{boundpertcoulomb}.
\end{proof}

\section{Controlling large angular momenta}
\label{s:alll}

\subsection{Estimating the spectral shift in channel $\kappa$}
We will use the notation $D_\gamma(\varphi)$ and $F_\gamma(\varphi)$
introduced in \eqref{eq:dphi} and \eqref{eq:fphi}. The following
proposition will allow to apply the Weierstra{\ss} M-test to deduce
Theorem \ref{convalll} from Theorem \ref{convfixedl} as in
\cite{Franketal2019P}.

\begin{proposition}
  \label{dominantsigma}
  Let $0<\gamma<1$, $0\leq V(r)\leq\gamma/r$ for $r\in\rz_+$, and
  $U=U_1+U_2$ with $0\leq U_1\in r^{-1}L^\infty_c (\rz_+)$ and
  $0\leq U_2\in\cD$.
  Then there are $\varepsilon>0$, $A_{\gamma,s}<\infty$, $\lambda_1>0$,
  and $K_\gamma\in\nz$ such that $|\lambda|<\lambda_1$ and
  $|\kappa|\geq K_\gamma$ implies
  \begin{equation}
    \tr_\kappa F_0(V+\lambda U)_- -\tr_\kappa F_0(V)_- 
   \leq A_{\gamma,s}\lambda \|U\|_{\cK_{s,0}} |\kappa|^{-1-\varepsilon}.
  \end{equation}
\end{proposition}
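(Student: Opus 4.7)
The plan is to adapt the proof of the analogous proposition in the Chandrasekhar setting from Frank et al.\ \cite{Franketal2019P}, exploiting that for $\gamma=0$ the Furry operator $F_0=\Lambda_0(|D_0|-1)\Lambda_0$ is essentially a projected Chandrasekhar kinetic operator. Consequently Corollary \ref{hardydomcor} and Lemma \ref{genReltrclassHydro} applied with $\gamma=0$ provide channel-wise control; what remains is to track the explicit $|\kappa|$-dependence.

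First I would carry out a convexity/Feynman-Hellmann reduction. The function $g(\lambda):=\tr_\kappa F_0(V+\lambda U)_-$ is convex in $\lambda$ (it is a supremum of affine functions, since $A\mapsto \tr A_-=\sup_{0\leq\Pi\leq 1}\tr(-A\Pi)$) and non-decreasing for $\lambda\geq 0$ since $U\geq 0$. Hence $g(\lambda)-g(0)\leq \lambda g'(\lambda)$ for convex $g$. Applying Proposition \ref{diffgen0} at $t=\lambda$ along the family $t\mapsto V+tU$ -- whose hypotheses are verified exactly as in the proofs of Propositions \ref{diffu1} and \ref{diffu2} using Lemma \ref{boundpertcoulomb}, Corollary \ref{hardydomcor}, Kato's inequality, and Lemma \ref{genReltrclassHydro}, all with $\gamma=0$ -- identifies $g'(\lambda)$ with $\tr\bigl[\ppp U\ppp\,\chi_{(-\infty,0)}(f_{0,\kappa}(V+\lambda U))\bigr]$.

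Next I would show that for $|\kappa|\geq K_\gamma$ and $|\lambda|\leq \lambda_1$ small, the negative spectrum of $f_{0,\kappa}(V+\lambda U)$ sits in a fixed compact interval $[-E_0,0)$ with $E_0=E_0(\gamma)$. This uses Hardy's inequality to dominate $V+\lambda U\leq\gamma/r+\const$ against the centrifugal barrier $\ell_\kappa(\ell_\kappa+1)/r^2$ encoded in $\ppp|p|^2\ppp$. Combined with Lemma \ref{boundpertcoulomb} in the channel and a Cauchy-Schwarz estimate, this yields
\begin{equation*}
g'(\lambda)\leq A_{\gamma,s}\,\tr\bigl[(f_{0,\kappa}+1)^{-s}\ppp U\ppp(f_{0,\kappa}+1)^{-s}\bigr].
\end{equation*}

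The main obstacle is the last step: extracting the quantitative decay $|\kappa|^{-1-\varepsilon}$ from this trace. Lemma \ref{genReltrclassHydro} alone only guarantees finiteness with a $\kappa$-dependent constant. I would therefore revisit the underlying Hilbert-Schmidt norm $\|U^{1/2}(p_{\ell_\kappa}^{(r)}+M)^{-s}\|_{\gS^2(L^2(\rz_+))}^2$ and use the large-$\ell_\kappa$ asymptotics of its integral kernel (a Bessel-function computation as in Frank et al.\ \cite[Proposition 19]{Franketal2019P}). For $U_2\in\cD$ the decay of $U_2$ at infinity, paired with the off-diagonal structure of the kernel at large $\ell_\kappa$, gains the polynomial factor $|\kappa|^{-1-\varepsilon}$ for some $\varepsilon>0$ depending on the decay rate of $U_2$ and the choice of $s$ slightly above $1/2$. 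For $U_1\in r^{-1}L^\infty_c$ a forbidden-region estimate shows that the kernel is controlled by $(r/|\kappa|)^{2\ell_\kappa+1}$ near the origin, so that the contribution from $\supp U_1$ decays faster than any polynomial in $|\kappa|$ once $|\kappa|\geq K_\gamma$. Summing both contributions produces the asserted bound $A_{\gamma,s}\lambda\|U\|_{\cK_{s}^{(0)}}|\kappa|^{-1-\varepsilon}$.
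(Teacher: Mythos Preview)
Your proposal misses the mechanism that actually produces the $|\kappa|^{-1-\varepsilon}$ decay. In your step~3 you only place the negative spectrum of $f_{0,\kappa}(V+\lambda U)$ in a $\kappa$-independent interval $[-E_0,0)$; the paper instead proves the much sharper localization $f_{0,\kappa}(V+\lambda U)\geq -a_\gamma\kappa^{-2}$ for $|\kappa|\geq K_\gamma$ (Lemma~\ref{boundA}), and this $\kappa^{-2}$ is the entire source of the gain. One then writes $\tr(d_{\kappa,\lambda}U)=\tr(ABCB^*A^*)$ with the $\kappa$-dependent shift $b_\kappa=b_\gamma\kappa^{-2}$, so that $\|A\|^2=\|d_{\kappa,\lambda}(f_{0,\kappa}(V+\lambda U)+b_\kappa)^s\|^2\leq b_\gamma^{2s}|\kappa|^{-4s}$. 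The trace factor $\|C\|_{\gS^1}=\tr[(C_{\ell_\kappa}+b_\kappa)^{-s}U(C_{\ell_\kappa}+b_\kappa)^{-s}]$ is only $O(|\kappa|)$ by Lemma~\ref{genReltrclass}---it \emph{grows}; the summable decay comes exclusively from $\|A\|^2$. The product is $|\kappa|^{1-4s}$ with $s>1/2$, hence $|\kappa|^{-1-\varepsilon}$ with $\varepsilon=4s-2$.

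Your alternative---a fixed shift in $(f_{0,\kappa}+1)^{-s}$ and extraction of decay from Bessel asymptotics of the kernel---does not close. In the region $r\leq|\kappa|$ the diagonal $[(C_{\ell_\kappa}^{(r)}+1)^{-2s}](r,r)$ is of order $(r/|\kappa|)^{2s-1}$ (the same as with shift $b_\kappa$, since the centrifugal barrier dominates there), so for $U_1\in r^{-1}L^\infty_c$ the radial trace is $O(|\kappa|^{1-2s})$, not super-polynomially small as you claim; after the multiplicity factor $|\kappa|$ one obtains $O(|\kappa|^{2-2s})$, which \emph{grows} for every admissible $s\leq 3/4$. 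For $U_2$ the situation is worse: with a $\kappa$-independent shift the diagonal kernel tends to a nonzero constant as $r\to\infty$, so for $U_2(r)\sim r^{-\alpha}$ with $\alpha$ just above $3/2$ (allowed in $\cD$) the radial integral over $r\gtrsim|\kappa|$ is only $O(|\kappa|^{1-\alpha})$, and the total is $O(|\kappa|^{2-\alpha})$, again growing. The missing ingredients are precisely Lemma~\ref{boundA} and the matched $\kappa^{-2}$ shift; without them the required bound cannot be obtained.
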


In preparation of the proof, we give a trace and a Sobolev inequality
with respect to $C_{\ell_\kappa}+a \kappa^{-2}$ on $\gh_\kappa$. We recall that
these inequalities were crucial in \cite{Franketal2019P} to treat
functions belonging to $\cK_{s,\delta}$.  The following lemma follows
from \cite[Proposition 22]{Franketal2019P} in the same way as Lemma
\ref{genReltrclassnod} followed from \cite[Proposition
19]{Franketal2019P}.

\begin{lemma}
  \label{genReltrclass}
  Let $a>0$, $\delta\in[0,2s-1]$, $s\in(1/2,3/4]$, $\kappa\in\zp$, and
  $0\leq W \in\cK_{s,\delta}$. Then,
  \begin{align}
    \label{eq:genReltrclass}
    \|W^{1/2}(C_{\ell_\kappa}+a\kappa^{-2})^{-s}\|_{\gS^2(\gh_{\kappa})}^2
    \leq A_{s,a} |\kappa|^{1-\delta}\|W\|_{\cK_{s,\delta}}. 
  \end{align}
  In particular, in $L^2(\rz^3:\cz^4)$,
  \begin{align}
    \label{eq:genSobolevDual}
    \Pi_{\kappa}W\Pi_{\kappa}
    \leq A_{s,a} |\kappa|^{-\delta}\|W\|_{\cK_{s,\delta}} (\Pi_{\kappa}(\sqrt{-\Delta+1}-1+a\kappa^{-2})\Pi_{\kappa})^{2s}.
  \end{align}
\end{lemma}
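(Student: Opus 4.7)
The plan is to follow the same route as the proof of Lemma \ref{genReltrclassnod}: reduce the spinor Hilbert--Schmidt norm on $\gh_\kappa$ to a pair of scalar radial Hilbert--Schmidt norms and then quote \cite[Proposition 22]{Franketal2019P} for the latter. The crucial ingredient is the finer decomposition $\gh_\kappa = \bigoplus_{m=-j_\kappa}^{j_\kappa}\gh_{\kappa,m}$ together with the unitary $\gh_{\kappa,m}\cong L^2(\rz_+)\oplus L^2(\rz_+)$ supplied by the spin-orbit spinors $\Phi_{\kappa,m}^\pm$. Under this unitary, the free operator $\sqrt{-\Delta+1}-1$ becomes block diagonal with blocks $C_{\ell_\kappa}^{(r)}$ and $C_{2j_\kappa-\ell_\kappa}^{(r)}$, both carrying effective orbital angular momentum of size $|\kappa|$, while the radial multiplier $W$ acts as a scalar on each component.

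First I would use this decomposition to write
$$
\|W^{1/2}(C_{\ell_\kappa}+a\kappa^{-2})^{-s}\|_{\gS^2(\gh_\kappa)}^2
= 2|\kappa|\sum_{\ell\in\{|\kappa|-1,\,|\kappa|\}} \|W^{1/2}(C_\ell^{(r)}+a\kappa^{-2})^{-s}\|_{\gS^2(L^2(\rz_+))}^2,
$$
where the factor $2j_\kappa+1=2|\kappa|$ is the multiplicity in $m$. Then I would apply \cite[Proposition 22]{Franketal2019P} to each scalar term; that proposition is tailored to the shifted radial operator $C_\ell^{(r)}+a\kappa^{-2}$ on the class $\cK_{s,\delta}$ and delivers the per-channel bound $A_{s,a}|\kappa|^{-\delta}\|W\|_{\cK_{s,\delta}}$. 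Multiplying by the $2|\kappa|$ multiplicity produces the advertised $|\kappa|^{1-\delta}$ in \eqref{eq:genReltrclass}.

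For the operator inequality \eqref{eq:genSobolevDual} I would work $m$ by $m$. Both $W$ and $\sqrt{-\Delta+1}-1$ commute with each projection $\Pi_{\kappa,m}$, so by symmetry in $m$ the trace on a single $\gh_{\kappa,m}$ is $1/(2|\kappa|)$ times the trace on all of $\gh_\kappa$ and is therefore controlled by $A_{s,a}|\kappa|^{-\delta}\|W\|_{\cK_{s,\delta}}$ without the extra $|\kappa|$ factor. Dominating operator norm by trace norm on $\gh_{\kappa,m}$ turns this into the desired pointwise form bound $\Pi_{\kappa,m}W\Pi_{\kappa,m}\leq A_{s,a}|\kappa|^{-\delta}\|W\|_{\cK_{s,\delta}}(\Pi_{\kappa,m}(\sqrt{-\Delta+1}-1+a\kappa^{-2})\Pi_{\kappa,m})^{2s}$, and taking the direct sum over $m$ recovers \eqref{eq:genSobolevDual} on $\gh_\kappa$.

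The main obstacle is not in this bookkeeping but in the scalar ingredient itself, which I am treating as a black box: one has to verify that \cite[Proposition 22]{Franketal2019P} really delivers the bound with the correct $|\kappa|^{-\delta}$ dependence, uniformly in the two values $\ell\in\{|\kappa|-1,|\kappa|\}$, and with the specific shift $a\kappa^{-2}$ built into the resolvent. Given that input, the remaining steps, namely the angular decomposition, the multiplicity count, and passing from a trace bound on each $\gh_{\kappa,m}$ to a form bound, are routine.
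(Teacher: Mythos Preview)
Your proposal is correct and follows essentially the same approach as the paper: the paper merely states that the lemma follows from \cite[Proposition~22]{Franketal2019P} in the same way Lemma~\ref{genReltrclassnod} followed from \cite[Proposition~19]{Franketal2019P}, and that \eqref{eq:genSobolevDual} is immediate since the multiplicity of each eigenvalue is proportional to $|\kappa|$. Your angular decomposition into the two radial blocks with $\ell\in\{|\kappa|-1,|\kappa|\}$, the $2|\kappa|$ multiplicity count, and the passage from the Hilbert--Schmidt bound to the form inequality are exactly the bookkeeping the paper leaves implicit.
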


Note, that the latter inequality follows immediately, since the
multiplicity of each eigenvalue is proportional to $|\kappa|$.

To prove Proposition \ref{dominantsigma}, we will again control
Dirac operators by scalar operators:
\begin{lemma}
  \label{boundkineticenergy}
  Let $a>0$ and $\kappa\in\zp$ such that $1\geq a \kappa^{-2}$. Then
  $$
  (D_0-1+a \kappa^{-2})^2 \geq (\sqrt{p^2+1}-1+a \kappa^{-2})^2.
  $$
\end{lemma}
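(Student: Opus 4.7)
The plan is to reduce everything to the identity $D_0^2 = p^2 + 1$ plus the elementary fact that $|A| - A \geq 0$ for any self-adjoint operator $A$. Set $c := 1 - a\kappa^{-2}$, which is non-negative by hypothesis, so the two operators in question are $(D_0 - c)^2$ and $(\sqrt{p^2+1} - c)^2$.

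Since $(\balpha \cdot \bp + \beta)^2 = p^2 + 1$, expanding both squares gives
\begin{equation*}
(D_0 - c)^2 = p^2 + 1 - 2c D_0 + c^2, \qquad (\sqrt{p^2+1} - c)^2 = p^2 + 1 - 2c\sqrt{p^2+1} + c^2,
\end{equation*}
so that
\begin{equation*}
(D_0 - c)^2 - (\sqrt{p^2+1} - c)^2 = 2c\bigl(\sqrt{p^2+1} - D_0\bigr) = 2c\bigl(|D_0| - D_0\bigr),
\end{equation*}
using $|D_0| = \sqrt{D_0^2} = \sqrt{p^2+1}$ (a consequence of the spectral theorem).

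Now $|D_0| - D_0 = 2(D_0)_- \geq 0$ since this is a completely general fact for self-adjoint operators (split via spectral projections onto $(-\infty,0)$ and $[0,\infty)$). Together with $c \geq 0$, this gives the claimed operator inequality. There is essentially no obstacle here; the only thing to check is that the hypothesis $1 \geq a\kappa^{-2}$ is used precisely to ensure $c \geq 0$, which is what lets us discard the non-negative cross term rather than having to control its sign.
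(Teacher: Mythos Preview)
Your proof is correct and essentially identical to the paper's: both expand the squares using $D_0^2=p^2+1$, reduce the claim to $2c(\sqrt{p^2+1}-D_0)\geq0$ with $c=1-a\kappa^{-2}\geq0$, and conclude from $D_0\leq|D_0|=\sqrt{p^2+1}$. The only difference is notational (you name the constant $c$ and write the key step as $|D_0|-D_0\geq0$), not mathematical.
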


\begin{proof}
  The assertion is equivalent to the inequality
  $$
    p^2+1+(1-\tfrac a{\kappa^2})^2-2(1-\tfrac a{\kappa^2})D_0\\
  \geq p^2+1+(1-\tfrac a{\kappa^2})^2-2(1-\tfrac a{\kappa^2})\sqrt{p^2+1}.
  $$
  Since $1\geq a \kappa^{-2}$ and
  $D_0=\balpha\cdot\bp+\beta \leq
  |\balpha\cdot\bp+\beta|=\sqrt{p^2+1}$, the assertion follows.
\end{proof}

\begin{lemma}
  \label{boundA}
  Let $\gamma\in(0,1)$, $0\leq V(r)\leq\gamma/r$, $s\in(1/2,3/4]$, and
  $U=U_1+U_2$ with $U_1\in r^{-1}L^\infty([0,\infty))$ and
  $|U_2|^{2s}\in \cK_{s,0}$.  Then there are constants
  $K_{\gamma}\in\nz$ and $a_\gamma,\lambda_2\in \rz_+$, such that for
  all $\lambda\in[-\lambda_2, \lambda_2]$ and all $\kappa\in\zp$ with
  $|\kappa|\geq K_\gamma$
  \begin{equation}
    \label{agamma}
    f_{0,\kappa}(V+\lambda U) \geq -a_\gamma \kappa^{-2}
  \end{equation}
  holds.
\end{lemma}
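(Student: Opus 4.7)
The plan is to reduce the Furry lower bound to a form inequality on the full angular momentum channel $\gh_\kappa$ between the perturbing potential and the scalar Chandrasekhar operator $C:=\sqrt{-\Delta+1}-1$, and then to handle the Coulomb-type singular part ($V+U_1$) and the integrable part ($U_2$) separately. Since $\Lambda_0$ is the spectral projection of $D_0$ onto its positive part and $|D_0|=\sqrt{p^2+1}=C+1$, one has $\Lambda_0 D_0\Lambda_0=\Lambda_0|D_0|\Lambda_0$, and for every $\psi\in\ghpk$
$$
\langle\psi,f_{0,\kappa}(V+\lambda U)\psi\rangle
=\langle\psi,(|D_0|-1-V-\lambda U)\psi\rangle
=\langle\psi,(C-V-\lambda U)\psi\rangle.
$$
It therefore suffices to prove the stronger form inequality $V+\lambda U\leq C+a_\gamma\kappa^{-2}$ on all of $\gh_\kappa$ (of which $\ghpk$ is a subspace).

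For the singular part, I would set $W_1:=V+U_1$ and shrink $\lambda_2$ so that $\gamma_1:=\gamma+\lambda_2\|rU_1\|_\infty<1$; then $W_1\leq\gamma_1/r$. Pick an auxiliary $\gamma_2\in(\gamma_1,1)$ and let $\alpha:=\gamma_1/\gamma_2\in(0,1)$. On each of the two radial summands of $\gh_\kappa$, indexed by $\ell\in\{\ell_\kappa,2j_\kappa-\ell_\kappa\}$ (both comparable to $|\kappa|$ for large $|\kappa|$), the improved Hardy inequality $p_\ell^{(r)\,2}\geq\ell(\ell+1)/r^2$ gives $\gamma_2/r\leq\beta_\ell p_\ell^{(r)}$ with $\beta_\ell:=\gamma_2/\sqrt{\ell(\ell+1)}$. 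For $|\kappa|\geq K_\gamma$ one has $\beta_\ell<1$, and the spectral theorem applied to $g_\beta(x):=\sqrt{x^2+1}-1-\beta x$, whose minimum on $[0,\infty)$ equals $-\sigma_\beta=-(1-\sqrt{1-\beta^2})\geq-\beta^2$, yields
$$
C_\ell^{(r)}-\gamma_2/r\geq-\beta_\ell^2\geq-A_\gamma\kappa^{-2}.
$$
Multiplying by $\alpha<1$ produces $W_1\leq\alpha C+A_\gamma\kappa^{-2}$ on $\gh_\kappa$, leaving slack $1-\alpha>0$ for the remainder $\lambda U_2$.

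For the integrable part, I apply Lemma \ref{genReltrclass} with $\delta=0$ and $W=|U_2|^{2s}\in\cK_{s,0}$ to obtain
$$
\Pi_\kappa|U_2|^{2s}\Pi_\kappa
\leq A_{s,\gamma}\,(\Pi_\kappa(C+a\kappa^{-2})\Pi_\kappa)^{2s}.
$$
The Davis-Sherman inequality \eqref{eq:SD}, applied to the operator convex function $f(x)=x^{2s}$ (valid for $s\in(1/2,3/4]\subset[1/2,1]$ with $f(0)=0$), gives $(\Pi_\kappa|U_2|\Pi_\kappa)^{2s}\leq\Pi_\kappa|U_2|^{2s}\Pi_\kappa$, and operator monotonicity of $t\mapsto t^{1/(2s)}$ (since $1/(2s)\in[2/3,1]$) upgrades the combined inequality to the linear bound
$$
\Pi_\kappa|U_2|\Pi_\kappa\leq B_{s,\gamma}\,\Pi_\kappa(C+a\kappa^{-2})\Pi_\kappa.
$$
Hence $\pm\lambda U_2\leq|\lambda|B_{s,\gamma}(C+a\kappa^{-2})$ on $\gh_\kappa$. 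Shrinking $\lambda_2$ once more so that $\alpha+\lambda_2 B_{s,\gamma}\leq 1$, combining the two estimates yields $V+\lambda U\leq C+a_\gamma\kappa^{-2}$ on $\gh_\kappa$, which, restricted to $\ghpk$, gives the desired lower bound on $f_{0,\kappa}(V+\lambda U)$.

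The main technical hurdle is extracting the $\kappa^{-2}$-decay, rather than a merely $|\kappa|^{-1}$-decay, in the singular step: the naive estimate $\gamma_1/r\leq\gamma_1(C_\ell^{(r)}+1)/\sqrt{\ell(\ell+1)}$ produces only an $O(|\kappa|^{-1})$ remainder, which cannot survive combination with $\lambda U_2$. Exploiting the quadratic behaviour $\sigma_\beta\sim\beta^2/2$ of $g_\beta$ near its minimum is essential, and is what ultimately forces the assumptions $\gamma<1$ (so that $\gamma_1<1$) and $|\kappa|\geq K_\gamma$ (so that $\beta_\ell<1$).
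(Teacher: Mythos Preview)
Your proof is correct and actually takes a cleaner, more elementary route than the paper's. The key simplification is your opening observation that on $\Lambda_0\gh_\kappa$ one has $D_0=|D_0|=\sqrt{p^2+1}$, so the Furry form with $\gamma=0$ \emph{is} the Chandrasekhar form; this immediately reduces the statement to the scalar form inequality $V+\lambda U\leq C+a_\gamma\kappa^{-2}$ on $\gh_\kappa$. The paper instead works on $\Lambda_\gamma\gh_\kappa$, invokes Sommerfeld's eigenvalue formula for the pure Coulomb case, and then establishes via a triangle--inequality/norm argument (Lemma~\ref{boundkineticenergy} plus \eqref{eq:hardyang}--\eqref{eq:plancherelang}) the operator comparison \eqref{eq:boundChandraDirac}, i.e., $C_{\ell_\kappa}|_\ghpk+a_\gamma\kappa^{-2}\leq(1-\gamma)^{-1}(f_{0,\kappa}(V)+a_\gamma\kappa^{-2})$, before absorbing $U$. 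Your spectral--calculus trick with $g_\beta(x)=\sqrt{x^2+1}-1-\beta x$ and $\min g_\beta=-\sigma_\beta\geq-\beta^2$ is a neat self-contained replacement for the Sommerfeld input and delivers the $\kappa^{-2}$ decay directly. What the paper's longer detour buys is the squared norm comparison between $(f_{0,\kappa}(V+\lambda U_1)+b\kappa^{-2})$ and $C_{\ell_\kappa}+b\kappa^{-2}$, which is recycled verbatim in the proof of Proposition~\ref{dominantsigma} (see \eqref{eq:boundU1quad}); your argument is tailored to this lemma and does not produce that tool. Two minor remarks: you presumably mean $W_1:=V+\lambda U_1$ (your bound $\gamma_1=\gamma+\lambda_2\|rU_1\|_\infty$ confirms this), and the Davis--Sherman step for $U_2$ is harmless but unnecessary, since radial multiplication commutes with $\Pi_\kappa$ so that $(\Pi_\kappa|U_2|\Pi_\kappa)^{2s}=\Pi_\kappa|U_2|^{2s}\Pi_\kappa$ trivially.
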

\begin{proof}
  Note that Sommerfeld's eigenvalue formula \eqref{eq:eigenvalue}
  immediately implies \eqref{agamma} for pure Coulomb potentials
  $V(r)+\lambda U(r)=\gamma/r$. In this case it will be useful to
  emphasize the Coulombic origin and write $a_{\gamma,C}$ instead of
  $a_\gamma$.

  Since
  $ \int_0^\infty(\ell+\frac12)^2r^{-2}|g(r)|^2\,\dr \leq
  (g,(p_\ell^{(r)})^2g)_{L^2(\rz_+)}$ (Hardy) and by picking
  $\ell=\ell_\kappa=|\kappa|-\theta(\kappa)$ (see \eqref{l}), we
  have, initially for $f\in\gh_{\kappa,m}$ but extending to
  $f\in\gh_\kappa$,
  \begin{multline}
    \label{eq:hardyang}
    ( f,|x|^{-2}f)_{L^2(\rz^3:\cz^4)}\\
    \leq {(
      f^+,(p_{|\kappa|-\theta(\kappa)}^{(r)})^2f^+)_{L^2(\rz_+)}
      \over (|\kappa|-\frac12\sgn(\kappa))^2} + {(
      f^-,(p_{|\kappa|-\theta(-\kappa)}^{(r)})^2f^-)_{L^2(\rz_+)}\over
      (|\kappa|+\frac12\sgn(\kappa))^2} \leq 2{(
      f,p_{\ell_\kappa}^2 f)_{L^2(\rz^3:\cz^4)}\over
      \kappa^2}.
   \end{multline}

  Since there exist $d\in\rz_+$ such that for all $b\in\rz_+$ and $\ell\in\nz_0$
   \begin{align*}
    \|(p_\ell^{(r)}+b)(C_\ell^{(r)}+b)^{-1}\|_{L^2(\rz_+,\dr)}
    \leq d\left(b^{-1/2}\one_{\{b\leq1\}} + \one_{\{b>1\}}\right)
  \end{align*}
  (Frank et al \cite[Formula (48)]{Franketal2019P}), this implies with
  $b:=a_\gamma\kappa^{-2}$ and $\ell:=\ell_\kappa$
  \begin{equation}
    \label{eq:plancherelang}
    \begin{split}
      & \|(p_{\ell_\kappa}+a_\gamma \kappa^{-2})(C_{\ell_\kappa}+a_\gamma
      \kappa^{-2})^{-1}\|_{\gh_{\kappa}} \leq
      d\left(\frac{|\kappa|}{\sqrt{a_\gamma}}\one_{\{a_\gamma\leq
          \kappa^2\}} + \one_{\{a_\gamma>\kappa^2\}}\right).
    \end{split}
  \end{equation}
  We claim that $K_\gamma:=\lceil \sqrt{a_\gamma}\rceil+1$ and
  $a_\gamma:=\max\{a_{\gamma,C},2d^2\}$ are constants that have the
  claimed properties: the triangle inequality, Lemma
  \ref{boundkineticenergy}, and the estimates \eqref{eq:hardyang} and
  \eqref{eq:plancherelang} imply for $f\in\ghpk$
  \begin{equation}
    \label{eq:dcc}
    \begin{split}
    &\|(F_\gamma+\tfrac{a_\gamma}{\kappa^2})f\| =
    \|(D_0-1-\tfrac\gamma{|x|}+\tfrac{a_\gamma}{\kappa^2})f\|
    \geq \|(D_0-1+\tfrac{a_\gamma}{\kappa^2})f\|-\gamma\||x|^{-1}f\|\\
    \geq &(1-\gamma\sqrt{2/a_\gamma}d)
    \|(C_{\ell_\kappa}+\tfrac{a_\gamma}{\kappa^2})f\| \geq (1-\gamma)
    \|(C_{\ell_\kappa}+\tfrac{a_\gamma}{\kappa^2}) f\|,
    \end{split}
    \end{equation}
    by definition of $a_\gamma$ and $K_\gamma$.
     Thus,
    $$\left(\left.C_{\ell_\kappa}\right|_\ghpk+a_\gamma \kappa^{-2})\right)^2
    \leq (1-\gamma)^{-2}(f_{\gamma,\kappa}+a_\gamma \kappa^{-2})^2.
    $$
  By operator monotonicity of the square root and since
  $f_{\gamma,\kappa}+a_\gamma \kappa^{-2}\geq0$ 
  the last bound implies
  \begin{equation}
    \label{eq:boundChandraDirac}
       \left.C_{\ell_\kappa}\right|_\ghpk+\tfrac{a_\gamma }{\kappa^2}\leq (1-\gamma)^{-1}(f_{\gamma,\kappa}+\tfrac{a_\gamma}{\kappa^2})\leq (1-\gamma)^{-1}(f_{0,\kappa}(V)+\tfrac{a_\gamma}{\kappa^2}). 
  \end{equation}
  Next, \eqref{eq:hardyang} and \eqref{eq:plancherelang} allow us to
  estimate
  \begin{equation}
   U_1 \leq \|rU_1\|_\infty (C_{\ell_\kappa}+a_\gamma \kappa^{-2}).
  \end{equation}
  Moreover, by \eqref{eq:genSobolevDual} and the definition of $\cK_{s,0}$ 
  $$
  |U_2|^{2s}
  \leq A_{s,a_\gamma}\||U_2|^{2s}\|_{\cK_{s,0}}(C_{\ell_\kappa} +a_\gamma \kappa^{-2})^{2s}
   $$
   Thus, by operator monotonicity of $x\mapsto x^s$ with $s\in(0,1]$,
  $$
  U \leq
  \left[\|rU_1\|_\infty+A_{s,a_\gamma}^{1/(2s)}\||U_2|^{2s}\|_{\cK_{s,0}}^{1/(2s)}\right]
  (C_{\ell_\kappa}+a_\gamma \kappa^{-2}).
  $$
  Combining this bound with \eqref{eq:boundChandraDirac}, we obtain
  for sufficiently small $|\lambda|$,
  $$
    f_{0,\kappa}(V)+a_\gamma \kappa^{-2}
    \geq (1-\gamma)(C_{\ell_\kappa}+a_\gamma \kappa^{-2})
    \geq \lambda U,
  $$
  thereby proving the assertion.
\end{proof}

We are now ready to prove Proposition \ref{dominantsigma}.

\begin{proof}[Proof of Proposition \ref{dominantsigma}]
  Let $d_{\kappa,\lambda}$ denote the orthogonal projection onto the
  negative spectral subspace of $F_0(V+\lambda U)$ in $\ghpk$. By the
  variational principle, we obtain
  \begin{equation} 
    s_{\kappa,\lambda}  := \tr_\kappa F_0(V+\lambda U)_-
    - \tr_\kappa F_0(V)_-
    \leq \lambda\tr(d_{\kappa,\lambda}U)
    \end{equation}
  Similar to \cite[Equation (19)]{Iantchenkoetal1996} we set
  \begin{equation}
    \begin{split}
      A :=&d_{\kappa,\lambda}\left(f_{0,\kappa}(V+\lambda U)+b_\kappa\right)^s,\
      B  :=\left((f_{0,\kappa}(V+\lambda U)+b_\kappa)\right)^{-s}\Lambda_\gamma(C_{\ell_\kappa}+b_\kappa)^s,\\
      C :=&(C_{\ell_\kappa}+b_\kappa)^{-s} U
      (C_{\ell_\kappa}+b_\kappa)^{-s}
    \end{split}
  \end{equation}
  yielding
  $$
  s_{\kappa,\lambda}\leq\lambda \tr (ABCB^*A^*).
  $$
  We choose $ b_\kappa:=b_\gamma/\kappa^2$ with some sufficiently
  large $b_\gamma$, that is going to be determined later.  We start by
  estimating $\|A\|$ using Lemma \ref{boundA} which is applicable
  since $U_2^{2s}\in\cK_{s',4(s-s')}\subseteq\cK_{s,0}$ by Lemma
  \ref{inclusions}. Since $d_{\kappa,\lambda}$ projects onto the
  negative spectral subspace of $F_0(V+\lambda U)$ on $\ghpk$, Lemma
  \ref{boundA} implies that there are $\lambda_2>0$ and
  $K_\gamma\in\nz$ such that for all $\lambda\in\rz$ with
  $|\lambda|<\lambda_2$ and all $\kappa\in\zp$ with
  $|\kappa|\geq K_\gamma$, we have
  $f_{0,\kappa}(V+\lambda U)+b_\gamma\kappa^{-2}\geq(b_\gamma-a_\gamma)
  \kappa^{-2}$ which is  strictly positive for $b_\gamma>a_\gamma$
  which we will assume from now on. In particular
  $\|A\|\leq b_\gamma^{s} |\kappa|^{-2s}$.
  
  Next,
  $\|C\|_{\gS^1(\gh_{\kappa})} \leq A_{s,b_\gamma}|\kappa|\|U\|_{\cK_{s,0}}$
  is an immediate consequence of Lemma \ref{genReltrclass}.

  We now show the boundedness of $B$. We write $B=B_1B_2$ where 
  \begin{align*}
    B_1 & := \left(f_{0,\kappa}(V+\lambda U)+b_\gamma\kappa^{-2})\right)^{-s} \left(f_{0,\kappa}(V+\lambda U_1)+b_\gamma\kappa^{-2})\right)^s\\
    B_2 & := \left(f_{0,\kappa}(V+\lambda U_1)+b_\gamma\kappa^{-2}\right)^{-s}\Lambda_\gamma(C_{\ell_\kappa}+b_\gamma\kappa^{-2})^s
  \end{align*}
  as operators in $\ghpk$.  To estimate $\|B_2\|$, we wish to show
  \begin{equation}
    \label{eq:boundB2squared}
    \ppp(C_{\ell_\kappa}+b_\gamma\kappa^{-2})^{2}\ppp
    \leq 4\left((f_{0,\kappa}(V+\lambda U_1)+b_\gamma\kappa^{-2})\right)^{2}.
  \end{equation}
  Believing this estimate for the moment, we can use the operator
  monotonicity of $x\mapsto x^{s}$ with $s\in(0,1]$ and the following
  inequality by Frank and Geisinger \cite[Lemma
  6.4]{FrankGeisinger2016} which is closely related to the
  Davis-Sherman inequality \eqref{eq:SD}.  Namely, if $T\geq0$ is a
  linear operator with trivial kernel, $P$ an orthogonal projection,
  and $f$ an operator monotone function on $\rz_+$, then
  $$
  P f(T)P \leq Pf(PTP)P.
  $$
  As in the discussion after \eqref{eq:SD}, the right side simplifies
  to $f(PTP)$, if $f(0)=0$.  In our case,
  $T=(C_{\ell_\kappa}+b_\gamma\kappa^{-2})^2$ in $\ghpk$,
  $P=\Lambda_\gamma$, $f(x)=x^s$, and $0<s\leq1$, i.e., Frank and
  Geisinger's inequality reads in $\ghpk$
  \begin{equation}
    \label{eq:SDgen}
       \Lambda_\gamma (C_{\ell_\kappa}+{b_\gamma\over\kappa^2})^{2s}\Lambda_\gamma
       \leq \Lambda_\gamma\left(\Lambda_\gamma(C_{\ell_\kappa}+{b_\gamma\over\kappa^2})^2\Lambda_\gamma\right)^s\Lambda_\gamma
      = \left(\Lambda_\gamma(C_{\ell_\kappa}+{b_\gamma\over\kappa^2})^2\Lambda_\gamma\right)^s.
       \end{equation}
  Combining this inequality with \eqref{eq:boundB2squared} would
  establish the boundedness of $B_2$.
  
  To prove \eqref{eq:boundB2squared}, we use the triangle inequality
  and $\Lambda_\gamma\leq1$ and estimate for $f\in\ghpk$
  \begin{multline}
    \label{eq:boundU1quad1}
    \|(F_0(V+\lambda U_1)+{b_\gamma\over\kappa^2})f\|=   \|\Lambda_\gamma(D_0-\gamma/r+\gamma/r-V-\lambda U_1-1+{b_\gamma\over\kappa^2})\Lambda_\gamma f\|\\
      \geq \|\Lambda_\gamma(D_\gamma-1+b_\gamma\kappa^{-2})\Lambda_\gamma f\| - (\gamma+|\lambda|\|rU_1\|_\infty)\||x|^{-1}\Lambda_\gamma f\|.
  \end{multline}
  Using that $\Lambda_\gamma$ and $D_\gamma-1+b_\gamma\kappa^{-2}$ commute and
  Lemma \ref{boundkineticenergy}, we estimate further
  \begin{multline}
    \label{eq:boundU1quad2}
   \|\Lambda_\gamma(D_\gamma-1+b_\gamma\kappa^{-2})\Lambda_\gamma f\|
      =\|(D_\gamma-1+b_\gamma\kappa^{-2})\Lambda_\gamma f\|\\
      \geq \|(D_0-1+b_\gamma\kappa^{-2})\Lambda_\gamma f\| - \gamma\||x|^{-1}\Lambda_\gamma f\|
      \geq \|(C_{\ell_\kappa}+b_\gamma\kappa^{-2})\Lambda_\gamma f\| - \gamma\||x|^{-1}\Lambda_\gamma f\|.
    \end{multline}
  Combining \eqref{eq:boundU1quad1} and \eqref{eq:boundU1quad2} with
  \eqref{eq:hardyang} and \eqref{eq:plancherelang}, we obtain
  \begin{align*}
    & \|\Lambda_\gamma(D_0(V+\lambda U_1)-1+b_\gamma\kappa^{-2})\Lambda_\gamma f\|\\
    & \quad \geq \left[1-\sqrt{2}d(2\gamma+\lambda\|rU_1\|_\infty)\left(b_\gamma^{-\frac12}\one_{\{b_\gamma\leq \kappa^2\}} + |\kappa|^{-1}\one_{\{b_\gamma\geq \kappa^2\}}\right)\right] \|(C_{\ell_\kappa}+b_\gamma\kappa^{-2})\Lambda_\gamma f\|.
  \end{align*}
  Choosing
  \begin{equation}
    \label{eq:gammacond}
    b_\gamma = 2\max\{a_\gamma,8d^2(2\gamma+\lambda_2\|rU_1\|_\infty)^2\}
    \quad\text{and}\quad
    K_\gamma = \left\lceil\sqrt{b_\gamma}\right\rceil
  \end{equation}
  with $a_\gamma$ as in Lemma \ref{boundA} shows
  \begin{equation}
    \label{eq:boundU1quad}
    (f_{0,\kappa}(V+\lambda U_1)+b_\gamma\kappa^{-2})^2
    \geq \frac14 \Lambda_\gamma(C_{\ell_\kappa}+b_\gamma\kappa^{-2})^2|_\ghpk
  \end{equation}
  for all $\nz\ni|\kappa|\geq K_\gamma$ and $|\lambda|<\lambda_2$,
  thereby establishing \eqref{eq:boundB2squared}.
  Using \eqref{eq:boundB2squared}, operator monotonicity of $x\mapsto x^s$
  for $s\in(0,1]$, and \eqref{eq:SDgen}, we eventually obtain
  \begin{align}
    \label{eq:boundB2}
    \Lambda_\gamma(C_{\ell_\kappa}+b_\gamma\kappa^{-2})^{2s}|_\ghpk
    \leq 4^s(f_{0,\kappa}(V+\lambda U_1)+b_\gamma\kappa^{-2})^{2s}
      \end{align}
  for all $|\kappa|\geq K_\gamma$ and $|\lambda|<\lambda_2$.
  This shows $\|B_2\|<2^s$.
   
  Now, we turn to $B_1$ and show 
  \begin{equation}
    \label{eq:boundB1}
       (f_{0,\kappa}(V+\lambda U_1)+b_\gamma\kappa^{-2})^{2s}\\
      \leq 2(f_{0,\kappa}(V+\lambda U_1)+b_\gamma\kappa^{-2}-\lambda \ppp U_2\ppp)^{2s}.
    \end{equation}
     By \cite[Lemma 15]{Franketal2019P}, which we recall in Lemma
  \ref{apriori}, estimate \eqref{eq:boundB1} holds, provided we can show
  \begin{equation}
    \label{eq:Neidhardt}
       \|\left|\lambda \Lambda_\gamma U_2\Lambda_\gamma\right|^{s}\left(D_0(V+\lambda U_1)|_{\ghpk}-1+{b_\gamma\over2\kappa^2}\right)^{-s'}\|
     \leq A_{s,s'}\left({b_\gamma\over2\kappa^2}\right)^{s-s'}
   \end{equation}
  for a certain constant $A_{s,s'}$ and some $1/2<s'<s$.
  To show \eqref{eq:Neidhardt}, we first use the Davis-Sherman inequality
  \eqref{eq:SD} and \eqref{eq:genSobolevDual} and obtain
  \begin{align*}
    &|\lambda \Lambda_\gamma U_2\Lambda_\gamma|^{2s}
     \leq |\lambda|^{2s} \Lambda_\gamma U_2^{2s}\Lambda_\gamma\\
    \leq &4^{s'}A_{s',b_\gamma} |\lambda|^{2s} |\kappa|^{-4(s-s')}\|U_2^{2s}\|_{\cK_{s',4(s-s')}}\Lambda_\gamma(C_{\ell_\kappa}+b_\gamma\kappa^{-2})^{2s'}\Lambda_\gamma
      \quad\text{in}\ \gh_{\kappa}.
  \end{align*}
  Combining this estimate with \eqref{eq:boundB2} with $s$ replaced
  by $s'$, i.e.,
  \begin{align*}
    \|(C_{\ell_\kappa}+b_\gamma\kappa^{-2})^{s'}\Lambda_\gamma\left(D_0(V+\lambda U_1)|_\ghpk-1+b_\gamma\kappa^{-2}/2\right)^{-s'}\| \leq 2^{s'}
  \end{align*}
  shows that the left side of \eqref{eq:Neidhardt} is bounded by
  $4^{s'}A_{s',b_\gamma}^{1/2}|\lambda|^s\|U_2^{2s}\|_{\cK_{s',4(s-s')}}^{1/2}
  |\kappa|^{-2(s-s')}$.
 
  Thus, there is a $\lambda_3>0$ such that \eqref{eq:Neidhardt} holds
  for all $|\lambda|<\lambda_3$ which shows $\|B\|\leq A_{s}$, uniformly
  in $\lambda$ and $\kappa$. 
    
  Combining the bounds on $\|A\|^2$, $\|B\|^2$, and $\|C\|_1$, we find
  for $|\lambda|<\min\{\lambda_2,\lambda_3\}$ and all $|\kappa|\geq K_\gamma$,
  $$    s_{\kappa,\lambda} 
  \leq A_{\gamma,s}\lambda \|U\|_{\cK_{s,0}} |\kappa|^{1-4s}$$
  what was claimed since $s>1/2$.
\end{proof}

\subsection{Proof of Theorem \ref{existencerhoh} on the existence of $\rho^H$}
We will now prove the pointwise bounds on $\rho_{\kappa}^H$ of Theorem
\ref{existencerhoh}.  The strategy of the proof is similar to the one
of Proposition \ref{dominantsigma}.

Let $d_\kappa$ denote the orthogonal projection onto the negative
spectral subspace of $F_\gamma$ in $\ghpk$.  Then
$$
\rho_{\kappa}^H(x) = \tr d_\kappa \delta^{(s)}_{|x|} = \tr ABCB^*A^*
$$
where $\delta^{(s)}_R$ is the delta sphere function with radius $R$,
i.e., $\delta^{(s)}_R(y):=\delta(|y|-R)/(4\pi R^2)$ and
\begin{equation}
  \begin{split}
  A:=&d_\kappa(f_{\gamma,\kappa}+\tilde a_\kappa)^s,\
  B:=(f_{\gamma,\kappa}+\tilde a_\kappa)^{-s}\Lambda_\gamma(C_{\ell_\kappa}+\tilde a_\kappa)^s,\\\
  C:=&(C_{\ell_\kappa}+\tilde a_\kappa)^{-s}\ \delta_{|x|}^{(s)} (C_{\ell_\kappa}+\tilde a_\kappa)^{-s}
  \end{split}
\end{equation}
with $\tilde a_\kappa:=a_{\gamma,C}\kappa^{-2}$, $a_{\gamma,C}$ as
defined in the beginning of the proof of Lemma \ref{boundA}.
Moreover, the parameter $s$ obeys $1/2<s<3/2-\sigma_\gamma$ and
$s\leq3/4$.

First, we have $\|A\|^2\leq a_{s,\gamma} |\kappa|^{-4s}$
by \eqref{eq:eigenvalue}.

Next,
$\tr C={4|\kappa|\over|x|^2}[(C_{\ell_\kappa}^{(r)}+\tilde
a_\kappa)^{-2s}(|x|,|x|) + (C_{|\kappa|-\theta(-\kappa)}^{(r)}+\tilde
a_\kappa)^{-2s}(|x|,|x|)]$.  Here, it is crucial to have $s>\frac12$,
since $\delta_r$ on $\rz_+$ is not form bounded with respect to
$(C_{\ell_\kappa}^{(r)})^{2s}$ for any $s\leq\frac12$.  The diagonal
was estimated in \cite[Lemma 26]{Franketal2019P} for
$s\in(\tfrac12,\tfrac34]$, namely
\begin{multline*}
  (C_{\ell_\kappa}^{(r)}+\tilde a_\kappa)^{-2s}(r,r) \\
  \leq A_{s,a_\gamma}\left[\left(\frac{r}{|\kappa|}\right)^{2s-1}\one_{\{r\leq |\kappa|\}} + \left(\frac{r}{|\kappa|}\right)^{4s-1}\one_{\{|\kappa| \leq r \leq \kappa^2\}}
  + |\kappa|^{4s-1}\one_{\{r \geq \kappa^2\}}\right].
\end{multline*}
Repeating this computation for $(C_{|\kappa|-\theta(-\kappa)}^{(r)}+\tilde a_\kappa)^{-2s}(r,r)$
shows that the same bound holds also in this case, since
$|\ell_\kappa-(|\kappa|-\theta(-\kappa))|=1$.

The uniform boundedness of $B$ in $|\kappa|$ was
shown in the proof of Proposition \ref{dominantsigma} for $\gamma<1$
and all $|\kappa|\geq K_\gamma$ where $K_\gamma$ is given in
\eqref{eq:gammacond}.

For $|\kappa|\leq K_\gamma$, the uniformity of estimates on $\|B\|$
with respect to $|\kappa|$ is not crucial, since only a fixed finite
number of angular momentum channels is involved.  For these $\kappa$,
we write
\begin{equation}
  B=(B_1\circ B_2\circ B_3)^*
\end{equation}
where
\begin{align*}
  B_1  := &(C_{\ell_\kappa}+\tilde a_\kappa)^s (p_\ell+\tilde a_\kappa)^{-s}, \
  B_2 := (p_\ell+\tilde a_\kappa)^{s}((f_{\gamma,\kappa}+1+\tilde a_\kappa))^{-s},\\
  B_3 := &(f_{\gamma,\kappa}+1+\tilde a_\kappa)^{s}((f_{\gamma,\kappa}+\tilde a_\kappa))^{-s}.
\end{align*}
Clearly, $\|B_1\|\leq A_{s,\gamma}$ in each channel.  By Corollary
\ref{hardydomcor}, respectively \eqref{eq:hardydomcorang}, for fixed
$\gamma<1$ and $s<\frac32-\sigma_\gamma$ if
$\gamma\geq\frac{\sqrt3}2$, we have $\|B_2\|\leq A_{s,\gamma}$ for all
$\kappa\in\zp$.  By \eqref{eq:eigenvalue}
\begin{equation}
   \|(f_{\gamma,\kappa}+\tilde a_\kappa+1)(f_{\gamma,\kappa}+\tilde a_\kappa)^{-1}\| \leq 1 + A \kappa^2 \leq A_{K_\gamma}.
\end{equation}
Thus, by operator monotonicity of $x\mapsto x^s$ with $s\in(0,1]$ and
the bounds on $B_1,B_2$, and $B_3$, we obtain
$ \|B\|\leq A_{s,\gamma}$. Combining the bounds on $A,B$, and $C$, we
obtain
\begin{multline*}
  \rho_{\kappa}^H(x)  \leq A_{s,\gamma} {|\kappa|^{1-4s}\over|x|^2}\\
  \times \left[\left(\frac{|x|}{|\kappa|}\right)^{2s-1}\one_{\{|x| \leq |\kappa|\}}+\left(\frac{|x|}{|\kappa|}\right)^{4s-1}\one_{\{|\kappa| \leq |x| \leq |\kappa|^2\}}
                       + |\kappa|^{4s-1}\one_{\{|x| \geq |\kappa|^2\}}\right].
\end{multline*}
In particular, the right side is summable for $s>1/2$ and one finally
obtains
\begin{align*}
  \rho^H(x)
  = \sum_{\kappa\in\zp}^\infty \rho_\kappa^H(x)
  \leq A_{s,\gamma} (|x|^{2s-3}\one_{\{|x|\leq1\}} + |x|^{-3/2}\one_{\{|x|>1\}}).
\end{align*}
Recalling the assumptions on $s$ concludes the proof of Theorem
\ref{existencerhoh}.\qed

\section{Proof of the strong Scott conjecture}
We are now in position to prove Theorem \ref{convfixedl}, i.e., the
strong Scott conjecture for fixed angular momentum, or to be more
accurate, for fixed spin-orbit coupling $\kappa$.

Since the statement of Theorem \ref{convfixedl} is linear with respect
to $U$, we can assume without loss of generality that $U$ is
non-negative and belongs either to
$r^{-1}L^\infty_c (\rz_+)$ or to $\cD_\gamma^{(0)}$.

Given a spherically symmetric potential $U$ define $U_c$ by
$U_c(x):=c^2U(cx)$. Furthermore, using
\eqref{eq:manyfurry} for $N=Z$ and fixing $\kappa\in\zp$, we introduce
the quadratic form
\begin{equation}
    \cE_{c,Z,\lambda,\kappa}:\bigwedge_{n=1}^Z\Lambda_{c,Z}(\cS(\rz^3:\cz^4))\rightarrow\cz,\
    \psi\mapsto \cE_{c,Z}[\psi]-\lambda \sum_{\nu=1}^Z(\psi, (\Pi_\kappa\circ U_c\circ\Pi_\kappa)_\nu\psi),
\end{equation}
if this form is defined and bounded from below.

If $U\in\cD_\gamma^{(0)}$ then $U^{2s}\in\cK_{s'}^{(0)}$ and thus
$(\Pi^+_{\kappa}U\Pi_{\kappa}^+)^{2s} \leq
a_{s,s',\gamma}(f_{\gamma,\kappa}+1)^{2s'} $ for $1/2<s'<s$ by the
proof of Proposition \ref{diffu2}. Thus, by Proposition \ref{diffgen}
and Lemma \ref{apriori}, $U$ is infinitesimally form bounded with
respect to $f_{\gamma,\kappa}$.  Hence, $\cE_{c,Z,\lambda,\kappa}$ is
defined and bounded from below.

If $U\in r^{-1}L^\infty_c ([0,\infty))$, the same follows
from Kato's inequality and Corollary \ref{hardydomcor} for all
$\lambda$ in an $Z$ independent open neighborhood of zero.

Obviously, we can rewrite the expectation of the one-particle
perturbation $\Pi^+_\kappa U_c\Pi^+_\kappa$ in the state $\Lambda$ in
terms of its ground state density (see \eqref{eq:furryangdensity})
$\rho_{\kappa}(x)$ in channel $\kappa$; we have
\begin{equation}
  \label{eq:4.2}
  \int_{\rz^3} \rho_\kappa(x)U_c(x)\,\dx
  = {1\over D\lambda}\sum_{d=1}^D(\cE_{c,Z}[\psi_d]-\cE_{c,Z,\lambda,\kappa}[\psi_d]).
\end{equation}
It obviously depends only superficially on $\lambda$.  To estimate
this from above we pick $\lambda>0$, use the upper bound on
$\cE_{c,Z}$ \cite{HandrekSiedentop2015} (Scott correction for the
Furry operator) and a lower bound on $\cE_{c,Z,\lambda,\kappa}$ by the
correlation inequality of Mancas et al \cite{Mancasetal2004}
(MMS). This reduces the problem to a one-particle Furry operator with
screened Coulomb potential given by the Thomas-Fermi density. This
one-particle problem can be treated by the methods developed in the
previous sections. The corresponding lower will be for free by
reversing the sign of $\lambda$.

We begin with the lower bound on $\cE_{c,Z,\lambda,\kappa}$ recalling a
special case of the MMS correlation inequality: we write
$\rho_Z^{\mathrm{TF}}$ for the minimizer of the Thomas-Fermi functional
for a neutral atom (Lieb and Simon \cite[Theorem
II.20]{LiebSimon1977}). Next we define a ball centered at $x$ with
radius $R_Z(x)$ defined by
$\int_{|x-y| \leq R_Z(x)}\rho_Z^{\mathrm{TF}}(y)\,\dy=\frac12$.  The
screening potential to be used is
$\chi_Z(x):=\int_{|x-y| \geq
  R_Z(x)}\rho_Z^{\mathrm{TF}}(y)|x-y|^{-1}\,\dy.$ Then,
\cite{Mancasetal2004}
\begin{equation}
  \label{MMS}
  \sum_{1\leq\nu<\mu\leq Z}|x_\nu-x_\mu|^{-1}
  \geq \sum_{\nu=1}^Z\chi_Z(x_\nu)-D[\rho_Z^{\mathrm{TF}}].
\end{equation}
This allows to eliminate all two-particle terms in $\cE_{c,Z,\lambda,\kappa}$.
\begin{lemma}
  \label{correlation}
  For sufficiently small $\lambda$ and all $L\in\nz$
  \begin{multline}
    \label{Korrelation}
    D^{-1}\sum_{d=1}^D\cE_{c,Z,\lambda,\kappa}[\psi_d]  \\
    \geq -\sum_{|{\kappa'}|<L}\tr_{\kappa'}(F_{c,Z}(\lambda\Pi_{\kappa}
    U_c\Pi_{\kappa})_- - \sum_{Z/2\geq |{\kappa'}|\geq L}\tr_{\kappa'} F_{c,Z}(-\chi_Z+\lambda \Pi_\kappa
    U_c\Pi_\kappa)_- - D[\rho_Z^{\mathrm{TF}}]
  \end{multline}
\end{lemma}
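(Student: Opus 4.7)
The plan is to combine the MMS correlation inequality \eqref{MMS} with the standard fermionic one-body variational bound and the angular momentum decomposition. First I would apply MMS to each ground state $\psi_d$: this replaces the two-body repulsion by the one-body screening $\sum_\nu \chi_Z(x_\nu) - D[\rho_Z^{\mathrm{TF}}]$, and after absorbing $-\chi_Z$ into the Dirac part together with the perturbation $-\lambda \Pi_\kappa U_c \Pi_\kappa$, the reduced one-body Hamiltonian acting on each electron coordinate is precisely $h := F_{c,Z}(-\chi_Z + \lambda \Pi_\kappa U_c \Pi_\kappa)$ on $\gH_{c,Z}$. The infinitesimal form-boundedness of $\chi_Z$ (via Hardy's inequality and Corollary \ref{hardydomcor}) and of $\lambda \Pi_\kappa U_c \Pi_\kappa$ (by the discussion preceding this lemma) makes $h$ well-defined and bounded from below for sufficiently small $|\lambda|$.

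Since each $\psi_d$ is antisymmetric with $N=Z$ particles, the one-body density matrix $\gamma_{\psi_d}$ satisfies $0 \leq \gamma_{\psi_d} \leq 1$ and $\tr \gamma_{\psi_d}=Z$, whence
\begin{equation*}
  D^{-1}\sum_d \langle \psi_d, \textstyle\sum_\nu h_\nu \psi_d\rangle
  \geq \min_{0 \leq \gamma \leq 1,\, \tr \gamma = Z} \tr(h\gamma).
\end{equation*}
Crucially, $h$ commutes with every $\Pi_{\kappa'}$ because $D_{c,Z}$ is rotation invariant, $\chi_Z$ is radial, and $\Pi_\kappa U_c \Pi_\kappa$ is itself of channel $\kappa$. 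Therefore the minimizing $\gamma$ is block diagonal in the channel decomposition, and the minimum is attained by filling the globally lowest eigenvalues of $h$ up to a Fermi level $\mu_F \leq 0$.

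I would then split the sum over $\kappa'$ into the three ranges $|\kappa'|<L$, $L\leq |\kappa'| \leq Z/2$, and $|\kappa'|>Z/2$. For $|\kappa'|<L$ I drop the screening: since $\chi_Z \geq 0$, the operator inequality $F_{c,Z}(\lambda \Pi_\kappa U_c \Pi_\kappa) \leq h$ yields $h_- \leq F_{c,Z}(\lambda \Pi_\kappa U_c \Pi_\kappa)_-$, hence $-\tr_{\kappa'} h_- \geq -\tr_{\kappa'} F_{c,Z}(\lambda \Pi_\kappa U_c \Pi_\kappa)_-$, so replacing the former by the latter yields a weaker but still valid lower bound. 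For $L \leq |\kappa'| \leq Z/2$ the term $-\tr_{\kappa'} h_-$ is kept unchanged. The bound on $\min_\gamma \tr(h\gamma)$ coming from filling only the strictly negative part in each of these channels is then weaker than $-\tr_{\kappa'} h_-$ summed over the occupied channels, which is exactly the right-hand side of \eqref{Korrelation}.

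The hard part is verifying that channels with $|\kappa'|>Z/2$ do not contribute to the variational minimum. Sommerfeld's formula \eqref{eq:eigenvalue}, rescaled to $F_{c,Z}=c^2 F_\gamma$, implies that the lowest eigenvalue of $F_{c,Z}$ in channel $\kappa'$ is of order $-Z^2/|\kappa'|^2$, hence bounded below by a universal constant once $|\kappa'|>Z/2$. A standard counting of hydrogenic shells shows, on the other hand, that the Fermi level $\mu_F$ tends to $-\infty$ like $-Z^{4/3}$. Consequently, for $Z$ large, $\mu_F < \inf \sigma(h|_{\Pi_{\kappa'}^+})$ for every $|\kappa'|>Z/2$, so the optimal $\gamma$ places no mass in those channels. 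Making this robust under the addition of $\chi_Z$ and of the small perturbation $\lambda \Pi_\kappa U_c \Pi_\kappa$ is the principal technical point; it is handled by repeating the angular-momentum-barrier analysis of Lemmata \ref{boundkineticenergy} and \ref{boundA} in each such channel, using Hardy's inequality to control both $\chi_Z$ and $U_c$ against the kinetic term.
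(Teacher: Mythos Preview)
Your approach coincides with the paper's: apply the MMS inequality \eqref{MMS}, pass to a one-body problem, decompose into channels $\kappa'$, drop the screening $\chi_Z\geq0$ in the low channels $|\kappa'|<L$, and argue that channels with $|\kappa'|>Z/2$ are not occupied by the variational minimizer. The only structural difference is the order of operations: the paper drops $\chi_Z$ already at the level of the \emph{actual} one-particle density matrix $\gamma_\Lambda$ (using merely $\tr_{\kappa'}(\gamma_\Lambda\chi_Z)\geq0$) and only \emph{afterwards} minimizes over $0\leq\gamma\leq1$, $\tr\gamma=Z$; you minimize first with the full $h$ and drop $\chi_Z$ afterwards. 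Both are legitimate.

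There is, however, a gap in your high-channel argument. You compute the Fermi level $\mu_F\sim -Z^{4/3}$ by ``standard counting of hydrogenic shells'', i.e.\ for the unscreened operator $F_{c,Z}$. But the minimization is for $h=F_{c,Z}+\chi_Z-\lambda\Pi_\kappa U_c\Pi_\kappa$, and since $\chi_Z\geq0$ \emph{raises} eigenvalues, the Fermi level of $h$ could in principle sit much closer to zero; the hydrogenic count gives only a lower bound on $\mu_F(h)$, not the upper bound you need. Your proposed remedy---repeating the barrier analysis of Lemmata~\ref{boundkineticenergy} and~\ref{boundA}---controls only $\inf\sigma(h|_{\Pi_{\kappa'}^+})$ for large $|\kappa'|$; it says nothing about how many eigenvalues of $h$ in \emph{low} channels lie below that infimum. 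The paper's ordering is precisely what closes this gap: once $\chi_Z$ has been removed in the channels $|\kappa'|<L$, those channels carry the unscreened $F_{c,Z}(\lambda\Pi_\kappa U_c\Pi_\kappa)$, whose eigenvalues are essentially Sommerfeld's, and one can count directly that they already supply at least $Z$ eigenvalues below the constant $-c^2 a_\gamma/(Z/2)^2$ that bounds the high-channel spectrum from below. This is the content of the paper's (terse) sentence invoking monotonicity in $|\kappa'|$ via Lemma~\ref{boundA} together with $\tr\gamma=Z$.
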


\begin{proof}
  By MMS, using the one-particle density matrix $\gamma_\Lambda$ of
  $\Lambda$, partial wave analysis and $\chi_Z\geq0$, we have
  \begin{multline}
    D^{-1}\sum_{d=1}^D\cE_{c,Z,\lambda,\kappa}[\psi_d]  \\
    \geq \sum_{|{\kappa'}|<L}\tr_{\kappa'}[\gamma_\Lambda
    F_{c,Z}(\lambda\Pi_\kappa U_c\Pi_\kappa)] + \sum_{|{\kappa'}|\geq
      L}\tr_{\kappa'}[\gamma_\Lambda F_{c,Z}(-\chi_Z+\lambda \Pi_\kappa
    U_c\Pi_\kappa)] - D[\rho_Z^{\mathrm{TF}}]
  \end{multline}
  for any $L\in\nz$. Since the energy is increasing in $|\kappa'|$
  (see Lemma \ref{boundA}) and $\tr\gamma_\Lambda=Z$ we can minimize in
  the one-particle density matrix under this constraint. The resulting
  summands will surely vanish, if $|{\kappa'}|>Z$, i.e., we can cut
  off the series at $Z$. Dropping the requirement of $Z$ particles
  only gives the wanted result.
\end{proof}

Next we recall ground state energy is
$$
D^{-1}\sum_{d=1}^D\cE_{c,Z}[\psi_d] = E^{\mathrm{TF}}(Z) + \left(\frac12-s(\gamma)\right)Z^2+\mathcal{O}(Z^{47/24})
\quad \text{as}\ Z\to\infty
$$
(Handrek and Siedentop \cite[Theorem 1]{HandrekSiedentop2015}) with the finite spectral shift
\begin{equation}
  \label{eq:spectralshift}
  s(\gamma)
  =\gamma^{-2}\sum\nolimits_{{\kappa'}\in\zp}[\tr_{\kappa'} (F_\gamma)_- - |{\kappa'}|\sum\nolimits_{n\in\nz}\gamma^2(n+\ell_{\kappa'})^{-2}].
\end{equation}
In fact, with $L:=Z^{1/9}$ their proof gives the stronger chain of inequalities 
\begin{equation}
  \label{scottoperator}
  \begin{split}
  &-\const Z^{47/24} +
  E^{\mathrm{TF}}(Z)+\left(\tfrac12-s(\gamma)\right)Z^2\\
  \leq &-\sum_{|{\kappa'}|<L}\tr_{\kappa'} (F_{c,Z})_- - \sum_{Z/2\geq
    |{\kappa'}|\geq L}\tr_{\kappa'} F_{c,Z}(-\chi_Z)_- -
  D[\rho_Z^{\mathrm{TF}}]\leq D^{-1}\sum_{d=1}^D\cE_{c,Z}[\psi_d] \\
  \leq &E^{\mathrm{TF}}(Z)+\left(\tfrac12-s(\gamma)\right)Z^2+\const Z^{47/24}
  \end{split}
  \end{equation}
  implying
  \begin{multline}
    \label{wichtig}
    D^{-1}\sum_{d=1}^D\cE_{c,Z}[\psi_d]\\
    = -\sum_{|{\kappa'}|<Z^\frac19}\tr_{\kappa'} (F_{c,Z})_- - \sum_{Z/2\geq
      |{\kappa'}|\geq Z^\frac19}\tr_{\kappa'} F_{c,Z}(-\chi_Z)_- -
    D[\rho_Z^{\mathrm{TF}}] + O(Z^\frac{47}{24}).
  \end{multline}

  \begin{proof}[Proof of Theorem \ref{convfixedl}]
    We divide \eqref{eq:4.2} by $c^2$, use the bounds \eqref{wichtig},
    rescale the bounds $x\to x/c$, and use \eqref{Korrelation}. We get
    for positive $\lambda$
    \begin{multline}
      \label{eq:reduced}
      \int_{\rz^3}c^{-3}\rho_\kappa(x/c)U(x)\rd x
      =\frac1{c^2D}\sum_{d=1}^D(\cE_{c,Z}[\psi_d]-\cE_{c,Z,\lambda,\kappa}[\psi_d])\\
      \leq \frac1\lambda
      \begin{cases}
        \tr[f_{\gamma,\kappa}(\lambda U)_--f_{\gamma,\kappa}(0)_-] &|\kappa|<Z^\frac19\\
        \tr[f_{\gamma,\kappa}(-c^{-2}\chi_Z(\cdot/c)+\lambda U)-f_{\gamma,\kappa}(-c^{-2}\chi_Z(\cdot/c))_-]&
        |\kappa|\geq Z^\frac19)
      \end{cases}
      + \const Z^{-\frac1{24}}
    \end{multline}
    where we use that $c=\gamma^{-1}Z$. Taking $c$ to $\infty$ gives
    \begin{equation}
      \label{oben}
      \limsup_{c\to\infty}\int_{\rz^3}c^{-3}\rho_\kappa(x/c)U(x)\rd x
      \leq {\tr f_{\gamma,\kappa}(\lambda U)_--\tr f_{\gamma,\kappa}(0)_-\over\lambda}. 
    \end{equation}
      Taking $\lambda<0$ gives the reverse inequality
      \begin{equation}
      \label{unten}
      \liminf_{c\to\infty}\int_{\rz^3}c^{-3}\rho_\kappa(x/c)U(x)\rd x
      \geq {\tr f_{\gamma,\kappa}(\lambda U)_--\tr f_{\gamma,\kappa}(0)_-\over\lambda}. 
    \end{equation}
    By Propositions \ref{diffu2} and \ref{diffu1}, the right sides of
    \eqref{oben} and \eqref{unten} tend to
    $\int_{\rz^3} \rho_{\kappa}^H(x)U(|x|)\,\dx$ thus yielding the
    existence of the limit and its limit quod erat demonstrandum.
\end{proof}

\begin{proof}[Proof of Theorem \ref{convalll}] The proof follows from
  the previous one by summing over $\kappa$ provided we can
  interchange the summation and the limits $Z\to\infty$ and
  $\lambda\to0$. This, however is secured by Proposition
  \ref{dominantsigma} which allows to apply the Weierstra\ss\
  criterion (Lebesgue dominated converge).
\end{proof}
Note that given Proposition \ref{dominantsigma} the proof of Theorem
\ref{convalll} is analogous to \cite[Theorem 2]{Iantchenkoetal1996}
(see also Frank et al \cite[Theorem 2]{Franketal2019P}).

\appendix

\section{Partial Wave Analysis\label{Anhang1}}
\label{a:partialwaves}
We collect some notations and known facts about the partial wave
analysis of Dirac operators (see, e.g., Evans et al
\cite{Evansetal1996}, Balinsky and Evans \cite[Section
2.1]{BalinskyEvans2011}, and Thaller \cite[Sections
4.6.3-4.6.5]{Thaller1992}).

Let $Y_{\ell,m}$ be the spherical harmonics on the unit sphere
$\bbS^2$ obeying the normalization condition
$\int_{\bbS^2}|Y_{\ell,m}|^2\,\domega=1$ where $\domega$ is the usual
surface measure on $\bbS^2$.  If $|m|>\ell$, we set
$Y_{\ell,m}\equiv0$.  We begin by observing that those of the
spherical spinors
\begin{equation}\label{2.6} \Omega_{\ell,m,s}(\omega) :=
  \begin{pmatrix} 2s\sqrt{\frac{\ell+\frac12+2sm}{2\ell+1}}
    Y_{\ell,m-\frac12}(\omega)\\ \sqrt{\frac{\ell+\frac12-2sm}{2\ell+1}}
    Y_{\ell,m+\frac12}(\omega)
  \end{pmatrix} 
\end{equation}
with $\ell=0,1,2,...$ and $m=-\ell-\frac12,...,\ell+\frac12$, that do not
vanish, form an orthonormal basis of $L^2(\mathbb{S}^2:\cz^2)$
(see, e.g., Evans et al \cite[Equation (7)]{Evansetal1996}).

Moreover, they are joint eigenfunctions of $L^2$, $J^2$ ($J=L+S$ being
the total angular momentum), and $J_3$ with respective eigenvalues
$\ell(\ell+1)$, $(\ell+s)(\ell+s+1)$, and $m$.

Introducing the spin-orbit operator $K=\beta(J^2-L^2+1/4)$, there is
an orthonormal basis of eigenvectors $\Phi^\sigma_{\kappa,m}$ of
$L^2(\bbS^2:\cz^4)$ such that
$J^2\Phi^\sigma_{\kappa,m}=j_\kappa(j_\kappa+1)\Phi^\sigma_{\kappa,m}$,
$J_3\Phi^\sigma_{\kappa,m}=m\Phi^\sigma_{\kappa,m}$, and
$K\Phi^\sigma_{\kappa,m}=\kappa\Phi^\sigma_{\kappa,m}$ with
$j_\kappa:=|\kappa|-1/2$ introduced in \eqref{l},
$m\in\{-j_\kappa,...,j_\kappa\}$, $\kappa\in\zp$, and
$\sigma\in\{+,-\}$. A standard choice is
\begin{equation}\label{fi}
  \Phi^+_{\kappa,m} :=  \left(\begin{array}{c}
          \ri\sgn(\kappa)\Omega_{\ell_\kappa,m,\frac12\sgn(\kappa)}\\
          0
                            \end{array}
                          \right), \
                          \Phi^-_{\kappa,m}:=\left(\begin{array}{c}
                              0\\
                              -\sgn(\kappa)\Omega_{\ell_\kappa+\sgn(\kappa),m,-\frac12\sgn(\kappa)}
                            \end{array}
                          \right).    
\end{equation}
Using these spinors, we introduce
\begin{align}
  \label{a3}
  \gh_{\kappa,m} :=& \mathrm{span}\{x\mapsto \tfrac{f^+(|x|)}{|x|} \Phi_{\kappa,m}^+(\tfrac x{|x|}) + \tfrac{f^-(|x|)}{|x|}\Phi_{\kappa,m}^-(\tfrac x{|x|}):f^+,f^-\in L^2(\rz_+)\},\\
     \label{a2}
  \gh_\kappa:=& \bigoplus_{m=-j_\kappa}^{j_\kappa}\gh_{\kappa,m},\ \gh_\kappa^+:=\Lambda_\gamma\gh_\kappa  
\end{align}
These spaces form an orthogonal decomposition of $L^2(\rz^3:\cz^4)$ and
$\Lambda_\gamma(L^2(\rz^3:\cz^4))$.

We write $\Pi_\kappa$, $\Pi_{\kappa,m}$, and $\Pi_\kappa^+$ for the
orthogonal projection onto $\gh_{\kappa}$, $\gh_{\kappa,m}$, and
$\ghpk$. If we write -- in abuse of notation --
$\Phi_{\kappa,m}^\pm(\omega,\tau)$ for the $\tau$-th component of
$\Phi_{\kappa,m}^\pm(\omega)$, we can write the action of $\Pi_\kappa$ on
$g\in L^2(\rz^3\otimes\{1,...,4\})$ more explicitly as
\begin{equation}
  \label{a1}
  (\Pi_\kappa g)(r\omega,\tau)
  = \sum_{\sigma\in\{+,-\}}\sum_{m=-j_\kappa}^{j_\kappa} \Phi^\sigma_{\kappa,m}(\omega,\tau)
    \sum_{\tau'=1}^4\int_{\mathbb{S}^2}\rd \omega'\, \overline{ \Phi^\sigma_{\kappa,m}(\omega',\tau')}g(r\omega',\tau')
\end{equation}
writing $x=r\omega$ with $r:=|x|$ and $\omega:=x/r$.

Note that Dirac operators with spherical potentials leave the space
$\gh_{\kappa,m}$ invariant which can be seen explicitly in
\eqref{eq:radialdirac}.  Moreover, their eigenvalues depend on
$\kappa$ only.

Furthermore note, that $\Pi^+_\kappa$ is also an orthogonal
projection, since $\Pi_\kappa$ commutes with $\Lambda_\gamma$ (see
\cite[Equation (27)]{HandrekSiedentop2015}).

\section{Test function spaces}
\label{s:testfunctions}
%%%%%%%%%%%%%%%%%%%%%%%
% Test functions etc. %
%%%%%%%%%%%%%%%%%%%%%%%
The test functions for which we prove the strong Scott conjecture
belong to the function spaces $\cK_{s}^{(0)}$ and $\cK_{s,\delta}$ which
were already introduced in Frank et al \cite{Franketal2019P} and are
defined as
\begin{align}
  \label{eq:defsigmadeltanod}
  \begin{split}
    \cK_{s}^{(0)} & :=\{W\in L^1_{\mathrm{loc}}(\rz_+):\|W\|_{\cK_{s}^{(0)}}<\infty\}\\
    \|W\|_{\cK_{s}^{(0)}} & := \int_0^1 r^{2s-1}|W(r)|\,\dr + \int_1^\infty |W(r)|\,\dr
  \end{split}
\end{align}
and
\begin{align}
  \label{eq:defsigmadelta}
  \begin{split}
    \cK_{s,\delta} & :=\{W\in L^1_{\mathrm{loc}}(\rz_+):\|W\|_{\cK_{s,\delta}}<\infty\}\\
    \|W\|_{\cK_{s,\delta}} & :=\sup_{R\geq1} R^\delta\left[\int_0^R \left(\frac{r}{R}\right)^{2s-1}|W(r)|\,\dr + \int_R^{R^2}\left(\frac{r}{R}\right)^{4s-1} |W(r)|\,\dr\right.\\
    &\qquad\qquad\qquad \left.+ R^{4s-1}\int_{R^2}^\infty |W(r)|\,\dr\right]
  \end{split}
\end{align}
for $s\geq1/2$ and $\delta\in[0,2s-1]$.
Here, $L^p_{\mathrm{loc}}(\rz_+)$ denotes the space of all functions that
belong to $L^p$ on any compact subset of $\rz_+$.
We note some basic inclusion properties which already occurred
implicitly in \cite{Franketal2019P}.
\begin{lemma}
  \label{inclusions}
  Let $1/2\leq s'<s$ and $\delta\in[0,2s-1]$.
  Then the spaces $\cK_{s}^{(0)}$ and $\cK_{s,\delta}$
  obey the following inclusion properties.
  \begin{enumerate}
  \item One has $\cK_{s'}^{(0)}\subseteq\cK_{s}^{(0)}$.
    
  \item One has $\cK_{s,\delta}\subseteq\cK_{s,0}\subseteq\cK_{s}^{(0)}$.

  \item One has $\cK_{s',4(s-s')} \subseteq \cK_{s,0}$, if additionally
    $1/2<2s/3+1/6 \leq s'<s$.
  \end{enumerate}
\end{lemma}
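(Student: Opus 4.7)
All three inclusions reduce to elementary pointwise comparison of the weights appearing in the defining norms; no analytic subtlety is needed.

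For (1) I would use that $r^{2s-1}\leq r^{2s'-1}$ on $(0,1]$ because $2s-1>2s'-1$, so $\int_0^1 r^{2s-1}|W|\,\dr\leq \int_0^1 r^{2s'-1}|W|\,\dr$, while the tail integral $\int_1^\infty|W|\,\dr$ is the same in both norms. Hence $\|W\|_{\cK_s^{(0)}}\leq \|W\|_{\cK_{s'}^{(0)}}$. For the first half of (2), note that $\delta\geq 0$ and $R\geq 1$ force $R^\delta\geq 1$, so dropping the $R^\delta$ prefactor inside the supremum only decreases the norm, giving $\cK_{s,\delta}\subseteq\cK_{s,0}$. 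For the second half of (2), evaluating the supremum defining $\|W\|_{\cK_{s,0}}$ at the single value $R=1$ is enough: the middle integral $\int_R^{R^2}=\int_1^1$ collapses to zero, while the first and third integrals reproduce precisely $\int_0^1 r^{2s-1}|W|\,\dr+\int_1^\infty |W|\,\dr=\|W\|_{\cK_s^{(0)}}$.

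The only step with any content is (3). My first move is to record that the arithmetic hypothesis $2s/3+1/6\leq s'<s$ is just $0\leq 4(s-s')\leq 2s'-1$, which is exactly the admissibility condition $\delta\in[0,2s'-1]$ needed for $\cK_{s',\delta}$ with $\delta:=4(s-s')$ to be one of the spaces defined in \eqref{eq:defsigmadelta}. Then I would compare the three integrands of $\|W\|_{\cK_{s,0}}$ with the three integrands of $R^{4(s-s')}\|W\|_{\cK_{s',4(s-s')}}^{(R)}$ (the inner bracket at scale $R\geq 1$). On $[0,R]$, writing $(r/R)^{2s-1}=(r/R)^{2(s-s')}(r/R)^{2s'-1}$ and using $r/R\leq 1$ together with $R\geq 1$ gives $(r/R)^{2s-1}\leq R^{4(s-s')}(r/R)^{2s'-1}$. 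On $[R,R^2]$, the analogous identity $(r/R)^{4s-1}=(r/R)^{4(s-s')}(r/R)^{4s'-1}$ together with $r/R\leq R$ yields $(r/R)^{4s-1}\leq R^{4(s-s')}(r/R)^{4s'-1}$. On $[R^2,\infty)$ the two weights coincide exactly: $R^{4s-1}=R^{4(s-s')}R^{4s'-1}$. Taking the supremum over $R\geq 1$ produces $\|W\|_{\cK_{s,0}}\leq\|W\|_{\cK_{s',4(s-s')}}$.

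The main (rather mild) obstacle is simply to check that the admissibility range $\delta\in[0,2s'-1]$ and the stated arithmetic condition on $s'$ match up, so that the inclusion is being asserted between well-defined spaces; the rest is monotonicity of power weights on the three dyadic regions $r\leq R$, $R\leq r\leq R^2$, and $r\geq R^2$ that are built into the definition of $\cK_{s,\delta}$.
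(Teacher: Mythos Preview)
Your argument is correct in all three parts. The paper does not actually supply a proof of this lemma; it is stated as a collection of ``basic inclusion properties'' and left to the reader, so there is nothing to compare against. Your elementary pointwise comparison of the power weights on the three regions $r\leq R$, $R\leq r\leq R^2$, $r\geq R^2$, together with the observation that the arithmetic hypothesis $2s/3+1/6\leq s'$ is equivalent to the admissibility bound $4(s-s')\leq 2s'-1$, is exactly the intended verification.
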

This means that functions must be smoother at the origin the smaller
$s$ is. Moreover, functions belonging to $\cK_{s,\delta}$ must decay
faster at infinity than those belonging to $\cK_{s}^{(0)}$.

To give a digestible representation of our convergence results,
we introduce the test function spaces
\begin{align}
  \label{eq:deftestnod0}
  \begin{split}
    \cD_\gamma^{(0)} =
    \begin{cases}
      & \{ W\in\cK_{s}^{(0)}:\  |W|^{2s}\in\cK_{s'}^{(0)} \ \text{for some}\ 1/2<s'<s\leq 1 \} \\
      & \qquad\qquad\qquad \text{if}\ 0<\gamma<\sqrt3/2 .
      \\
      & \{ W\in\cK_{s}^{(0)}:\  |W|^{2s}\in\cK_{s'}^{(0)}  \ \text{for some}\ 1/2<s'<s<3/2-\sigma_\gamma \} \\
      & \qquad\qquad\qquad \text{if}\ \sqrt3/2\leq\gamma<1 .
  \end{cases}
  \end{split}
\end{align}
and
\begin{align}
  \label{eq:deftest0}
  \begin{split}
    \cD & = \{W \in\cK_{s,0}\,:\  |W|^{2s}\in\cK_{s',4(s-s')}\ \text{for some}\ 1/2<2s/3+1/6\leq s'<s\leq 3/4 \}.
  \end{split}
\end{align}
We refer to \cite{Franketal2019P} for an alternative and more convenient
representation of the space $\cD_\gamma^{(0)}$ as well as the norm
$\|W\|_{\cK_{s,\delta}}$ (see their Formulae (34) and (43)).
For instance, $r^{-1}L^\infty_c$ functions belong both to
$\cK_{s}^{(0)}$ and $\cK_{s,\delta}$ for $s>1/2$ and $\delta\in[0,2s-1]$.
Moreover, one easily verifies
$L^1(\rz_+)\subseteq \cK_{s}^{(0)}$ and
$L^1(\rz_+,r^\delta\,\dr)\cap L^1(\rz_+,r^{4s-1+\delta}\,\dr)\subseteq\cK_{s,\delta}$
for $s\geq1/2$ and $\delta\in[0,2s-1]$.

\section{Auxiliary tools}

The following lemma, which we quote from \cite[Lemma 15]{Franketal2019P}
was inspired by Neidhardt and Zagrebnov \cite[Lemma 2.2]{NeidhardtZagrebnov1999}.

\begin{lemma}
  \label{apriori}
  Let $A$ be a self-adjoint operator with $\inf\sigma(A)>0$ and let $B$ be an
  operator which satisfies $B\geq 0$ or $B\leq 0$. Assume that for some
  numbers $\max\{s',1/2\}<s<1$ one has
  \begin{align*}
    \||B|^s A^{-s'}\|<\infty .
  \end{align*}
  Then $B$ is form bounded with respect to $A$ with relative bound zero and,
  if $\||B|^s A^{-s'}\| \leq A_{s,s'}M^{s-s'}$ for some constant $A_{s,s'}$
  depending only on $s$ and $s'$,
  $$
  \frac12 (A+M)^{2s} \leq (A+B+M)^{2s} \leq 2(A+M)^{2s} .
  $$
\end{lemma}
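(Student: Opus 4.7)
The plan is to leverage the hypothesis to first extract the form inequality $|B|^{2s} \leq C^2 A^{2s'}$ (simply by unfolding the definition of the operator norm $C:=\||B|^s A^{-s'}\|$), and then to convert this into the two claims via operator monotonicity and convexity.

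First I would establish the relative form bound with relative bound zero. Taking the $1/(2s)$-th root, which is operator monotone since $1/(2s) \in (1/2,1)$ by L\"owner-Heinz, gives $|B| \leq C^{1/s} A^{s'/s}$. Because $s'/s < 1$, the scalar Young inequality $x^{s'/s} \leq \varepsilon x + c_\varepsilon$ for any $\varepsilon>0$ lifts via the functional calculus of $A$ to $A^{s'/s} \leq \varepsilon A + c_\varepsilon$, so that $\pm B \leq |B| \leq C^{1/s}(\varepsilon A + c_\varepsilon)$, which is form boundedness with relative bound zero.

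For the quantitative claim, substituting $C \leq A_{s,s'} M^{s-s'}$ into the form inequality and using $A \leq T := A+M$ together with the scalar inequality $M^{2(s-s')} t^{2s'} \leq t^{2s}$ valid on $[M,\infty)$ (lifted by the functional calculus of $A$, which commutes with the scalar $M$) yields $|B|^{2s} \leq \delta^2 T^{2s}$ with $\delta := A_{s,s'}$. Since $x \mapsto x^{2s}$ is operator convex on $[0,\infty)$ for $2s \in [1,2]$, midpoint convexity applied to $T+B$ gives
\begin{equation*}
  (T+B)^{2s} \leq 2^{2s-1}\left(T^{2s} + |B|^{2s}\right) \leq 2^{2s-1}(1+\delta^2)\, T^{2s}.
\end{equation*}
Because $s<1$ we have $2^{2s-1} < 2$, so choosing $A_{s,s'}$ small enough (depending only on $s$) produces the upper bound $(T+B)^{2s} \leq 2 T^{2s}$. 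The lower bound $(T+B)^{2s} \geq \tfrac12 T^{2s}$ is obtained by a dual convex-combination argument: write $T$ as a convex combination $T = (1-\lambda)(T+B) + \lambda W$ with $W := T - B/\delta^{1/s}$ and $\lambda \in (0,1)$ chosen so that $W \geq 0$; operator convexity applied to $T^{2s}$ then isolates $(T+B)^{2s}$, and a further convexity estimate bounds $W^{2s}$ via $|B/\delta^{1/s}|^{2s} \leq T^{2s}$. Rearranging and taking $A_{s,s'}$ small enough yields the claim.

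The main obstacle is that $x \mapsto x^{2s}$ for $2s > 1$ is \emph{not} operator monotone---a bound like $|B| \leq \eta T$ is insufficient to control $(T+B)^{2s}$ in terms of $T^{2s}$. The strengthened form inequality $|B|^{2s} \leq \delta^2 T^{2s}$ (extracted directly from the operator-norm hypothesis) together with operator convexity is what compensates. The strict inequality $s<1$ is essential: it is precisely what guarantees $2^{2s-1}<2$, enabling the factor of $2$ in the conclusion. This is essentially the strategy of Neidhardt and Zagrebnov in the perturbative analysis of fractional powers of self-adjoint operators.
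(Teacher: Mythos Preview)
The paper does not prove this lemma; it is quoted verbatim from \cite[Lemma 15]{Franketal2019P} (see the sentence introducing Lemma~\ref{apriori} in the appendix). So there is no ``paper's own proof'' to compare against, and your proposal must stand on its own.

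Your derivation of $|B|^{2s}\leq\delta^2 T^{2s}$ with $T=A+M$, the relative-bound-zero argument, and the upper bound for $B\geq0$ via operator convexity are all correct. The lower bound for $B\leq0$ also follows by the same convexity trick applied to $T=(T+B)+|B|$ (two non-negative summands), though you do not state it this way.

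The gap is in your ``dual convex-combination argument'' for the remaining cases, say the lower bound when $B\geq0$. You write $T=(1-\lambda)(T+B)+\lambda W$ with $W=T-B/\delta^{1/s}$ and then claim ``a further convexity estimate bounds $W^{2s}$ via $|B/\delta^{1/s}|^{2s}\leq T^{2s}$''. But convexity applied to the decomposition $T=W+B/\delta^{1/s}$ (both summands non-negative) gives a \emph{lower} bound on $W^{2s}$, not the upper bound you need. An upper bound on $(T-B/\delta^{1/s})^{2s}$ in terms of $T^{2s}$ is exactly the upper-bound problem for a non-positive perturbation with parameter $\delta'=1$---the very case you have not yet handled---so the argument is circular. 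The same circularity blocks the direct convexity approach to the upper bound when $B\leq0$: writing $(T+B)^{2s}\leq 2^{2s-1}(T^{2s}+|B|^{2s})$ is not justified by operator convexity, since $x\mapsto x^{2s}$ is undefined on $(-\infty,0)$ for non-integer $2s$, and you cannot simply replace $B^{2s}$ by $|B|^{2s}$. The paper itself runs into this obstruction in the proof of Lemma~\ref{boundpertcoulomb} (the case $\lambda<0$, equations \eqref{eq:convexity2}--\eqref{40c}) and closes it only by invoking problem-specific bounds (Hardy's inequality and Corollary~\ref{hardydomcor}), not by a second generic convexity step. For the abstract statement you will need a different tool---likely the integral representation of fractional powers underlying the Neidhardt--Zagrebnov lemma you cite.
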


\section*{Acknowledgments}

This research was partly carried out at the Institute for Mathematical
Sciences at the National University of Singapore during the program
\textit{Density Functionals for Many-Particle Systems: Mathematical
  Theory and Physical Applications of Effective Equations}.  We are
grateful to the IMS and the Julian Schwinger foundation for their
hospitality and financial support. Special thanks go to Berthold-Georg
Englert who was the heart of the program.

Partial financial support by the Deutsche Forschungsgemeinschaft
(DFG, German Research Foundation) through grant SI 348/15-1 (H.S.) and
through Germany's Excellence Strategy – EXC-2111 – 390814868 (H.S.)
is gratefully acknowledged.

%\bibliographystyle{plain}
%\bibliography{coulomb}
\def\cprime{$'$}

\end{document}